\newcommand{\ra}{12.5mm}
\tikzset{
  avertex/.style={regular polygon,regular polygon sides=3,draw,inner sep=1.6pt,fill=white},
  bvertex/.style={regular polygon,regular polygon sides=4,draw,inner sep=2pt},
  vertex/.style={circle,draw,inner sep=2pt,fill=white},
  vvertex/.style={circle,draw,inner sep=2pt,fill=Gray},
}
\theoremstyle{plain}
\newtheorem{theorem}{Theorem}[section]
\newtheorem{lemma}[theorem]{Lemma}
\newtheorem{corollary}[theorem]{Corollary}
\theoremstyle{definition} 
\newtheorem{example}[theorem]{Example}
\newtheorem{definition}[theorem]{Definition}
\newcounter{claim}
\renewcommand{\theclaim}{\Alph{claim}}
\newenvironment{claim}{\refstepcounter{claim}%
\par\medskip\par\noindent{\it Claim~\theclaim.~}~\rm}%
{\par\smallskip\par}
\newenvironment{subproof}{\par\noindent{\sl Proof of Claim~\theclaim.~}}%
{$\,\triangleleft$\par\medskip\par}
\def\@gifnextchar#1#2#3{\let\@tempe#1\def\@tempa{#2}\def\@tempb{#3}%
  \futurelet\@tempc\@gifnch}
\def\@gifnch{\ifx\@tempc\@sptoken\let\@tempd\@tempb%
  \else\ifx\@tempc\@tempe\let\@tempd\@tempa\else\let\@tempd\@tempb\fi\fi\@tempd}
\def\SK@set#1{\left\{#1\right\}}
\def\SK@@set#1#2{\{#1\,:\,
    \begin{array}{@{}l@{}}#2\end{array}
\}}
\def\SK@mset#1{\left\{\!\!\left\{#1\right\}\!\!\right\}}
\def\SK@@mset#1#2{\{\!\!\{#1\,:\,
    \begin{array}{@{}l@{}}#2\end{array}
\}\!\!\}}
\def\BIG@set#1{\Big\{#1\Big\}}
\def\BIG@@set#1#2{\Big\{#1\:\Big|\:
    \begin{array}{@{}l@{}}#2\end{array}
\Big\}}
\newcommand{\Set}[1]{\@gifnextchar\bgroup{\SK@@set{#1}}{\SK@set{#1}}}
\newcommand{\Mset}[1]{\@gifnextchar\bgroup{\SK@@mset{#1}}{\SK@mset{#1}}}
\newcommand{\Bigset}[1]{\@gifnextchar\bgroup{\BIG@@set{#1}}{\BIG@set{#1}}}
\newcommand{\refeq}[1]{(\ref{eq:#1})}
\newcommand{\of}[1]{\left( #1 \right)}
\newcommand{\function}[2]{:#1 \rightarrow #2}
\newcommand{\bZ}{\mathbb{Z}}
\newcommand{\bQ}{\mathbb{Q}}
\newcommand{\bR}{\mathbb{R}}
\newcommand{\jj}{\mathrm{j}}
\newcommand{\wm}{\mathsf{M}}
\DeclareMathOperator{\rank}{rk}
\newcommand{\barr}{{\bar r}}
\newcommand{\barw}{{\bar w}}
\newcommand{\hatw}{{\hat w}}
\newcommand{\prob}[1]{\mathsf{P}[ #1 ]}
\newcommand{\cprob}[2]{\mathsf{P}[ #1 \mid #2 ]}
\title{Gathering Information about a Graph\\ by Counting Walks from a Single Vertex}
\author{Frank Fuhlbrück\thanks{until Dec. 2023: Institut für Informatik, Humboldt-Universität zu Berlin, Germany; Contact via e-mail: frank.fuhlbrueck@alumni.hu-berlin.de},\quad
  Johannes Köbler\thanks{Institut für Informatik, Humboldt-Universität zu Berlin, Germany.},\quad
Oleg Verbitsky${}^\dagger$\,\thanks{Supported by DFG grant KO 1053/8--2.
  On leave from the IAPMM, Lviv, Ukraine.}\\
and
Maksim Zhukovskii\thanks{Department of Computer Science, University of Sheffield, UK.}
}
\date{}
\begin{document}

\maketitle

\begin{abstract}
  We say that a vertex $v$ in a connected graph $G$ is \emph{decisive} if
  the numbers of walks from $v$ of each length determine the graph $G$ rooted at $v$
  up to isomorphism among all connected rooted graphs with the same number of vertices.
  On the other hand, $v$ is called \emph{ambivalent} if it has the same walk counts
  as a vertex in a non-isomorphic connected graph with the same number of vertices as $G$.
  Using the classical constructions of cospectral trees, we first observe that
  ambivalent vertices exist in almost all trees. If a graph $G$ is determined by spectrum
  and its characteristic polynomial is irreducible, then we prove that all
  vertices of $G$ are decisive. Note that both assumptions are conjectured to be
  true for almost all graphs.
  Without using any assumption, we are able to prove that the vertices of a random graph
  are with high probability distinguishable from each other by the numbers of closed walks
  of length at most 4. As a consequence, the closed walk counts for lengths 2, 3, and 4
  provide a canonical labeling of a random graph.
  Answering a question posed in chemical graph theory,
  we finally show that all walk counts for a vertex in an $n$-vertex graph are determined
  by the counts for the $2n$ shortest lengths, and the bound $2n$ is here asymptotically tight.
\end{abstract}

\section{Introduction}\label{s:intro}

Let $V(G)$ denote the vertex set of a graph $G$.
Given a vertex $v\in V(G)$, we write $G_v$ to denote the rooted version of $G$ where $v$
is designated as a root.
A \emph{vertex invariant} $I$ is a labeling of the vertex set $V(G)$, defined for every graph $G$,
such that the label $I_G(v)$ of a vertex $v\in V(G)$ depends only on the isomorphism type of $G_v$,
that is, $I_G(v)=I_H(\alpha(v))$ for every isomorphism $\alpha$ from $G$ to another graph~$H$.

Given the value $I_G(v)$, how much information can we extract from it about the graph $G$?
In the most favorable case, $I_G(v)$ can yield the isomorphism type of~$G_v$.

\begin{definition}\label{def:I-d-a}
  Let $G$ be a connected graph on $n$ vertices.
  A vertex $v\in V(G)$ is \emph{$I$-decisive} if the equality $I_G(v)=I_H(u)$
  for any other connected $n$-vertex graph $H$ implies that $G_v\cong H_u$.
  On the other hand, a vertex $v\in V(G)$ is \emph{$I$-ambivalent} if there exists
  a connected $n$-vertex graph $H\not\cong G$ with $I_G(v)=I_H(u)$.
\end{definition}

\noindent
Note that $I$-ambivalence is formally a stronger condition than just the negation of $I$-decisiveness.

The questions about the expressibility of vertex invariants comprise problems studied in various areas.

\smallskip

\textit{Isomorphism testing.}
Vertex invariants form the basis of archetypical approaches to the graph isomorphism problem \cite{CorneilK80}
and play an important role in practical implementations \cite{McKayP14}.
The most popular and practical heuristic in the field is \emph{color refinement}.
This algorithm assigns a color $C_G(v)$ to each vertex $v$ of an input graph $G$
and decides that two graphs $G$ and $H$ are non-isomorphic if the multisets
$C(G)=\Mset{C_G(v)}_{v\in V(G)}$ and $C(H)=\Mset{C_H(u)}_{u\in V(H)}$ are different.
If $G$ and $H$ are connected graphs with the same number of vertices, then
the inequality $C(G)\ne C(H)$ actually implies that $C(G)\cap C(H)=\emptyset$.
Consequently, if a connected graph $G$ is identified by color refinement, then
every vertex of $G$ is $C$-decisive.

\smallskip

\textit{Distributed computing.}
A typical setting studied in distributed computing considers a network of processors
that communicate with each other to get certain information about the network topology.
In one communication round, each processor exchanges messages with its neighbors.
In this way, a local information gradually propagates throughout the network.
If the processors do not have identity, Angluin \cite{Angluin80} observed that
this communication process can be well described in terms of color refinement.
In particular, $C_G(v)$ can be understood as all information potentially available
for the processor $v$ in the network $G$. Thus, the decisiveness of $v$ would mean
that this processor is able to completely determine the network topology, provided
the network is connected and its size is known.

\smallskip

\textit{Machine learning.}
Color refinement has turned out to be a useful concept used for comprehending large
graph-structured data \cite{ShervashidzeSLMB11} and for analysis and design of graph
neural networks \cite{MorrisRFHLRG19}. A discussion of vertex invariant based
approaches in this area can be found in~\cite{EliasofT22,MorrisKM17,OrsiniFR15}.

\smallskip

\textit{Local computation.}
Suppose that a random process, like a random walk in a graph, is observed at a single
vertex $v$ of the graph $G$. Which information about the global graph properties can be
recovered from the results of the observation? This question has been investigated
in \cite{BenjaminiTZ23,BenjaminiKLRT06,BenjaminiL02}.
In \cite{BenjaminiKLRT06} it is shown that if the observer records the return time sequence of a random
walk, then the eigenvalues of the graph can be determined under rather general conditions.
Note that the probability distribution studied in \cite{BenjaminiKLRT06}
is determined by the color refinement invariant $C_{G_v}(v)$ where the root $v$ in $G_v$
is individualized by a preassigned special color.

\smallskip

\textit{Mathematical chemistry.} A central concept in the field is the representation
of a chemical compound by \emph{molecular graph} whose vertices correspond to the atoms
and edges to chemical bonds. Chemical compounds are classified based on numerous
invariants of their molecular graphs as, for example, the indices of Estrada, Wiener, Randić
(and many others). A number of vertex invariants are introduced
to serve as atomic descriptors in the molecular graph. One of them, based on the closed walk counts,
was pioneered by Randi\v{c} as a ``diagnostic value for characterization of atomic environment''
\cite{Randic80} and subsequently investigated in the series of
papers \cite{KnopMSRT83,KnopMSTKR86,RandicBNNT89,RandicK89,RandicWG83}
exploiting tight connections of this vertex invariant to spectral graph theory.

\smallskip

In the present paper, we focus on vertex invariants definable in terms of walks.
A \emph{walk} of length $k$
starting at a vertex $v$ in a graph $G$ is a sequence of vertices
$v=v_0,v_1,\ldots,v_k$ such that every two successive vertices $v_i,v_{i+1}$ are adjacent.
If $v_0=v_k$, then the walk is called \emph{closed}.
Let $w_G^k(v)$ denote the total number of walks of length $k$ starting at $v$.
The number of closed walks of length $k$ starting (and ending) at $v$ is denoted by $r^k_G(v)$.
Note that $r^0_G(v)=1$ and $r^1_G(v)=0$.
We define two vertex invariants
\begin{eqnarray*}
  W_G(v)&=&(w_G^1(v),w_G^2(v),\ldots),\\
  R_G(v)&=&(r_G^1(v),r_G^2(v),\ldots)
\end{eqnarray*}
consisting of the counts of walks (resp.\ closed walks) emanating from $v$
for each length $k$. Though $W_G(v)$ and $R_G(v)$ are defined as infinite sequences, they are determined
by a finite number of their first elements; we discuss this issue in the last part of this section.

The aforementioned line of research
\cite{KnopMSRT83,KnopMSTKR86,Randic80,RandicBNNT89,RandicK89,RandicWG83}
in chemical graph theory was motivated, using the terminology of Definition \ref{def:I-d-a}, by the phenomenon of $R$-ambivalence.
Two vertices $v\in V(G)$ and $u\in V(H)$ in molecular graphs $G$ and $H$ are called \emph{isocodal}
if their atomic codes $R_G(v)$ and $R_H(u)$ are equal despite there is no isomorphism
from $G$ to $H$ taking $v$ to $u$. Such vertices were also referred to as \emph{isospectral}
in general and \emph{endospectral} in the particular case of $G=H$.
The terminology is well justified by the fact that the concept of endospectrality is actually
equivalent to the notion of \emph{cospectral vertices} in spectral graph theory
\cite{GodsilMcK76,GodsilM81,GodsilS24,Schwenk73} (see Section \ref{ss:cosp} for details).
The molecular graphs are typically planar, and the case of trees received a special attention
in \cite{KnopMSRT83,KnopMSTKR86,RandicK89}.

It is known that the value of $W_G(v)$ is determined by the color $C_G(v)$ (and,
correspondingly, $R_G(v)$ is determined by $C_{G_v}(v)$).
Different proofs of this fact can be found in \cite{Dvorak10,PowersS82,ESA23}.
As observed in \cite{PowersS82}, the converse does not hold, that is, the vertex invariant
$W_G(v)$ is strictly weaker that $C_G(v)$. Thus, even when a vertex $v$ is known
to be $C$-decisive, we cannot be sure that it is also $W$-decisive.

Demarcating $W$-decisiveness and $W$-ambivalence is one of our main goals.
Since most results will be obtained simultaneously for the two vertex invariants $W$ and $R$,
we use the following simplified terminology.

\begin{definition}\label{def:dec-amb}
  A vertex $v\in V(G)$ is called \emph{decisive} if it is both $W$- and $R$-decisive.
  On the other hand, a vertex $v\in V(G)$ is called \emph{ambivalent} if there exists
  a connected $n$-vertex graph $H\not\cong G$ with both $W_G(v)=W_H(u)$ and $R_G(v)=R_H(u)$.
\end{definition}

We now describe our results, splitting them in four groups.

\smallskip

\textit{\textbf{Ambivalent vertices in trees.}}
The classical result of Schwenk \cite{Schwenk73} says that almost all trees
have cospectral mates. That is, if we take a random labeled tree $T$ on $n$ vertices,
then with probability tending to 1 as $n\to\infty$, there is a tree $S\not\cong T$
having the same eigenvalues, with the same multiplicity, as $T$.
As we already mentioned, there is a tight connection between cospectrality and
closed-walk invariants. Due to this connection, Schwenk's argument immediately
implies that almost all trees contain $R$-ambivalent vertices. We observe that
this extends also to $W$-ambivalence. Using Definition \ref{def:dec-amb},
this result can be stated as follows:
\begin{itemize}
\item Almost every tree has an ambivalent vertex.
\end{itemize}
This is proved in Section \ref{s:trees}, which is to a large extend a survey
of the known relationship between the concept of cospectral vertices and closed-walk count,
Schwenk's proof in \cite{Schwenk73}, and the Harary-Palmer construction of trees
with pseudosimilar vertices \cite{HararyP66}. The last can be seen as the base
of a generic construction of non-isomorphic rooted trees $T_v$ and $S_u$ with
$R_T(v)=R_S(u)$ and $W_T(v)=W_S(u)$. While for the former equality this was known,
for the latter we need some additional analysis carried out in
Lemmas \ref{lem:remsim-walkeq}--\ref{lem:graftage}.

\smallskip

\textit{\textbf{Decisive vertices in general graphs.}}
In Section \ref{s:decisive}, we identify conditions under which all vertices
of a graph are decisive:
\begin{itemize}
\item
  If a graph is determined by spectrum and its characteristic polynomial is irreducible, then
  every vertex of this graph is decisive.
\end{itemize}
Both conditions are fulfilled conjecturally for almost all graphs \cite{HaemersS04,LiuS22,vanDamH03}.
Thus, if these conjectures are true, then the decisiveness of every vertex is
a prevailing graph property. The argument used in Section \ref{s:decisive} is based on the concept
of a \emph{walk matrix} \cite{Godsil12,LiuS22} (see Subsection \ref{ss:wm}), which leads us to a useful
observation that for a vertex $v$ of an $n$-vertex graph $G$, both $W_G(v)$ and $R_G(v)$ are linear
recurrence sequences of order at most $n$. The basics of the theory of linear recurrence,
which we summarize in Subsection \ref{ss:recurr}, turn out to be an efficient tool in the proof.

\smallskip

\textit{\textbf{Local decisiveness within a random graph.}}
The results of Section \ref{s:decisive}, in particular, imply that if the characteristic
polynomial of a graph $G$ is irreducible (which is conjectured to be true for a random graph
with high probability), then $R_G(u)\ne R_G(v)$ for every two vertices $u$ and $v$ of $G$.
In Section \ref{s:canon} we prove this local decisiveness property for a random graph unconditionally.
The similar fact for the vertex invariant $W$ is known. It is an immediate consequence
of the result of O'Rourke and Touri \cite{ORourkeT16}
that the standard walk matrix of a random graph is with high probability non-singular.
As a consequence, both vertex invariants $W$ and $R$ can be used for \emph{canonical labeling}
of a random graph. These facts are, therefore, analogs of the classical result
of Babai, Erd\"{o}s and Selkow \cite{BabaiES80}
saying that the color refinement invariant $C$ produces a canonical labeling for almost all graphs.

The fact that, with high probability, the vertices of a random graph $G$ have pairwise distinct $R$-invariants
has also two consequences for the spectral properties of $G$. First, with high probability $G$
contains no pair of cospectral vertices. The second consequence is interesting in view of the result
obtained by Tao and Vu \cite{TaoVu17} that all eigenvalues $\mu_1,\ldots,\mu_n$ of $G$ are with high probability
pairwise distinct. Let $\alpha_{i,j}$ denote the cosine of the (acute) angle between the $j$-th standard basis vector
and the eigenspace of $\mu_i$. Assume that $\mu_1<\dots<\mu_n$ and that $V(G)=\{1,\ldots,n\}$.
The sequence $\alpha_{1,j},\ldots,\alpha_{n,j}$ is called the \emph{angle sequence} of the vertex $j$
and forms a natural vertex invariant. The absence of cospectral vertices in $G$ can be restated
as the fact that all vertex angle sequences in a random graph are with high probability pairwise distinct
(see Corollaries \ref{cor:no-cospectral} and \ref{cor:no-angles} for details).

In fact, the result proved in Section \ref{s:canon} is much stronger: If $G$ is a random graph on $n$ vertices,
then with probability $1-O(1/\sqrt n)$, every vertex $v$ is distinguished from the other vertices of $G$
by the triple $(r_G^2(v),r_G^3(v),r_G^4(v))$. This is an analog of the result obtained in \cite{ESA23}
for the vertex invariant $W$ saying that, with probability $1-O(\sqrt[4]{\ln n/n})$, every vertex $v$ of $G$
is individualized by the triple $(w_G^1(v),w_G^2(v),w_G^3(v))$.

\smallskip

\textit{\textbf{Bounds for the walking time.}}
Let $v\in V(G)$ and $u\in V(H)$, where $G$ and $H$ are connected graphs on $n$ vertices. To which
value of $k$ should we check the equality $r_G^k(v)=r_H^k(u)$ in order to be sure that
it holds true for all $k$, i.e., $R_G(v)=R_H(u)$? The authors of \cite{RandicWG83}
note that the values $k<n$ are enough in the particular case that $G=H$ and raise the question
of how many first values of $k$ must be checked in general. Curiously, the examples
of $R$-ambivalent vertices found in \cite{RandicWG83} and \cite{KnopMSRT83} were justified
by computing $r_G^k(v)$ and $r_H^k(u)$ for $k\le2n$ (with $n=10$ in \cite{RandicWG83} and $n=16$
in \cite{KnopMSRT83}), with no proof that this upper bound suffices. We cannot find
any continuation of this discussion in the literature and answer the question posed
in \cite{RandicWG83} in Section \ref{s:length}. Our answer retrospectively shows that
the computations made in \cite{RandicWG83} and \cite{KnopMSRT83} are correct.
\begin{itemize}
\item $R_G(v)=R_H(u)$ if and only if $r_G^k(v)=r_H^k(u)$ for all $k<2n$.
  The bound $2n$ is here optimal up to a small additive constant.
\item
  Similarly, $W_G(v)=W_H(u)$ if and only if $w_G^k(v)=w_H^k(u)$ for all $k<2n$.
  The bound $2n$ is here optimal up to an additive term of~$o(n)$.
\end{itemize}
The upper bound of $2n$ is obtained by viewing $R_G(v)$ and $W_G(v)$ as linear
recurrence sequences and by making the general observation that a linear
recurrence sequences of order $n$ is completely determined by the first $2n$ elements.
While the optimality of the bound $2n$ for the vertex invariant $R$ is shown
by a straightforward example, for the vertex invariant $W$ this issue seems to be
more subtle. In this case, we use the graphs that were constructed in \cite{KrebsV15}
in order to answer the similar question for the color refinement invariant~$C$.

\smallskip

The paper is concluded with Section \ref{s:exam} providing several
instructive examples of trees with ambivalent vertices.

\section{Ambivalence in trees}\label{s:trees}

Let $v\in V(G)$ and $u\in V(H)$ be vertices chosen in two graphs $G$ and $H$
(the equality $G=H$ is not excluded).
We call $v$ and $u$ \emph{walk-equivalent} if $W_G(v)=W_H(u)$, that is,
$w_G^k(v)=w_H^k(u)$ for all $k$.
We call vertices $v\in V(G)$ and $u\in V(H)$ \emph{closed-walk-equivalent} if $R_G(v)=R_H(u)$.
We say that $v$ and $u$ are \emph{strongly walk-equivalent} if these vertices
are both walk- and closed-walk-equivalent.

We also recall some well-established terminology.
Two vertices $x$ and $y$ in a graph $G$ are called \emph{similar}
if there is an automorphism $\alpha$ of $G$ such that $\alpha(x)=y$.
Similar vertices are, obviously, walk-equivalent.

We say that \emph{almost every tree} has a property $P$
if the number of labeled trees on $n$ vertices with property $P$
is equal to $(1-o(1))n^{n-2}$, that is, their fraction tends to 1 as $n\to\infty$.

\begin{theorem}\label{thm:trees}
  \hfill

  \begin{enumerate}[1.]
  \item
    Almost every tree has an ambivalent vertex.
  \item
    Almost every tree contains two non-similar strongly walk-equivalent vertices.
  \end{enumerate}
\end{theorem}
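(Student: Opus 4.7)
My plan is to establish both parts of Theorem~\ref{thm:trees} simultaneously via the classical Harary-Palmer-Schwenk construction, combined with the walk-count analysis carried out in Lemmas~\ref{lem:remsim-walkeq}--\ref{lem:graftage}.

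First, I would fix a \emph{Harary-Palmer limb}: a small rooted tree $L$ with root $r$ containing two distinguished vertices $x,y\in V(L)$ that are (i) non-similar in $L$ yet (ii) \emph{removal-similar} in the rooted sense, i.e.\ there is an isomorphism $\alpha\function{L-x}{L-y}$ fixing $r$. An explicit such limb is well known. Given any tree $T$ that contains $L$ as a pendant subtree attached at $r$ to some vertex $z\in V(T)$, I form the \emph{swapped} tree $S$ by reattaching the limb via $\alpha$, so that the roles of $x$ and $y$ are interchanged. Because the limb is glued only at $r$ and $x,y$ are non-similar in $L$, the trees $T$ and $S$ are non-isomorphic.

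The heart of the argument is to show that $x$ and $y$, now regarded as vertices of $T$ itself, are strongly walk-equivalent; this yields part~2 once we note that $x,y$ remain non-similar in $T$ for a random $T$, which is asymmetric with high probability. The closed-walk equality $R_T(x)=R_T(y)$ is the classical spectral content of the construction. The open-walk equality $W_T(x)=W_T(y)$ is the main obstacle: a walk starting at $x$ may leave $L$ through $r$, wander arbitrarily in $T\setminus L$, and re-enter the limb at some later time, so one cannot argue purely by symmetry inside $L$. Here I would invoke Lemma~\ref{lem:graftage} to match the walk-generating functions at $x$ and at $y$, with Lemma~\ref{lem:remsim-walkeq} handling the base case of walks confined to $L$ and the grafting lemma combining this with the contribution of the ambient tree. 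The same setup simultaneously delivers part~1: for any $v\in V(T)\setminus V(L)$ and its natural image $v'\in V(S)$, the symmetric application of Lemma~\ref{lem:graftage} gives $W_T(v)=W_S(v')$ and $R_T(v)=R_S(v')$, while $T_v\not\cong S_{v'}$ because $T\not\cong S$, so $v$ is ambivalent.

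Finally, to upgrade from "trees that contain the limb" to "almost every tree," I would invoke Schwenk's probabilistic lemma from~\cite{Schwenk73}: for every fixed finite rooted tree, a uniformly random labeled tree on $n$ vertices contains it as a pendant subtree with probability $1-o(1)$. Applying this to our Harary-Palmer limb, together with the fact that almost every tree is asymmetric, yields both parts of the theorem.
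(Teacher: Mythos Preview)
Your high-level plan (a Harary--Palmer limb, the walk-count lemmas, Schwenk's limb lemma) is the right one, but two steps do not go through as written.

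First, the limb you postulate does not exist in the form you claim. You ask for a rooted tree $L$ with root $r$ and pseudosimilar vertices $x,y$ such that some isomorphism $L-x\to L-y$ fixes $r$. For the $11$-vertex Harary--Palmer tree this fails: the components of $L-x$ (a $P_3$ and a $7$-vertex broom) must be carried to the corresponding components of $L-y$, and one checks that every such isomorphism moves every vertex. With no fixed $r$, your ``swapped tree'' $S$ is not well defined---relabelling the limb produces a tree isomorphic to $T$, not a non-isomorphic mate. The paper sidesteps this by rooting the limb at $x$ itself: with $T=L_x\cdot M_z$ and $T'=L_y\cdot M_z$, Lemma~\ref{lem:coalesce} (not Lemma~\ref{lem:graftage}) makes the coalesced vertices $x\in V(T)$ and $y\in V(T')$ strongly walk-equivalent, and a limb-counting argument gives $T\not\cong T'$. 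This already exhibits the ambivalent vertex for Part~1; it is $x$, not a vertex outside the limb.

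Second, Lemma~\ref{lem:graftage} does not say what you use it for. It takes strongly walk-equivalent vertices lying in \emph{two separate} graphs $G$ and $H$ and joins them to a fresh vertex $a$; it does not show that two vertices of a \emph{single} limb stay walk-equivalent when a tree is attached at some third vertex of that limb, nor that vertices outside the limb have equal walk counts in $T$ and in $S$. For Part~2 the paper therefore uses a \emph{different, larger} limb, namely the graftage $L_x\curlyvee_a L_y$ of two disjoint copies of $L$, rooted at the new vertex $a$. Now Lemma~\ref{lem:graftage} applies verbatim and places strongly walk-equivalent $x$ and $y$ inside one host tree, while $L_x\not\cong L_y$ (as rooted trees) forces them to be non-similar there.
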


Before proving the theorem, we comment on its consequences.
Part 2 shows that the presence of ambivalent vertices is a prevailing phenomenon
not only for pairs of trees but also within a single, randomly chosen tree.
Both parts of the theorem demonstrate an essential difference between the walk-based
vertex invariants $W$ and $R$ from one side and the color refinement invariant $C$
from the other side. As it is well known \cite{ImmermanL90}, every tree is identifiable
by color refinement up to isomorphism and, therefore, all vertices in every tree are $C$-decisive,
in sharp contrast to Theorem~\ref{thm:trees}.

As defined in \cite{LiuS22}, two graphs $G$ and $H$ are \emph{walk-equivalent} if there is a bijection
$\alpha\function{V(G)}{V(H)}$ such that $v$ and $\alpha(v)$ are walk-equivalent for all $v\in V(G)$.
Similarly to the corresponding notion for color refinement, we say that
a graph $G$ is \emph{walk-identifiable} if $G$ is isomorphic to every walk-equivalent $H$.
The proof of Theorem \ref{thm:trees} shows that, in contrast to color refinement,
the identifiability of a graph does not exclude that it contains an ambivalent vertex.

\begin{corollary}\label{cor:walk-id}
  There are walk-identifiable trees with ambivalent vertices.
\end{corollary}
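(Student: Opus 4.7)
The plan is to exhibit a concrete tree $T$ which is walk-identifiable yet contains an ambivalent vertex. The conceptual point is that the two conditions are not in tension: walk-identifiability compares the entire multiset of vertex walk-profiles $\Mset{(W_T(x),R_T(x))}_{x\in V(T)}$ against every candidate tree on the same number of vertices, whereas ambivalence of a single $v\in V(T)$ only requires the existence of \emph{one} vertex $u$ in \emph{one} non-isomorphic connected graph $H$ with $W_T(v)=W_H(u)$ and $R_T(v)=R_H(u)$. There is no reason such an $H$ has to be walk-equivalent to $T$ as a whole graph; typically it will differ from $T$ already in the walk-profile of some other vertex.

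The construction I would use is the Harary--Palmer grafting procedure as employed in the proof of Theorem \ref{thm:trees}: starting from a pair of pseudosimilar vertices in an auxiliary tree, the construction outputs two non-isomorphic rooted trees $(T,v)$ and $(S,u)$ with $W_T(v)=W_S(u)$ and $R_T(v)=R_S(u)$. Taking such a pair immediately gives ambivalence of $v$ in $T$, with $S$ serving as the witnessing graph required by Definition \ref{def:dec-amb}. The remaining task is to select the parameters of the construction so that $T$ happens to be walk-identifiable, and then verify this.

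The main obstacle is the walk-identifiability check, since we know from Theorem \ref{thm:trees}(1) that most trees \emph{fail} to be walk-identifiable. I would therefore pick a small instance of the construction (one of the trees featured in Section \ref{s:exam}) and verify by direct computation that its vertex walk-profile multiset is not realized by any other tree on the same number of vertices. Concretely, for each candidate tree $T'$ on $|V(T)|$ vertices one computes the multiset $\Mset{(W_{T'}(x),R_{T'}(x))}_{x\in V(T')}$ from the adjacency matrix and compares it with that of $T$; a finite case analysis suffices for a small example. Once walk-identifiability of $T$ is confirmed while $v$ remains ambivalent with witness $S$, the corollary follows.
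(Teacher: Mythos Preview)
Your approach matches the paper's: construct a small tree $T$ with an ambivalent vertex via the Harary--Palmer coalescence (the paper takes the $12$-vertex tree of Example~\ref{ex:W-amenab}) and then verify walk-identifiability by an exhaustive computer check (the paper in fact checks all trees on at most $24$ vertices).

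Two slips to fix. First, your sentence ``we know from Theorem~\ref{thm:trees}(1) that most trees fail to be walk-identifiable'' misreads the theorem: it says almost every tree has an \emph{ambivalent} vertex, which---as you yourself explain in your opening paragraph---is not the same as failing walk-identifiability. Second, your proposed computational check compares the multisets $\Mset{(W_{T'}(x),R_{T'}(x))}_{x}$, but walk-identifiability is defined via $W$ alone. Uniqueness of the $(W,R)$-multiset is strictly weaker than uniqueness of the $W$-multiset (a tree $T'$ could share the $W$-multiset with $T$ while differing in $R$-values), so your check as written does not establish walk-identifiability; drop the $R$-component and compare $W$-multisets, exactly as the paper does via the walk matrix~$\wm_T$.
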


The proof of Theorem~\ref{thm:trees} follows the method developed by Schwenk \cite{Schwenk73}
in his seminal work showing that almost every tree has a non-isomorphic cospectral mate.
Schwenk proved that every fixed rooted graph appears as a \emph{limb} in almost all labeled trees
(see Subsection \ref{ss:proof1} for a formal definition). Another part of Schwenk's argument consists
in finding a limb that ensures the existence of an appropriate mate tree.
The limb used in \cite{Schwenk73} has 9 vertices; see Example~\ref{ex:Schwenk-and-anti}.

In the proof of Theorem \ref{thm:trees}, specifically in Lemmas \ref{lem:remsim-walkeq}--\ref{lem:graftage},
we show that virtually the same approach works also for our purposes. In fact, Schwenk's limb
is quite enough to prove the version of Theorem \ref{thm:trees} restricted to $R$-ambivalence.
The smallest limb ensuring both $R$- and $W$-ambivalence has 11 vertices and is shown in Figure \ref{fig:HP}(a).
This is the smallest tree containing two non-similar walk-equivalent vertices.
The tree was exhibited by Harary and Palmer \cite{HararyP66} as an example of a graph with
two pseudo-similar vertices (see Subsection \ref{ss:remsim}).
It is also used by Godsil and McKay \cite{GodsilMcK76} for strengthening Schwenk's result.
Though the existence of a single tree of this kind, for which the non-similarity and walk-equivalence
of two vertices can be checked by direct computation, is sufficient for proving Theorem \ref{thm:trees},
we take a longer route explaining a general construction of limbs with required properties.

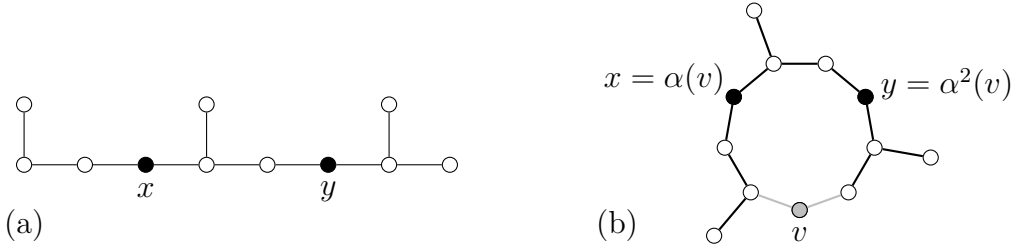
\begin{figure}
  \centering
  \begin{tikzpicture}[every node/.style={circle,draw,inner sep=2pt,fill=none}]
  \begin{scope}[scale=.8]
\path (0,1) node (a1) {}
       (0,0) node (a2) {} edge (a1)
       (1,0) node (a3) {} edge (a2)
       (2,0) node[fill] (a4) {} edge (a3)
       (3,0) node (a5) {} edge (a4)
       (3,1) node (a6) {} edge (a5)
       (4,0) node (a7) {} edge (a5)
       (5,0) node[fill] (a8) {} edge (a7)
       (6,0) node (a9) {} edge (a8)
       (6,1) node (a10) {} edge (a9)
       (7,0) node (a11) {} edge (a9);
       \node[draw=none,fill=none,below] at ($(a4)-(0,0.1)$) {$x$};
       \node[draw=none,fill=none,below] at ($(a8)-(0,0.1)$) {$y$};
       \node[draw=none,fill=none] at (0,-1) {(a)};
   \end{scope}
   \begin{scope}[scale=.8, xshift=.85\textwidth, yshift=5mm]
  \foreach \i in {1,...,9}
  {
    \coordinate (X\i) at (\i*40-10:\ra);
    \node (x\i) at (X\i) {};
    }
    \draw[thick] (x1) -- (x9);
  \foreach \i in {1,2,3,4,5,8}
    {
    \FPadd{\j}{\i}{1}
    \FPround{\j}{\j}{0}
    \draw[thick] (x\i) -- (x\j);
  }
  \draw[thick,gray!50] (x6) -- (x7) -- (x8);
  \node[fill] (y) at (X1) {};
  \node[fill] (x) at (X4) {};
  \node[fill=gray!50] (v) at (X7) {};
  \node[draw=none,fill=none,below] at ($(X7)-(0,0.1)$) {$v$};
  \node[draw=none,fill=none,left] at ($(X4)+(0,0.3)$) {$x=\alpha(v)$};
  \node[draw=none,fill=none,right] at ($(X1)+(0.1,0.2)$) {$y=\alpha^2(v)$};
   \foreach \i in {3,6,9}
   {
     \coordinate (Y\i) at (\i*40-10:1.75*\ra);
     \node (y\i) at (Y\i) {};
    \draw[thick] (x\i) -- (y\i);
  }
  \node[draw=none,fill=none] at (-3,-1.5) {(b)};
   \end{scope}
\end{tikzpicture}
\caption{(a) The Harary-Palmer tree with pseudosimilar (hence non-similar, strongly walk-similar)
  vertices $x$ and $y$.
  (b) The same tree as an instance of the general construction of minimal trees with pseudosimilar
  vertices (obtained by removal of the vertex $v$ from a unicyclic graph with automorphism $\alpha$
  of degree~3).}
  \label{fig:HP}
\end{figure}

\subsection{Closed-walk-equivalent and cospectral vertices}\label{ss:cosp}

For the expository purposes, we briefly explain the connection of the notion
of closed-walk-equivalent vertices
to a closely related notion in spectral graph theory.

Two graphs are \emph{cospectral} if their adjacency matrices have the same spectrum
or, equivalently, the same characteristic polynomial. For a vertex $v$ of a graph $G$,
the \emph{vertex-deleted subgraph} $G\setminus v$ is obtained by
removing $v$ along with all incident edges from $G$.
Two vertices $x$ and $y$ in a graph $G$ are called \emph{cospectral}
if the vertex-deleted subgraphs $G\setminus x$ and $G\setminus y$ are cospectral.
The following fact is well known. It is usually proved by an algebraic argument dating
back to \cite[Lemma 2.1]{GodsilM81}. We here give another,
combinatorial proof based on a characterization of graph cospectrality in terms
of the walk counts. Some ingredients of this argument will be used also later.

\begin{lemma}\label{lem:cosp}
  Two vertices $x$ and $y$ in a graph $G$ are cospectral if and only if
  they are closed-walk-equivalent.
\end{lemma}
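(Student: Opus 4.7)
The plan is to encode $R_G(v) = (r_G^k(v))_{k\geq 0}$ in a rational generating function whose denominator is $\chi_G$ (independent of $v$) and whose numerator is $\chi_{G\setminus v}$. Since the denominator does not depend on the vertex, equality of the sequences $R_G(x)$ and $R_G(y)$ will drop out as equivalent to the equality of the numerators, i.e.\ to cospectrality of $x$ and $y$ in $G$.

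Concretely, I would start from the identity $r_G^k(v) = (A^k)_{vv}$, where $A$ is the adjacency matrix of $G$, and package these numbers into a formal Laurent series in $z^{-1}$:
\begin{equation*}
\Phi_v(z) \;=\; \sum_{k\geq 0} r_G^k(v)\, z^{-k-1} \;=\; \bigl((zI-A)^{-1}\bigr)_{vv}.
\end{equation*}
Then I would apply Cramer's rule. Writing $M[\hat v]$ for the principal submatrix obtained from $M$ by deleting the row and column indexed by $v$, and using the fact that $(zI-A)[\hat v] = zI - A_v$, where $A_v$ is the adjacency matrix of $G\setminus v$, I obtain
\begin{equation*}
\Phi_v(z) \;=\; \frac{\det\bigl((zI-A)[\hat v]\bigr)}{\det(zI-A)} \;=\; \frac{\chi_{G\setminus v}(z)}{\chi_G(z)}.
\end{equation*}

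To finish, I would observe that the denominator $\chi_G(z)$ is common to $\Phi_x(z)$ and $\Phi_y(z)$. Hence $r_G^k(x) = r_G^k(y)$ for every $k\geq 0$ (equivalently, $\Phi_x = \Phi_y$ as rational functions) is equivalent to $\chi_{G\setminus x}(z) = \chi_{G\setminus y}(z)$, which is exactly the cospectrality of $x$ and $y$. I do not see any substantive technical obstacle; the only point to be careful about is that the entries of $(zI-A)^{-1}$ must be interpreted in the field of rational functions in $z$, where Cramer's rule applies verbatim. If one prefers a route that is combinatorial throughout (as the excerpt suggests the authors do), one would instead use the characterization that two graphs are cospectral iff they agree on $\mathrm{tr}(A^k)$ for all $k$ (by Newton's identities) and then relate $\mathrm{tr}(A(G\setminus v)^k)$ to the sequence $R_G(v)$ by a first-return decomposition; this is essentially the combinatorial shadow of the resolvent identity above, and the main obstacle on that route would be controlling the closed walks of $G$ that visit $v$ an arbitrary number of times.
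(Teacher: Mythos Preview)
Your proof is correct and is, in fact, the classical algebraic argument the paper itself alludes to when it says the lemma ``is usually proved by an algebraic argument dating back to \cite[Lemma 2.1]{GodsilM81}.'' You package $R_G(v)$ into the $(v,v)$-entry of the resolvent and read off $\chi_{G\setminus v}/\chi_G$ via Cramer's rule; with a common denominator the equivalence is immediate. There is nothing to fix.

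The paper deliberately takes the other route you sketch at the end. It introduces the first-return counts $\bar r^s_G(v)$, shows that the sequences $(r^s_G(v))_s$ and $(\bar r^s_G(v))_s$ determine one another via the convolution $r^k_G(v)=\sum_{s=2}^k \bar r^s_G(v)\,r^{k-s}_G(v)$, and then uses the identity
\[
R_k(G)=R_k(G\setminus v)+\sum_{s=2}^k s\,\bar r^s_G(v)\,r^{k-s}_G(v)
\]
together with the folklore characterization ``$H$ and $K$ are cospectral iff $R_k(H)=R_k(K)$ for all $k$'' to run an induction in both directions. Your resolvent proof is shorter and more conceptual; the paper's combinatorial proof is chosen because the auxiliary quantities $\bar r^s_G(v)$ (and their open-walk analogues $\bar w^s_G(v)$) are reused in the subsequent Lemmas on removal-similar vertices, coalescence, and graftage, where no analogous one-line algebraic shortcut is available.
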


\begin{proof}
  Let $\barr^k_G(v)$ denote the number of closed walks of length $k$
  starting at $v$, ending at $v$, and not visiting $v$ meanwhile. Note that
  $\barr^k_G(v)=r^k_G(v)$ for $k\le3$. If $k\ge2$, then
  $$
r^k_G(v)=\sum_{s=2}^k \barr^s_G(v)\,r^{k-s}_G(v).
$$
This easily implies that, for each $k$, the equality $r^s_G(x)=r^s_G(y)$
is true for all $s\le k$ if and only if the equality $\barr^s_G(x)=\barr^s_G(y)$
is true for all $s\le k$.

 Let $R_k(H)=\sum_{v\in V(H)}r_H^k(v)$ denote the total number of closed $k$-walks in
 a graph $H$. It is a well-known folklore result (see, e.g., \cite{GarijoGN11}) that
 graphs $H$ and $K$ are cospectral if and only if $R_k(H)=R_k(K)$ for all $k\ge0$.

 Note that
 \begin{equation}
   \label{eq:GGx}
 R_k(G)=R_k(G\setminus x)+\sum_{s=2}^k s\,\barr^s_G(x)\,r^{k-s}_G(x)
\end{equation}
 for $k\ge2$.
 If $x$ and $y$ are closed-walk-equivalent, then $r^s_G(x)=r^s_G(y)$ and $\barr^s_G(x)=\barr^s_G(y)$
 for all $s$. Along with Eq.~\refeq{GGx} and its version for the vertex $y$, this implies
 that $R_k(G\setminus x)=R_k(G\setminus y)$ for all $k$. Therefore, $G\setminus x$ and $G\setminus y$
 are cospectral. On the other hand, if $x$ and $y$ are cospectral, then
 $R_k(G\setminus x)=R_k(G\setminus y)$ for all $k$ and Eq.~\refeq{GGx}, along with its version for $y$,
 implies that
 $$
\sum_{s=2}^k s\,\barr^s_G(x)\,r^{k-s}_G(x)=\sum_{s=2}^k s\,\barr^s_G(y)\,r^{k-s}_G(y)
$$
for all $k\ge2$.
A simple induction on $k$ shows that, for each $k$, the equalities $\barr^s_G(x)=\barr^s_G(y)$
and $r^s_G(x)=r^s_G(y)$ are true for all $s\le k$. Therefore, $x$ and $y$ are
closed-walk-equivalent.
\end{proof}

\subsection{Removal-similar and pseudosimilar vertices}\label{ss:remsim}

Two vertices $x$ and $y$ in a graph $G$ are called \emph{removal-similar} if
the vertex-deleted subgraphs $G\setminus x$ and $G\setminus y$ are isomorphic.
A survey of the research on this concept is given in \cite{Lauri97}.
Removal-similar vertices are obviously cospectral and, by Lemma \ref{lem:cosp},
closed-walk-equivalent. The following lemma says more.

\begin{lemma}\label{lem:remsim-walkeq}
  Removal-similar vertices are walk-equivalent.
\end{lemma}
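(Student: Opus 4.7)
The plan is to work with generating functions. Lemma~\ref{lem:cosp} gives $r_G^k(x) = r_G^k(y)$ for all $k$---since $G\setminus x \cong G\setminus y$ implies the two deleted subgraphs are cospectral, and hence $x, y$ are cospectral in $G$---so $R_x(z) = R_y(z)$, where $R_v(z) := \sum_{k\ge 0} r_G^k(v)\, z^k$. Writing also $W_v(z) := \sum_k w_G^k(v)\, z^k$, I introduce the auxiliary count $\hat w_G^k(v)$, the number of length-$k$ walks from $v$ that visit $v$ only at step $0$ (with $\hat w_G^0(v) := 1$), along with its generating function $\hat W_v(z)$; and the totals $W_k(H) := \sum_{u\in V(H)} w_H^k(u)$ and $\mathcal W(H, z) := \sum_k W_k(H)\, z^k$.

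Two combinatorial identities drive the argument. Identity~A, from classifying a length-$k$ walk from $v$ by the index of its last visit to $v$, asserts $w_G^k(v) = \sum_{t=0}^k r_G^t(v)\,\hat w_G^{k-t}(v)$, equivalently $W_v(z) = R_v(z)\,\hat W_v(z)$. Identity~B asserts $W_k(G) - W_k(G\setminus v) = w_G^k(v) + \sum_{s=1}^k \hat w_G^s(v)\, w_G^{k-s}(v)$, whose generating-function form is
\[
\mathcal W(G, z) - \mathcal W(G\setminus v, z) = \hat W_v(z)\, W_v(z).
\]
The delicate step will be Identity~B, which I would prove as follows: a length-$k$ walk in $G$ either avoids $v$ entirely (contributing $W_k(G\setminus v)$) or visits $v$; visiting walks that start at $v$ contribute $w_G^k(v)$, while a walk starting at some $u \ne v$ whose first visit to $v$ occurs at step $s$ is obtained by reversing its initial segment---yielding a length-$s$ walk from $v$ avoiding $v$ thereafter (which summed over $u$ gives $\hat w_G^s(v)$)---and concatenating an arbitrary length-$(k{-}s)$ walk from $v$.

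Combining the two identities then finishes the proof. Since $G\setminus x \cong G\setminus y$, we have $\mathcal W(G\setminus x, z) = \mathcal W(G\setminus y, z)$, so Identity~B at $x$ and at $y$ yields $\hat W_x(z)\, W_x(z) = \hat W_y(z)\, W_y(z)$. Substituting Identity~A and using $R_x = R_y$ gives $R_x(z)\,\hat W_x(z)^2 = R_x(z)\,\hat W_y(z)^2$; since $R_x(0) = 1$, $R_x(z)$ is a unit in $\mathbb{Z}[[z]]$, so $\hat W_x(z)^2 = \hat W_y(z)^2$, and taking the unique square root with constant term $1$ (matching $\hat W_v(0) = 1$) gives $\hat W_x(z) = \hat W_y(z)$. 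Hence $W_x(z) = R_x(z)\,\hat W_x(z) = R_y(z)\,\hat W_y(z) = W_y(z)$, yielding $w_G^k(x) = w_G^k(y)$ for all $k$, which is walk-equivalence.
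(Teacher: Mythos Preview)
Your proof is correct and follows essentially the same route as the paper's own argument: both rely on Lemma~\ref{lem:cosp} to get $R_G(x)=R_G(y)$, introduce the count of walks from $v$ that never return to $v$ (your $\hat w$, the paper's $\bar w$), relate it to $w_G^k(v)$ via the last-visit decomposition (your Identity~A is exactly the paper's Eq.~\refeq{wrnarw}), and then exploit $W_k(G\setminus x)=W_k(G\setminus y)$ through a first-visit decomposition of the walks touching~$v$. The only difference is packaging: the paper writes the ``walks visiting $v$'' count as the double convolution $\sum_{s,t}\bar w^s\,r^{k-s-t}\,\bar w^t$ (first \emph{and} last visit) and finishes by an explicit induction on~$k$, whereas you factor it as the single convolution $\hat W_v\,W_v$ and then divide by the unit $R_v$ and take the square root in $\mathbb{Z}[[z]]$---which is precisely that induction in generating-function form.
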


\begin{proof}
  Given removal-similar vertices $x$ and $y$ in a graph $G$, we have to prove that
  $w^k_G(x)=w^k_G(y)$ for all~$k$.

  Let $\barw^k_G(v)$ denote the number of walks of length $k$ in $G$
  starting at $v$ and visiting $v$ never again. Note that
  \begin{equation}
    \label{eq:wrnarw}
w^k_G(v)=\sum_{s=0}^k r^s_G(v)\,\barw^{k-s}_G(v).
  \end{equation}
We know, by Lemma \ref{lem:cosp}, that $r^s_G(x)=r^s_G(y)$ for all $s$.
Eq.~\refeq{wrnarw}, therefore, reduces our task to proving that
$$
\barw^k_G(x)=\barw^k_G(y)
$$
for all~$k$.

To this end, let $\hatw^k_G(v)$ denote the number of walks of length $k$ in $G$
visiting the vertex $v$ at least once. Furthermore, let $W_k(G)=\sum_{v\in V(G)}w_G^k(v)$
denote the total number of $k$-walks in $G$. Obviously, $W_k(G)=W_k(G\setminus v)+\hatw^k_G(v)$.
Since the vertices $x$ and $y$ are removal-similar, $W_k(G\setminus x)=W_k(G\setminus y)$.
It follows that
\begin{equation}
  \label{eq:barwxy}
\hatw^k_G(x)=\hatw^k_G(y)
\end{equation}
for all~$k$.

We now prove that $\barw^k_G(x)=\barw^k_G(y)$ by induction on $k$. Note that
\begin{multline*}
\hatw^k_G(v)=\sum_{s=0}^k \sum_{t=0}^{k-s}  \barw^s_G(v)\,r^{k-s-t}_G(v)\,\barw^t_G(v)\\
=2\barw^k_G(v)+
\sum_{s=1}^{k-1} \sum_{t=0}^{k-s}  \barw^s_G(v)\,r^{k-s-t}_G(v)\,\barw^t_G(v) +
\sum_{t=0}^{k-1}  r^{k-t}_G(v)\,\barw^t_G(v).
\end{multline*}
Eq.~\refeq{barwxy}, therefore, implies that
\begin{eqnarray*}
2\barw^k_G(x)&+&
\sum_{s=1}^{k-1} \sum_{t=0}^{k-s}  \barw^s_G(x)\,r^{k-s-t}_G(x)\,\barw^t_G(x) +
\sum_{t=0}^{k-1}  r^{k-t}_G(x)\,\barw^t_G(x)\\
=
2\barw^k_G(y)&+&
\sum_{s=1}^{k-1} \sum_{t=0}^{k-s}  \barw^s_G(y)\,r^{k-s-t}_G(y)\,\barw^t_G(y) +
\sum_{t=0}^{k-1}  r^{k-t}_G(y)\,\barw^t_G(y).
\end{eqnarray*}
It remains to note that the corresponding sums in the left and the right hand sides
of the equality are equal by the induction assumption.
\end{proof}

Similar vertices are obviously removal-similar. Removal-similar but not similar
vertices are called \emph{pseudosimilar}. We call a graph $G$ with a pair of
pseudosimilar vertices \emph{minimal} if no proper subgraph of $G$ contains
such a pair. Harary and Palmer \cite{HararyP66} described a construction
producing trees with removal-similar vertices and proved that every minimal tree with
pseudosimilar vertices can be obtained by this construction.
We recast the Harary-Palmer construction in the style of the more general
construction of graphs with pseudosimilar vertices suggested in \cite{HerndonE75}
and analyzed in~\cite{GodsilK82}.

\medskip

\textsc{The Harary-Palmer construction (recast).}
Let $U$ be a unicyclic connected graph whose automorphism group contains
an element $\alpha$ of order 3. Suppose that a vertex $v$ belongs to the cycle $C$ of $U$ and has degree 2.
Then the vertices $\alpha(v)$ and $\alpha^2(v)$ are removal-similar in~$T=U\setminus v$.
To see this, it is enough to observe that the automorphism $\alpha^2$ of $U$
maps $\{v,\alpha(v)\}$ onto $\{v,\alpha^2(v)\}$. An example is shown in Figure~\ref{fig:HP}(b).

\medskip

Though the construction can sometimes produce a tree with similar vertices,
\cite[Theorem 5]{HararyP66} readily implies that every minimal tree with pseudosimilar vertices
is obtainable in this way. We also remark that a quite constructive description of \emph{all}
trees with pseudosimilar vertices is given in~\cite{KirkpatrickKC83}.

\subsection{Proof of Theorem~\ref{thm:trees}: Part 1}\label{ss:proof1}

Let $G_x$ and $H_z$ be two vertex-disjoint rooted trees.
Their \emph{coalescence} $G_x\cdot H_z$ is a graph obtained from $G$ and $H$
by identifying the root vertices $x$ and $z$. We will keep the name $x$
for the coalesced vertex of $G_x\cdot H_z$.

\begin{lemma}\label{lem:coalesce}
  Let $x$ and $y$ be strongly walk-equivalent vertices in a graph $G$,
  and $z$ be an arbitrary vertex in another graph $H$. Then the vertices
  $x$ in $A=G_x\cdot H_z$ and $y$ in $B=G_y\cdot H_z$ are strongly walk-equivalent.
\end{lemma}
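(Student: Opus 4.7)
The plan is to exploit the fact that $x$ is an articulation point of $A = G_x \cdot H_z$. Because $x = z$ is the only vertex shared by $G$ and $H$, any walk in $A$ starting at $x$ decomposes uniquely at its returns to $x$ into excursions, each confined to one of the two sides. Using the auxiliary counts $\barr^k_F(v)$ (closed $k$-walks from $v$ with no intermediate visit to $v$) and $\barw^k_F(v)$ (walks of length $k$ from $v$ never revisiting $v$), already considered in the proofs of Lemmas~\ref{lem:cosp} and~\ref{lem:remsim-walkeq}, this articulation-point decomposition gives the identities
\[
\barr^k_A(x) = \barr^k_G(x) + \barr^k_H(z) \ (k \ge 2), \qquad \barw^k_A(x) = \barw^k_G(x) + \barw^k_H(z) \ (k \ge 1),
\]
together with $\barw^0_A(x) = 1$. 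The analogous identities hold in $B = G_y \cdot H_z$ with $y$ in place of $x$ on the left and on the $G$-summand.

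Next, I transfer the hypothesis on $G$ to the primed counts. The equalities $r^k_G(x) = r^k_G(y)$ for all $k$ imply $\barr^k_G(x) = \barr^k_G(y)$ for all $k$ by exactly the inductive argument already carried out in the proof of Lemma~\ref{lem:cosp}. Feeding this, together with $w^k_G(x) = w^k_G(y)$, into Eq.~\refeq{wrnarw} and inverting the triangular convolution inductively on $k$ (the $s=0$ term isolates $\barw^k$) yields $\barw^k_G(x) = \barw^k_G(y)$ for all $k$. The right-hand sides of the displayed identities then match between $(A,x)$ and $(B,y)$, so $\barr^k_A(x) = \barr^k_B(y)$ and $\barw^k_A(x) = \barw^k_B(y)$ for all $k$.

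Finally, the triangular recurrences
\[
r^k_F(v) = \sum_{s=2}^{k} \barr^s_F(v)\, r^{k-s}_F(v), \qquad w^k_F(v) = \sum_{s=0}^{k} r^s_F(v)\, \barw^{k-s}_F(v)
\]
recover $r^k_F(v)$ and $w^k_F(v)$ from the $\barr$- and $\barw$-sequences. Applying this in both $A$ and $B$ propagates the equalities just established to $r^k_A(x) = r^k_B(y)$ and $w^k_A(x) = w^k_B(y)$ for all $k$, which is strong walk-equivalence of $x$ in $A$ and $y$ in $B$. No step is really an obstacle; the whole argument is clean combinatorial bookkeeping at the cut vertex, and if preferred it can be packaged as the one-line generating-function identity $\sum_k r^k_A(x)\,t^k = \bigl(1 - \sum_{k \ge 2} \barr^k_G(x)\,t^k - \sum_{k \ge 2} \barr^k_H(z)\,t^k\bigr)^{-1}$ together with its $w$-analogue, making the invariance under the swap $x \leftrightarrow y$ manifest.
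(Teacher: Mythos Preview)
Your proposal is correct and is essentially the paper's own argument. The only cosmetic difference is that you first isolate the identities $\barr^k_A(x)=\barr^k_G(x)+\barr^k_H(z)$ and $\barw^k_A(x)=\barw^k_G(x)+\barw^k_H(z)$ and then plug them into the generic recurrences, whereas the paper writes the combined recurrences $r^k_A(x)=\sum_{s=2}^k\bigl(\barr^s_G(x)+\barr^s_H(z)\bigr)r^{k-s}_A(x)$ and $w^k_A(x)=\sum_{s=0}^k\bigl(\barw^s_G(x)+\barw^s_H(z)\bigr)r^{k-s}_A(x)$ directly; the underlying decomposition at the cut vertex and the induction are identical.
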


\begin{proof}
  We use the parameters $\barr^k_G(v)$ and $\barw^k_G(v)$ defined in the proofs of Lemma \ref{lem:cosp}
  and Lemma \ref{lem:remsim-walkeq} respectively.

  By assumption, $r^k_G(x)=r^k_G(y)$ for all $k$.
  As noted in the proof of Lemma \ref{lem:cosp}, this implies that
  $\barr^k_G(x)=\barr^k_G(y)$ for all $k$. Let $k\ge2$. Note that
  \begin{eqnarray*}
  r^k_A(x)&=&\sum_{s=2}^k \of{\barr^s_G(x)+\barr^s_H(z)}r^{k-s}_A(x)\text{ and }\\
  r^k_B(y)&=&\sum_{s=2}^k \of{\barr^s_G(y)+\barr^s_H(z)}r^{k-s}_B(y).
  \end{eqnarray*}
  The equality $r^k_A(x)=r^k_B(y)$ follows from here by induction.

  By assumption, we also have $w^k_G(x)=w^k_G(y)$ for all $k$.
  By Eq.~\refeq{wrnarw} in the proof of Lemma \ref{lem:remsim-walkeq}, this implies that
  $\barw^k_G(x)=\barw^k_G(y)$ for all $k$. Note that
  \begin{eqnarray*}
  w^k_A(x)&=&\sum_{s=0}^k \of{\barw^s_G(x)+\barw^s_H(z)}r^{k-s}_A(x)\text{ and }\\
  w^k_B(y)&=&\sum_{s=0}^k \of{\barw^s_G(y)+\barw^s_H(z)}r^{k-s}_B(y).
  \end{eqnarray*}
  The equality $w^k_A(x)=w^k_B(y)$ follows.
\end{proof}

A rooted tree $L_x$ occurs as a \emph{limb} in a tree $T$ if
$T=L_x\cdot M_z$ for some rooted tree $M_z$.
Let $T$ be a random labeled tree on $n$ vertices.
Schwenk \cite{Schwenk73} proved
that every fixed rooted tree $L_x$ occurs as a limb in $T$ with probability $1-o(1)$ as $n\to\infty$.
Fix an arbitrary tree $L$ with non-similar strongly walk-equivalent vertices $x$
and $y$, for example, the Harary-Palmer tree in Figure \ref{fig:HP}(a).
The vertices $x$ and $y$ are pseudosimilar in $L$ and, therefore, they are strongly walk-equivalent
by Lemmas \ref{lem:cosp} and \ref{lem:remsim-walkeq}.
With high probability, $T\cong L_x\cdot M_z$ for this particular $L$ and some rooted tree $M_z$.
Consider $T'=L_y\cdot M_z$. If $n$ is larger than the number of vertices in $L$,
then $T'\not\cong T$ because $T'$ has a smaller number of limbs isomorphic to $L_x$.
Since $x$ and $y$ are strongly walk-equivalent in $L$, their counterparts $x\in V(T)$ and
$y\in V(T')$ are strongly walk-equivalent by Lemma~\ref{lem:coalesce}.

\subsection{Proof of Corollary \ref{cor:walk-id}}\label{ss:compute}

The smallest two trees with ambivalent vertices obtainable by the above construction
have 12 vertices; see Example \ref{ex:W-amenab} below. This proves Corollary \ref{cor:walk-id}
as every tree with at most 24 vertices is walk-identifiable.

The computation certifying the last fact uses the Lua library TCSLibLua in \cite{TCSLibLua} and works as follows.
For each $n$, we trace through all non-isomorphic \emph{rooted} trees on the vertex set $\{1,\ldots,n\}$.
We address only those trees which are rooted at some canonical center;
the other rooted trees are filtered out. These steps are similar to the algorithm outlined in \cite{WrightROM86}.
In this way, we trace through all \emph{unrooted} non-isomorphic trees $T$.
For each tree $T$, we compute a string $s(T)$
encoding the matrix $\wm_T=\of{w_T^k(x)}_{1\le x\le n,\,0\le x<2n}$
(see Section \ref{s:length} for the role of the threshold $2n$).
All trees $T$ for which no collision $s(T)=s(T')$ exists for any other tree $T'$, are walk-identifiable.

This approach can be easily turned into a search for ambivalent vertices
(which yields Example \ref{ex:W-amenab} as the smallest example) by using individual rows of
the matrix $\wm_T$ as keys instead of the entire matrix.

\subsection{Proof of Theorem~\ref{thm:trees}: Part 2}

Given two rooted graphs $G_v$ and $H_u$, define their \emph{graftage} $G_v\curlyvee_w H_u$
as the graph obtained from the disjoint union of $G$ and $H$ by connecting
their vertices $v$ and $u$ to a new vertex $w$.
We can regard the graftage as rooted at $w$ and take its coalescence with another
rooted graph. These operation is a particular case of a more general construction
of graphs with cospectral vertices suggested in \cite{LoweS86} and analyzed in \cite[Proposition~5.1.5]{CvetkovicRS97}.

\begin{lemma}\label{lem:graftage}
  Strongly walk-equivalent vertices $v\in V(G)$ and $u\in V(H)$ remain strongly walk-equivalent
  in the graph $A=(G_v\curlyvee_a H_u)\cdot F_b$ for any rooted graph~$F_b$.
\end{lemma}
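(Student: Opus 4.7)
The plan is to run essentially the same inductive decomposition as in the proof of Lemma~\ref{lem:coalesce}, using the auxiliary quantities $\barr^k_X(x)$ and $\barw^k_X(x)$ introduced in the proofs of Lemmas~\ref{lem:cosp} and~\ref{lem:remsim-walkeq}. Those proofs show that strong walk-equivalence of $v$ and $u$ is equivalent to $\barr^k_G(v)=\barr^k_H(u)$ and $\barw^k_G(v)=\barw^k_H(u)$ holding for all $k$, which I will use freely. Let $C_v$ denote the connected component of $a$ inside $A\setminus v$, and $C_u$ the component of $a$ inside $A\setminus u$.

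The key structural observation is that $v$ separates $G\setminus v$ from $a$ in $A$, so a closed walk from $v$ of length $k$ that does not revisit $v$ in between is either confined to $G$ or has the shape $v\to a\to\cdots\to a\to v$ with middle an unrestricted $a$-to-$a$ walk inside $C_v$; the same type of decomposition applies to non-returning open walks. This gives the identities
\[
\barr^k_A(v)=\barr^k_G(v)+r^{k-2}_{C_v}(a),\qquad \barw^k_A(v)=\barw^k_G(v)+w^{k-1}_{C_v}(a),
\]
together with the symmetric identities at $u$ involving $C_u$. Applying exactly the same split at $a$ inside $C_v$, where $a$ separates the $F$-side from the $H$-side that dangles off the edge $\{a,u\}$ (and analogously inside $C_u$), yields
\[
\barr^k_{C_v}(a)=\barr^k_F(a)+r^{k-2}_H(u),\qquad \barw^k_{C_v}(a)=\barw^k_F(a)+w^{k-1}_H(u),
\]
and the analogues for $C_u$ with $H,u$ replaced by $G,v$. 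Strong walk-equivalence of $v$ and $u$ makes the right-hand sides match, so $\barr^k_{C_v}(a)=\barr^k_{C_u}(a)$ and $\barw^k_{C_v}(a)=\barw^k_{C_u}(a)$ for all $k$.

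Two short inductions finish the job. Feeding the last two equalities into the standard recurrence $r^k(x)=\sum_{s=2}^{k}\barr^s(x)\,r^{k-s}(x)$ gives $r^k_{C_v}(a)=r^k_{C_u}(a)$ for all $k$; substituting this, together with $\barr^k_G(v)=\barr^k_H(u)$, into the decomposition of $\barr^k_A$ yields $\barr^k_A(v)=\barr^k_A(u)$, and one last application of the same recurrence gives $r^k_A(v)=r^k_A(u)$. The parallel induction based on Eq.~\refeq{wrnarw} propagates the $\barw$ equalities through $C_v$ and $C_u$ into $A$, giving $w^k_A(v)=w^k_A(u)$. The only mildly delicate point is the pair of structural separations — that $v$ cuts $G\setminus v$ off from $a$ in $A$, and that $a$ cuts the $F$-part from the $H$-part of $C_v$ (and symmetrically for $C_u$) — but both are immediate from the fact that the new vertex $a$ of the graftage is adjacent only to $v$ and $u$; everything else is a routine echo of the argument in Lemma~\ref{lem:coalesce}.
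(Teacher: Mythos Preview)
Your proof is correct, but it takes a noticeably longer route than the paper's. The paper's argument avoids the auxiliary subgraphs $C_v$ and $C_u$ entirely by exploiting a symmetry you do not use: since $a$ is a single vertex of $A$, the quantities $r^{\,\cdot}_A(a)$ and $w^{\,\cdot}_A(a)$ are literally the same in the decompositions for $v$ and for $u$. Concretely, the paper writes
\[
r^k_A(v)=r^k_G(v)+\sum_{s=0}^{k-2}\sum_{t=0}^{k-2-s} r^s_G(v)\,r^{k-2-s-t}_A(a)\,r^t_G(v)
\]
(splitting at the first and last visits to $a$), and the analogous formula for $u$ with $H$ in place of $G$; the middle factor $r^{k-2-s-t}_A(a)$ is identical on both sides, so $r^k_A(v)=r^k_A(u)$ follows immediately from $r^s_G(v)=r^s_H(u)$, with no induction. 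The walk counts are handled the same way via $w^k_A(v)=w^k_G(v)+\sum_{s} r^s_G(v)\,w^{k-1-s}_A(a)$.

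Your approach instead deletes $v$ (resp.\ $u$) to form $C_v$ and $C_u$, proves these two different graphs have equal walk counts at $a$ by a second cut-vertex decomposition, and then climbs back up via two inductions on the $\barr$/$\barw$ recurrences. This is all valid---the separation claims and the four identities you state are easily checked---but it introduces an extra layer (matching $C_v$ with $C_u$) that the paper sidesteps by keeping everything inside the single graph $A$. What your approach buys is a very uniform template (``cut at a separating vertex, recurse''), reusing Lemma~\ref{lem:coalesce} almost verbatim twice; what the paper's approach buys is brevity and the avoidance of any induction on~$k$.
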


\begin{proof}
  By assumption, $r^s_G(v)=r^s_G(u)$ for all $s$. The equalities
  \begin{eqnarray*}
    r^k_A(v)&=&r^k_G(v)+\sum_{s=0}^{k-2}\sum_{t=0}^{k-2-s} r^s_G(v)\,r^{k-2-s-t}_A(a)\,r^t_G(v)\text{ and }\\
    r^k_A(u)&=&r^k_H(u)+\sum_{s=0}^{k-2}\sum_{t=0}^{k-2-s} r^s_H(u)\,r^{k-2-s-t}_A(a)\, r^t_H(u),
  \end{eqnarray*}
  therefore, imply that $r^k_A(v)=r^k_A(u)$ for all $k$. The equalities
  \begin{eqnarray*}
    w^k_A(v)&=&w^k_G(v)+\sum_{s=0}^{k-1} r^s_G(v)\,w^{k-1-s}_A(a)\text{ and }\\
    w^k_A(u)&=&w^k_H(u)+\sum_{s=0}^{k-1} r^s_H(u)\,w^{k-1-s}_A(a)
  \end{eqnarray*}
  now imply that $w^k_A(v)=w^k_A(u)$ for all~$k$.
\end{proof}

To prove Part 2 of Theorem~\ref{thm:trees}, fix an arbitrary tree $L$ with non-similar
strongly walk-equivalent vertices $x$ and $y$, like the Harary-Palmer tree.
Let $L_x$ and $L_y$ be the rooted, vertex-disjoint copies of $L$.
The graftage $L_x\curlyvee_a L_y$ appears as a limb in a random tree $T$ on $n$ vertices
with high probability. The vertices $x$ and $y$ are strongly walk-equivalent in $T$
by Lemma \ref{lem:graftage}. They are non-similar in $T$ because $L_x\not\cong L_y$.
The proof of Theorem~\ref{thm:trees} is complete.

\section{Decisiveness as the average case?}\label{s:decisive}

As it is well known \cite{BabaiES80}, almost every graph is identifiable
by color refinement up to isomorphism and, as a consequence, all vertices in almost every graph are $C$-decisive.
This section is motivated by the question whether the analogous statement holds true for
the vertex invariants $W$ and~$R$.

Since the value of $W_G(v)$ is determined by the value of $C_G(v)$,
every walk-identifiable graph is also identifiable by CR.
The converse is not always true \cite{ESA23}.
Nevertheless, almost all graphs are known to be walk-identifiable \cite{LiuS22,ORourkeT16}.
In view of Corollary \ref{cor:walk-id}, this fact alone does not allow us to conclude
that all vertices of a random graph are $W$-decisive.
This question, also for the vertex invariant $R$, is related to the following two conjectures.

Let $P_G(z)=\det(zI-A)$ denote the characteristic polynomial of a graph $G$.
Here $A$ is the adjacency matrix of $G$ and $I$ is the identity matrix.
A graph $G$ is \emph{determined by spectrum} if $P_G=P_H$ implies $G\cong H$.
Below, irreducibility of a polynomial with integer coefficients is meant over rationals.

\begin{description}
\item[Conjecture A] (see \cite{HaemersS04,vanDamH03})
A random graph is with high probability determined, up to isomorphism, by its spectrum.
\item[Conjecture B] (see \cite[Section 7]{LiuS22})
The characteristic polynomial of a random graph is with high probability irreducible.
\end{description}

\begin{theorem}\label{thm:decisive}
  If $G$ is determined by spectrum and its characteristic polynomial is irreducible, then
  every vertex of $G$ is decisive.
\end{theorem}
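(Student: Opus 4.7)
The plan rests on one algebraic consequence of the irreducibility of $P_G$: since the minimal polynomial $m_A$ of the adjacency matrix $A$ of $G$ is a monic divisor of $P_G$ of positive degree, $m_A = P_G$. Thus $A$ is non-derogatory, and for every nonzero $x \in \bR^n$ the tuple $(x, Ax, A^2x, \ldots, A^{n-1}x)$ is a basis of $\bR^n$. In particular, the columns of the walk matrix $[\mathbf{1}, A\mathbf{1}, \ldots, A^{n-1}\mathbf{1}]$, as well as those of the matrix $[e_v, Ae_v, \ldots, A^{n-1}e_v]$ for any vertex $v$, form bases of $\bR^n$.

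The first step is to show that any competitor graph $H$ with some $u \in V(H)$ satisfying $W_G(v) = W_H(u)$ or $R_G(v) = R_H(u)$ must be isomorphic to $G$. Each of the sequences $r_G^k(v)$ and $w_G^k(v)$ is a linear recurrence sequence whose minimal annihilator divides $P_G$; since neither sequence is identically zero (for instance, $w_G^0(v) = r_G^0(v) = 1$), irreducibility of $P_G$ forces this annihilator to equal $P_G$ itself. The analogous sequence from $u$ is annihilated by $P_H$ via Cayley--Hamilton, so $P_G \mid P_H$, and equality of degrees gives $P_G = P_H$. Since $G$ is determined by spectrum, $H \cong G$.

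Fix an isomorphism $\psi\function{H}{G}$ and set $v' = \psi(u)$. It then suffices to prove $v = v'$, because in that case $\psi$ itself realises $G_v \cong H_u$. For the $W$-case, the equality $W_G(v) = W_G(v')$ rewrites, by symmetry of $A$, as $\langle A^k \mathbf{1}, e_v - e_{v'}\rangle = 0$ for all $k$, and the basis property of $\{A^k \mathbf{1}\}_{k=0}^{n-1}$ yields $v = v'$ immediately. For the $R$-case, I would use the cyclicity of $e_v$ to write $e_{v'} = p(A) e_v$ for some $p \in \bQ[z]$. Symmetry of $A$ and commutativity of $A^k$ with $p(A)$ then turn $r_G^k(v) = r_G^k(v')$ into $e_v^T A^k (I - p(A)^2) e_v = 0$ for all $k$; the same orthogonality argument applied in the basis $\{A^k e_v\}_{k=0}^{n-1}$ forces $(I - p(A)^2) e_v = 0$. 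Since the minimal annihilator of $e_v$ under $A$ is the irreducible polynomial $P_G$, we get $P_G \mid (p-1)(p+1)$ and hence $p \equiv \pm 1 \pmod{P_G}$, i.e., $p(A) = \pm I$. Thus $e_{v'} = \pm e_v$, and the minus sign is excluded because $e_{v'}$ and $e_v$ are nonnegative standard basis vectors.

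The main technical hurdle is the $R$-case, which blends the cyclic representation $e_{v'} = p(A)e_v$, an orthogonality argument in the basis generated by $e_v$, and the factorisation $P_G \mid p^2 - 1$ that is only unlocked by the irreducibility hypothesis. By comparison, the $W$-case is essentially a restatement of the controllability of $G$, which follows directly from the non-derogatory structure of~$A$.
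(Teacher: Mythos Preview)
Your proof is correct, with one caveat: the assertion that every nonzero $x \in \bR^n$ is cyclic for $A$ does not follow from $A$ being non-derogatory (any real eigenvector of $A$ is a counterexample). What is true, and what you actually need, is that every nonzero $x \in \bQ^n$ is cyclic: the $\bQ[z]$-annihilator of such an $x$ divides $P_G$, and irreducibility of $P_G$ over $\bQ$ forces it to equal $P_G$. Since you only apply cyclicity to $\mathbf{1}$ and $e_v$, the argument stands once this is corrected.

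The overall architecture matches the paper's: both establish $P_G = P_H$ from the minimal recurrence of the walk (or closed-walk) sequence, invoke the DS hypothesis to get $G \cong H$, and then separate $v$ from $v' = \psi(u)$ inside $G$. The $W$-case is identical in substance to the paper's use of $\rank\wm_G = n$. The $R$-case, however, is handled differently. The paper observes that $r_G^k(v) - r_G^k(v') = w_{G,S}^k(v) - w_{G,S}^k(v')$ for $S = \{v, v'\}$ (the cross terms $e_v^\top A^k e_{v'}$ cancel by symmetry of $A$), so that $R_G(v) = R_G(v')$ forces rows $v$ and $v'$ of the walk matrix $\wm_{G,S}$ to coincide; full rank of $\wm_{G,S}$ (Lemma~\ref{lem:rkWM}) then gives $v = v'$. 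Your argument instead writes $e_{v'} = p(A)e_v$ and exploits the field structure of $\bQ[z]/(P_G)$ to factor $p^2 - 1$. The paper's route is shorter and stays uniformly within the walk-matrix framework already set up in Section~\ref{ss:wm}; yours is more overtly algebraic and would transfer to settings where one has a cyclic module over a field but no convenient ``$S$-walk'' interpretation.
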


\begin{corollary}
    If Conjectures A and B are true, then for almost all $G$, every vertex of $G$ is decisive.
\end{corollary}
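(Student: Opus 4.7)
The plan is to extract the characteristic polynomial $P_G$ from either invariant $W_G(v)$ or $R_G(v)$ as the minimal polynomial of a linear recurrence, then use the determined-by-spectrum hypothesis to pin down $H$ up to isomorphism, and finally match the distinguished vertex using linear independence of eigencomponents (for $W$) or a short Galois-theoretic argument (for $R$).

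I would first derive two cyclicity consequences of the irreducibility of $P_G$. Since $\mathbb{Q}$ has characteristic zero, $P_G$ is separable, so $A$ has $n$ pairwise distinct eigenvalues $\mu_1,\ldots,\mu_n$ with unit eigenvectors $x_1,\ldots,x_n$. The minimal polynomial of $A$ on the cyclic subspace $\langle e_v, Ae_v,\ldots\rangle$ divides the irreducible $P_G$ and hence has degree $1$ or $n$; degree $1$ would force $e_v$ to be an eigenvector and $v$ to be isolated, which is impossible in a connected graph with $n\geq 2$. So every $e_v$ is cyclic, and expanding $e_v=\sum_i (x_i)_v x_i$ in the eigenbasis yields $(x_i)_v\ne 0$ for all $i$ and $v$. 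The same argument applied to $\mathbf{1}$ (the degree-$1$ case would make $G$ regular and contribute a rational eigenvalue, contradicting irreducibility for $n\geq 2$) gives $\mathbf{1}^T x_i\ne 0$ for all $i$. Plugging into the expansions $r_G^k(v)=\sum_i ((x_i)_v)^2 \mu_i^k$ and $w_G^k(v)=\sum_i (\mathbf{1}^T x_i)(x_i)_v \mu_i^k$, both sequences $R_G(v)$ and $W_G(v)$ are linear recurrences of minimal polynomial exactly $P_G$.

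Suppose now that $W_G(v)=W_H(u)$ (respectively $R_G(v)=R_H(u)$). The right-hand sequence then also has minimal polynomial $P_G$, but by Cayley--Hamilton it must divide $P_H$; since $P_G$ and $P_H$ are monic of degree $n$, we conclude $P_G=P_H$. The determined-by-spectrum assumption provides an isomorphism $\psi\colon G\to H$, and writing $u'=\psi^{-1}(u)$ it suffices to prove that $v=u'$ whenever $v,u'\in V(G)$ satisfy $W_G(v)=W_G(u')$ or $R_G(v)=R_G(u')$.

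For the $W$-invariant this is immediate: linear independence of the sequences $(\mu_i^k)_k$ in the common expansion forces $(\mathbf{1}^T x_i)(x_i)_v=(\mathbf{1}^T x_i)(x_i)_{u'}$ for each $i$, and dividing by $\mathbf{1}^T x_i\ne 0$ yields $(x_i)_v=(x_i)_{u'}$ for all $i$, so $e_v=e_{u'}$. The main obstacle is the $R$-invariant, where only the squared entries $((x_i)_v)^2=((x_i)_{u'})^2$ are forced, leaving signs $\epsilon_i\in\{\pm 1\}$ with $(x_i)_v=\epsilon_i (x_i)_{u'}$. To pin them down I would invoke the Galois group $\Gamma$ of the splitting field of $P_G$ over $\mathbb{Q}$, which by irreducibility acts transitively on $\{\mu_1,\ldots,\mu_n\}$. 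For $\sigma\in\Gamma$ with $\sigma(\mu_i)=\mu_{\pi(i)}$, the image $\sigma(x_i)$ is a unit eigenvector for $\mu_{\pi(i)}$, hence $\sigma(x_i)=\tau_i x_{\pi(i)}$ for some $\tau_i\in\{\pm 1\}$, and applying $\sigma$ to the ratio $\epsilon_i=(x_i)_v/(x_i)_{u'}$ gives $\sigma(\epsilon_i)=\epsilon_{\pi(i)}$. Since $\epsilon_i\in\mathbb{Q}$ is also fixed by $\sigma$, we obtain $\epsilon_{\pi(i)}=\epsilon_i$ for every $\pi$, and transitivity forces all the $\epsilon_i$ to equal a common sign $\epsilon\in\{\pm 1\}$. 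The value $\epsilon=-1$ is ruled out by the non-negativity of standard basis vectors in $e_v=\epsilon e_{u'}$, so $\epsilon=+1$ and $v=u'$.
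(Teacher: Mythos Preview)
Your argument is correct and in effect proves the underlying Theorem~\ref{thm:decisive}, from which the corollary is immediate once the two conjectures are assumed. The first half---showing that under irreducibility the minimal recurrence polynomial of $W_G(v)$ or $R_G(v)$ is exactly $P_G$, whence $P_G=P_H$ and $G\cong H$---parallels the paper's Lemma~\ref{lem:chiP}. Where you diverge is in separating $v$ from $u'$ inside~$G$.

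The paper handles both cases uniformly via the walk-matrix rank: since $P_G$ is irreducible, $\rank\wm_{G,S}=n$ for every nonempty $S$ (Lemma~\ref{lem:rkWM}), so the rows of $\wm_{G,S}$ are pairwise distinct. Taking $S=V(G)$ gives injectivity of $v\mapsto W_G(v)$; for $R$ the paper observes that $r^k_G(v)-r^k_G(u')=w^k_{G,\{v,u'\}}(v)-w^k_{G,\{v,u'\}}(u')$ and applies the same rank argument with $S=\{v,u'\}$. Your route instead reads off eigencomponents from the spectral expansion, which is equally clean for $W$ but for $R$ leaves an undetermined sign $\epsilon_i$ per eigenvalue; the Galois transitivity argument forcing all $\epsilon_i$ equal is a genuinely different and attractive alternative. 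The paper's trick has the virtue of treating $W$ and $R$ symmetrically and avoiding eigenvectors altogether; yours makes transparent \emph{why} irreducibility is the right hypothesis (it makes the eigenvalues a single Galois orbit).

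One technical point deserves tightening. The step ``$\sigma(x_i)=\tau_i x_{\pi(i)}$ with $\tau_i\in\{\pm1\}$'' is delicate: the unit normalisation may take the entries of $x_i$ outside the splitting field $K$, and an abstract $\sigma\in\Gamma$ need not preserve Euclidean norm. The clean fix is to work instead with the spectral projectors $E_i=x_ix_i^\top=\prod_{j\ne i}(A-\mu_jI)/(\mu_i-\mu_j)$, whose entries lie in $K$ and satisfy $\sigma(E_i)=E_{\pi(i)}$ entrywise. Then $\epsilon_i=(E_i)_{vu'}/(E_i)_{u'u'}\in K$, and $\sigma(\epsilon_i)=\epsilon_{\pi(i)}$ follows directly; since $\epsilon_i\in\{\pm1\}\subset\mathbb{Q}$ is fixed by $\sigma$, transitivity gives a common sign and the rest of your argument goes through unchanged.
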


To prepare the proof of Theorem \ref{thm:decisive}, we recall some basic stuff
on linear recurrences and present the concept of the walk matrix of a graph.

\subsection{Linear recurrences}\label{ss:recurr}

The following concepts make sense for any field, and we will tacitly consider the rationals.
A \emph{homogeneous linear recurrence relation of order $r$ with constant coefficients
$c_1,\ldots,c_r$} is an equation of the form
\begin{equation}
  \label{eq:lin-recurr}
y_t=c_1y_{t-1}+\cdots+c_ry_{t-r}.
\end{equation}
An infinite sequence $y_0,y_1,\ldots$ satisfying the recurrence relation \refeq{lin-recurr},
is called a \emph{linear recurrence sequence}. The \emph{characteristic polynomial} of
the recurrence relation \refeq{lin-recurr} is defined by $\chi(z)=z^r-c_1z^{r-1}-\cdots-c_{r-1}z-c_r$.

\begin{lemma}[see \cite{EverestPSW03}]\label{lem:recurr}
Let $Y=(y_t)_{t\ge0}$ be a linear recurrence sequence.
Suppose that \refeq{lin-recurr} is a linear recurrence relation of the minimum possible order $r$
satisfied by $Y$, and $\chi$ is the characteristic polynomial of \refeq{lin-recurr}.
For every linear recurrence relation $L'$ with characteristic polynomial $\chi'$,
the following two conditions are equivalent:
\begin{itemize}
\item
  $Y$ satisfies $L'$;
\item
   $\chi$ divides $\chi'$.
\end{itemize}
\end{lemma}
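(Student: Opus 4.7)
The plan is to recast the problem in the language of the shift operator, because then the set of recurrences satisfied by $Y$ becomes an ideal in the polynomial ring and the lemma is just the statement that ideals in $\bQ[z]$ are principal.

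First, I would introduce the shift operator $E$ acting on sequences by $(Ey)_t=y_{t+1}$. For any polynomial $p(z)=z^r-c_1z^{r-1}-\cdots-c_r$ over $\bQ$, the sequence $Y$ satisfies the recurrence with characteristic polynomial $p$ precisely when $p(E)Y=0$ (interpreted as the zero sequence). This reformulation converts both bullets into statements about polynomial annihilators.

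Next, I would argue that the \emph{annihilator}
\begin{equation*}
I_Y=\Set{p\in\bQ[z]}{p(E)Y=0}
\end{equation*}
is an ideal of $\bQ[z]$: closure under sums is immediate from the linearity of $p\mapsto p(E)$, and if $p\in I_Y$ then $(qp)(E)Y=q(E)\,p(E)Y=0$ for every $q\in\bQ[z]$. Since $\bQ[z]$ is a principal ideal domain, $I_Y$ is generated by a unique monic polynomial $m$; equivalently, $p\in I_Y$ if and only if $m$ divides~$p$.

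It then remains to identify $m$ with $\chi$. Because $\chi\in I_Y$, we have $m\mid\chi$, so $\deg m\le r$. Conversely, writing $m(z)=z^s-b_1z^{s-1}-\cdots-b_s$, the relation $m(E)Y=0$ yields a linear recurrence of order $s$ satisfied by $Y$; minimality of $r$ forces $s\ge r$. Hence $\deg m=\deg\chi$, and since both are monic divisors of one another, $m=\chi$. Substituting this into the characterization $p\in I_Y\iff m\mid p$ gives precisely the equivalence claimed: $Y$ satisfies $L'$ iff $\chi'\in I_Y$ iff $\chi\mid\chi'$.

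There is no real obstacle here; the only point requiring a bit of care is noticing that the minimum-order recurrence is automatically unique (because $\chi$ must coincide with the monic generator of the principal ideal $I_Y$), which is what makes the phrase ``the characteristic polynomial of the minimal recurrence'' well defined and lets the equivalence be stated in the clean divisibility form.
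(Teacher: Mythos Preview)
Your argument is correct: rephrasing recurrences via the shift operator $E$ turns the set of characteristic polynomials annihilating $Y$ into an ideal of $\bQ[z]$, and principality of that ideal gives exactly the divisibility criterion, with the monic generator forced to equal $\chi$ by the minimality hypothesis. The paper does not supply its own proof of this lemma---it is quoted as a standard fact with a reference to \cite{EverestPSW03}---so there is no in-paper argument to compare against; your shift-operator/annihilator-ideal proof is the textbook route and would be entirely appropriate here.
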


It follows that \refeq{lin-recurr} with minimum possible $r$ is uniquely determined by $Y$,
and we will speak of the \emph{order} $r$
of $Y$ and of the \emph{characteristic polynomial} $\chi_Y$ of~$Y$.

\subsection{Walk matrix}\label{ss:wm}

Without loss of generality we suppose that the vertex set of an $n$-vertex graph
is $\{1,\ldots,n\}$. Given a vertex $x\in V(G)$ and a set of vertices $S\subseteq V(G)$,
let $w_{G,S}^k(x)$ denote the number of walks of length $k$ in $G$ starting at $x$ and
terminating at a vertex in $S$. Following \cite{Godsil12,LiuS22}, we consider
the $n\times n$ matrix $\wm_{G,S}=(m_{x,k})_{1\le x\le n,\,0\le x<n}$
with $m_{x,k}=w_{G,S}^k(x)$ and call it the \emph{walk matrix} of a pair $(G,S)$.
In particular, $\wm_G=\wm_{G,V(G)}$ is the \emph{walk matrix} of a graph~$G$.
If $A$ denotes the adjacency matrix of $G$ and $\jj_S$ is the characteristic vector of $S$,
then the columns of $\wm_{G,S}$ are
$$
\jj_S,A\,\jj_S,A^2\,\jj_S,\ldots,A^{n-1}\,\jj_S.
$$

Let $P_G(z)=z^n-b_1z^{n-1}-\cdots-b_{n-1}z-b_n$ be the characteristic polynomial of $A$.
By the Cayley–Hamilton theorem,
$$
A^n \,\jj_S=b_1A^{n-1} \,\jj_S +\cdots+ b_{r-1}A \,\jj_S + b_n \,\jj_S.
$$
Multiplying both sides of this equality
by $A^{t-n}$ from the left, we see that the vectors of the walk counts
$A^t\,\jj_S=(w_{G,S}^t(1),\ldots,w_{G,S}^t(n))^\top$ satisfy the multidimensional recurrence relation
$$
A^t\,\jj_S=b_1A^{t-1}\,\jj_S+\cdots+b_nA^{t-n}\,\jj_S
$$
of order $n$. Let $x\in V(G)$. As a consequence, the sequence
$$
W_{G,S}(x)=(w_{G,S}^0(x),w_{G,S}^1(x),w_{G,S}^2(x),\ldots)
$$
satisfies the recurrence relation
\begin{equation}
  \label{eq:recurrS}
w_t=b_1w_{t-1}+\cdots+b_nw_{t-n}
\end{equation}
and, therefore, it
is a linear recurrence sequence of order at most $n$.
Denote the characteristic polynomial of this sequence by~$\chi_{G,S,x}$.

\begin{lemma}\label{lem:chiP}
  Let $G$ be a graph with at least two vertices.
  If $P_G$ is irreducible, then $\chi_{G,S,x}=P_G$ for all non-empty $S\subseteq V(G)$ and all $x\in V(G)$.
\end{lemma}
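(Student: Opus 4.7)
The plan is to combine the recurrence framework built up just before the lemma statement with Lemma~\ref{lem:recurr}, then do a short case analysis to kill the trivial alternative. Recall that the derivation leading to \refeq{recurrS} already shows that $W_{G,S}(x)$ satisfies the order-$n$ linear recurrence whose characteristic polynomial is exactly $P_G$. By Lemma~\ref{lem:recurr}, this forces the minimal characteristic polynomial $\chi_{G,S,x}$ to divide $P_G$ in $\mathbb{Q}[z]$.

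Next I would invoke irreducibility. Since $P_G$ is monic with integer coefficients and irreducible over $\mathbb{Q}$, its only monic divisors are $1$ and $P_G$ itself. Therefore either $\chi_{G,S,x}=P_G$, which is what we want, or $\chi_{G,S,x}=1$. The latter case corresponds to the trivial recurrence of order $0$, whose only solution is the zero sequence, so I must show that $W_{G,S}(x)$ is not identically zero.

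The sequence $W_{G,S}(x)$ is zero precisely when no vertex of $S$ is reachable from $x$ in $G$. So I reduce to checking that, under the hypotheses of the lemma, every vertex $x\in V(G)$ can reach some vertex of any non-empty $S$. This is where I would use a small structural observation: if $G$ were disconnected with components $G_1,\dots,G_m$ (each on at least one vertex), then $P_G=\prod_i P_{G_i}$ would be a non-trivial factorisation, contradicting the irreducibility of $P_G$ (here the assumption $n\ge 2$ is needed to ensure at least one factor of degree $\ge 1$ alongside another; a single-vertex component contributes the factor $z$). Hence $G$ is connected, so for any $x\in V(G)$ and any $s\in S$ there is a walk from $x$ to $s$, giving $w_{G,S}^{k}(x)\ge 1$ for $k=\mathrm{dist}(x,s)$, and the zero-sequence alternative is ruled out.

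The only real obstacle is this last bookkeeping step excluding $\chi_{G,S,x}=1$; once connectedness of $G$ is extracted from the irreducibility of $P_G$, the rest of the argument is a direct application of Lemma~\ref{lem:recurr} and the basic divisibility structure of an irreducible polynomial. I do not expect any further subtleties, and the proof should fit comfortably in a few lines.
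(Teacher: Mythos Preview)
Your proposal is correct and follows essentially the same approach as the paper: both use Lemma~\ref{lem:recurr} to get $\chi_{G,S,x}\mid P_G$, then invoke irreducibility and rule out the degenerate low-degree divisor via connectedness of~$G$. The only cosmetic difference is that you exclude $\chi_{G,S,x}=1$ by showing the sequence is non-zero, whereas the paper excludes $\deg\chi_{G,S,x}=0$ by showing the sequence is non-constant (using that irreducibility forces $n\ge 3$); your version is arguably the more economical of the two.
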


\begin{proof}
  Since the sequence $W_{G,S}(x)$ satisfies the recurrence relation \refeq{recurrS} and
  $P_G$ is the characteristic polynomial of \refeq{recurrS}, we conclude by Lemma \ref{lem:recurr}
  that $\chi_{G,S,x}$ divides $P_G$. The irreducibility of $P_G$ implies that $G$ is connected
  and has at least three vertices. It follows that
  the sequence $W_{G,S}(x)$ is non-constant and, hence, the polynomial $\chi_{G,S,x}$ has degree
  at least 1. The equality $\chi_{G,S,x}=P_G$ now follows from the irreducibility of~$P_G$.
\end{proof}

In general, the recurrence sequence $W_{G,S}(x)$ can have order less than $n$.
Indeed, let $r\le\rank\wm_{G,S}$ be the smallest number such that the vector
$A^r\,\jj_S$ belongs to the span of the vectors $\jj_S,A\,\jj_S,\ldots,A^{r-1}\,\jj_S$ over $\bQ$.
Specifically, let
\begin{equation}
  \label{eq:AAA}
A^r\,\jj_S=a_r\,\jj_S+a_{r-1}A\,\jj_S+\cdots+a_1A^{r-1}\,\jj_S
\end{equation}
for rationals $a_i$. Multiplying both sides of this equality
by $A^{t-r}$ from the left, we obtain the multidimensional recurrence relation
\begin{equation}
  \label{eq:recurr}
A^t\,\jj_S=a_1A^{t-1}\,\jj_S+\cdots+a_rA^{t-r}\,\jj_S
\end{equation}
of order $r$ with characteristic polynomial
\begin{equation}
  \label{eq:main-poly}
M_G(z)=z^r-a_1z^{r-1}-\cdots-a_{r-1}z-a_r.
\end{equation}
It follows by induction that $A^t\,\jj$ belongs to the span of $\jj_S,A\,\jj_S,\ldots,A^{r-1}\,\jj_S$
for all $t\ge r$. From here we conclude that $r=\rank\wm_{G,S}$.\footnote{%
In the case of $S=V(G)$, note that $r$ as well as the sequence $a_1,\ldots,a_r$
are isomorphism-invariant parameters of $G$.
This, in particular, implies that two graphs are walk-equivalent iff their walk matrices
are obtainable from each other by permutation of rows.}
These considerations lead to an alternative proof of the following fact stated in \cite[Corollary~3.8]{LiuS22}.

\begin{lemma}[Liu and Siemons \cite{LiuS22}]\label{lem:rkWM}
  Let $G$ be a graph on $n$ vertices.
  If $P_G$ is irreducible, then $\rank\wm_{G,S}=n$ for all non-empty $S\subseteq V(G)$.
\end{lemma}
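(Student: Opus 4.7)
The plan is to deduce the rank statement from what we have already established about the scalar recurrence sequence $W_{G,S}(x)$ attached to a single vertex $x$. By the discussion preceding the statement, the vector sequence $A^t\jj_S$ satisfies the matrix recurrence \refeq{recurr} of order $r=\rank\wm_{G,S}$, whose characteristic polynomial is $M_G(z)$ defined in \refeq{main-poly}. Reading off a single coordinate, the scalar sequence $W_{G,S}(x)$ also satisfies a linear recurrence of order $r$ with characteristic polynomial $M_G$.

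Now I would invoke Lemma \ref{lem:recurr}: since $M_G$ is a characteristic polynomial of some linear recurrence satisfied by $W_{G,S}(x)$, the minimal characteristic polynomial $\chi_{G,S,x}$ of that sequence divides $M_G$. In particular, $\deg\chi_{G,S,x}\le\deg M_G=r$.

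Under the hypothesis that $P_G$ is irreducible, Lemma \ref{lem:chiP} tells us that $\chi_{G,S,x}=P_G$ for every $x\in V(G)$ and every nonempty $S$, so $\deg\chi_{G,S,x}=n$. Combining this with the previous inequality gives $n\le r$, while on the other hand $r\le n$ since $\wm_{G,S}$ has exactly $n$ columns. Hence $r=\rank\wm_{G,S}=n$, as required.

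There is no real obstacle here: once Lemmas \ref{lem:recurr} and \ref{lem:chiP} are in place, the argument is essentially a matching of degrees between the minimal polynomial of the scalar recurrence $W_{G,S}(x)$ (which equals $P_G$, hence has degree $n$) and the polynomial $M_G$ governing the minimal matrix recurrence for $A^t\jj_S$ (which has degree $r$). The only mild subtlety is to remember that a coordinate projection of a vector recurrence of order $r$ yields a scalar recurrence of order at most $r$, so that Lemma \ref{lem:recurr} applies to give $\chi_{G,S,x}\mid M_G$ rather than equality.
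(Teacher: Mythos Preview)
Your proof is correct and follows essentially the same route as the paper: project the vector recurrence \refeq{recurr} onto a coordinate to see that $W_{G,S}(x)$ satisfies a recurrence with characteristic polynomial $M_G$, apply Lemma~\ref{lem:recurr} to get $\chi_{G,S,x}\mid M_G$, and then use Lemma~\ref{lem:chiP} to force $\deg M_G\ge n$. The only cosmetic difference is that the paper phrases the conclusion as ``$P_G$ divides $M_G$'' before comparing degrees, whereas you go directly to the degree inequality.
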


\begin{proof}
  As follows from Eq.~\refeq{recurr}, the sequence $W_{G,S}(x)$ satisfies the recurrence relation
  $$
  w_t=a_1w_{t-1}+\cdots+a_rw_{t-r}.
  $$
  Since $M_G$ is the characteristic polynomial of this relation, Lemma \ref{lem:recurr}
  implies that $\chi_{G,S,x}$ divides $M_G$. Since $P_G$ is irreducible, $\chi_{G,S,x}=P_G$
  by Lemma \ref{lem:chiP}. It follows that $M_G$ is divisible by $P_G$ and, hence, the degree
  of $M_G$ is no smaller than the degree of $P_G$, that is, $r\ge n$. We conclude that  $\rank\wm_{G,S}=n$.
\end{proof}

We illustrate the material of this subsection in Example \ref{ex:dist-size} below,
where we also mention its relationship to known concepts of spectral graph theory.

\subsection{Proof of Theorem \ref{thm:decisive}}

Suppose that an $n$-vertex graph $G$ is determined by spectrum and its characteristic polynomial
$P_G$ is irreducible (which implies that $G$ is connected). For a vertex $v\in V(G)$,
assume that $W_G(v)=W_H(u)$ for a vertex $u$ in a connected graph $H$ with $n$ vertices.
Using the analysis in the preceding subsection in the special case of $S=V(G)$, we see that
$W_G(v)$, as well as $W_H(u)$, is a linear recurrent sequence.
Let $\chi_{G,v}=\chi_{G,V(G),v}$ and $\chi_{H,u}=\chi_{H,V(H),u}$ be the characteristic polynomials
of these sequences. Recall that $\chi_{G,v}=\chi_{H,u}$ as the sequences coincide.

We have $\chi_{G,v}=P_G$ by Lemma \ref{lem:chiP}. Lemma \ref{lem:recurr}
implies that $\chi_{H,u}$ divides $P_H$. It follows that $P_G$ divides $P_H$ and, therefore,
$P_G=P_H$. Since $G$ is determined by spectrum, we conclude that $G\cong H$.

By Lemma \ref{lem:rkWM}, $\rank\wm_G=n$.
It follows that the rows of $\wm_G$ are pairwise different. Therefore,
a unique isomorphism from $G$ to $H$ maps $v$ to $u$, which proves that
the vertex $v$ is $W$-decisive.

Before proving the $R$-decisiveness, note that the framework of the preceding subsection
applies also to the vertex invariant $R_G(v)$. Indeed, given a vertex $x\in V(G)$,
let us set $S=\{x\}$. In this case, $w_{G,S}^k(x)=r_G^k(x)$.
The recurrence relation \refeq{recurrS} for $w_t=w_{G,S}^t(x)$, therefore, implies that
the vertex invariant $R_G(x)=(r_G^0(x),r_G^1(x),r_G^2(x),\ldots)$ is a linear recurrence sequence.
Let $\eta_{G,x}=\chi_{G,\{x\},x}$ denote the characteristic polynomial of this sequence.

Assume now that $R_G(v)=R_H(u)$. Using Lemma \ref{lem:chiP}, we conclude like above
that $\eta_{H,u}=\eta_{G,v}=P_G$ divides $P_H$ and, hence, $P_G=P_H$. Since $G$ is determined by spectrum,
we have $G\cong H$.

The graphs $G$ and $H$ can now be identified. The assumption $R_G(v)=R_H(u)$ is therewith converted
in $R_G(v)=R_G(u)$ for a vertex $u\in V(G)$. Set $S=\{u,v\}$. Note $r^k_G(v)=r^k_G(u)$
if and only if $w_{G,S}^k(v)=w_{G,S}^k(u)$. By Lemma \ref{lem:rkWM},
$\rank\wm_{G,S}=n$. It follows that all rows of $\wm_{G,S}$ are pairwise different.
Therefore, the equality $R_G(v)=R_G(u)$ implies that $u=v$.
This proves that the vertex $v$ is $R$-decisive.

\section{Decisiveness within a random graph}\label{s:canon}

The argument at the very end of the proof of Theorem \ref{thm:decisive} shows that
if the characteristic polynomial of a graph $G$ is irreducible, then $G$ does not contain any
pair of closed-walk-equivalent vertices. Under Conjecture B, this is therefore true
for almost all $G$. We here prove this fact unconditionally.

Let $[n]=\{1,\ldots,n\}$. The Erd\H{o}s-R\'enyi random graph $G(n,p)$ is a graph on the vertex set $[n]$
where each pair of distinct vertices $u$ and $v$ is adjacent with probability $p$ independently
of the other pairs. Thus, $G(n,1/2)$ is a random graph chosen uniformly at random from the set of
all graphs on~$[n]$.

\begin{theorem}
  Let $G=G(n,1/2)$. With probability $1-O(n^{-1/2})$,
  every two distinct vertices $u$ and $v$ of $G$ are distinguished by the counts of closed walks
  of length at most 4, that is, $r^k_{G}(u)\neq r^k_{G}(v)$ for at least one $k\in\{2,3,4\}$.
\label{thm:closed-non-equiv-random}
\end{theorem}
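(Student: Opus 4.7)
By a union bound and vertex-symmetry, it suffices to prove
\[
\Pr[r^2_G(u) = r^2_G(v),\ r^3_G(u) = r^3_G(v),\ r^4_G(u) = r^4_G(v)] = O(n^{-5/2})
\]
for a fixed pair $u, v \in [n]$. The elementary identities
$r^2_G(v) = d(v)$ (degree), $r^3_G(v) = 2 t(v)$ (twice the triangle count through $v$), and
$r^4_G(v) = d(v)^2 + \sum_{w \neq v} c(v,w)^2$ (with codegree $c(v,w) = |N(v) \cap N(w)|$) translate the three equalities into combinatorial constraints on degrees, triangles, and squared codegrees, respectively.

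The plan is a two-stage exposure of the random edges. First, reveal every edge incident to $u$ or $v$. This partitions $W := [n]\setminus\{u,v\}$ into $A = N(u)\setminus(N(v)\cup\{v\})$, $B = N(v)\setminus(N(u)\cup\{u\})$, $C = N(u)\cap N(v)$, and a remainder $R$. The event $r^2_G(u) = r^2_G(v)$ now coincides with $|A|=|B|$, which holds with probability $O(1/\sqrt n)$ by a local central limit theorem, and Chernoff bounds let me further restrict to the typical event $|A|,|B|,|C|,|R|=\Theta(n)$. Second, reveal every edge within $A\cup B\cup C$, every edge within $R$, and every edge between $R$ and $C$. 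The only remaining randomness is then the $\Theta(n^2)$ independent $\mathrm{Bernoulli}(1/2)$ edges between $R$ and $A\cup B$, and direct expansion gives
\[
r^3_G(u) - r^3_G(v) = 2\bigl[e(A)+e(A,C)-e(B)-e(B,C)\bigr],
\]
which is now \emph{fully determined}, as well as
\[
r^4_G(u) - r^4_G(v) = \Phi + \sum_{w\in R}(\alpha_w-\beta_w)(\alpha_w+\beta_w+2\gamma_w),
\]
where $\alpha_w, \beta_w, \gamma_w$ count the neighbors of $w$ in $A, B, C$ respectively, $\gamma_w$ is determined by the exposure, and $\Phi$ collects the other determined terms. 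The pairs $(\alpha_w,\beta_w)$ for different $w\in R$ depend on disjoint edge sets, hence are independent across $w$.

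The remaining task is a pair of anti-concentration estimates. The sum $e(A)+e(A,C)-e(B)-e(B,C)$ is a signed sum of $\Theta(n^2)$ independent Bernoullis, so the Erd\H os--Littlewood--Offord inequality gives $\Pr[r^3_G(u) = r^3_G(v) \mid \text{first exposure}] = O(1/n)$. Separately, a computation using $|A|=|B|=\Theta(n)$ and $\gamma_w=\Theta(n)$ yields $\mathrm{Var}(X_w) = \Theta(n^3)$ for $X_w = (\alpha_w-\beta_w)(\alpha_w+\beta_w+2\gamma_w)$, so $\sum_{w\in R} X_w$ is a sum of $\Theta(n)$ independent mean-zero summands with total standard deviation $\Theta(n^2)$, and a quantitative local CLT yields $\Pr[\sum_w X_w = -\Phi] = O(1/n^2)$. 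Since the two events depend on disjoint edge sets, they combine to $O(n^{-3})$; multiplying by $O(1/\sqrt n)$ from the first exposure and by $\binom{n}{2}$ from the union bound gives the required $O(n^{-1/2})$ bound. The principal technical obstacle is the anti-concentration of the \emph{quadratic} form $\sum_{w\in R} X_w$: one must confirm that each $X_w$ is a nondegenerate integer random variable, that the variance is genuinely $\Theta(n^3)$ (rather than being spoiled by cancellation between the $\alpha_w^2-\beta_w^2$ and $2\gamma_w(\alpha_w-\beta_w)$ parts), and that the $X_w$ do not collectively live on a common thin sublattice of $\mathbb Z$—all of which are needed for a local CLT on the sum to produce the $O(1/n^2)$ point-mass bound.
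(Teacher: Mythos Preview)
Your two-stage exposure scheme is exactly the one the paper uses: expose all edges incident to $u,v$; then expose every remaining edge except those between $R$ and $A\cup B$. The paper likewise extracts $O(n^{-1/2})$ from $|A|=|B|$ via Erd\H os--Littlewood--Offord, restricts to standard neighbourhood sizes by Chernoff, and gets $O(n^{-1})$ for the triangle equality $e(A)+e(A,C)=e(B)+e(B,C)$ by ELO on $\Theta(n^2)$ Bernoullis. Up to this point your sketch and the paper coincide.

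The one substantive divergence is the anticoncentration of $r^4_G(u)-r^4_G(v)$. The paper does not group the remaining quadratic as $\sum_{w\in R}X_w$ and does not attempt an $O(n^{-2})$ bound; instead it observes that the difference is a degree-$2$ polynomial in the $\Theta(n^2)$ unexposed edge indicators $\eta_{w,a}$ ($w\in R$, $a\in A\cup B$) that robustly depends on each variable, and applies the quadratic Littlewood--Offord theorem of Kwan and Sauermann to get a uniform $O((n^2)^{-1/2})=O(n^{-1})$ point-probability bound. This already suffices: $O(n^{-1/2})\cdot O(n^{-1})\cdot O(n^{-1})=O(n^{-5/2})$. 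Your route aims for a stronger $O(n^{-2})$ via a local CLT on $\sum_{w\in R}X_w$, which would indeed give a better final estimate, but you correctly flag this as ``the principal technical obstacle'' and do not resolve it. Verifying a local CLT for a sum of independent but non-identically distributed integer variables $X_w=(\alpha_w-\beta_w)(\alpha_w+\beta_w+2\gamma_w)$, with $\gamma_w$ varying across $w$, requires checking aperiodicity and controlling the characteristic function uniformly; this is doable (the support of each $X_w$ contains values differing by $2$ and by the odd number $2\gamma_w+1$, so the span is $1$), but it is real work that your sketch omits. Since the weaker $O(n^{-1})$ bound already suffices, the cleaner route is either to cite Kwan--Sauermann as the paper does, or at least to fall back to it rather than leave the gap open.

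One minor wording issue: your phrase ``the two events depend on disjoint edge sets'' is not quite accurate, since $\Phi$ in your $r^4$ expression depends on the second-exposure edges that also determine the $r^3$ event. The correct statement (which you effectively use) is that, conditional on the second exposure, the $r^3$ event is decided while the $r^4$ event still has probability at most your anticoncentration bound; the tower property then legitimately gives the product.
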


Combined with Lemma \ref{lem:cosp}, Theorem \ref{thm:closed-non-equiv-random} immediately implies
the following fact.

\begin{corollary}\label{cor:no-cospectral}
  With probability $1-O(n^{-1/2})$, the random graph $G(n,1/2)$ contains no pair of cospectral vertices.
\end{corollary}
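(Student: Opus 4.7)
The plan is to use the first moment method (equivalently, union bound). Write $p_{uv}$ for the probability that a fixed pair $u\neq v$ satisfies $r^k_G(u)=r^k_G(v)$ for $k=2,3,4$; if $p_{uv}=O(n^{-5/2})$, then the expected number of bad pairs is $\binom{n}{2}\cdot O(n^{-5/2})=O(n^{-1/2})$, and Markov's inequality gives the claim.

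Fix $u\neq v$ and reveal the edges from $\{u,v\}$ to $R:=V\setminus\{u,v\}$. This determines a partition $R=A\cup B\cup S\cup T$ with $A=N(u)\setminus(N(v)\cup\{v\})$, $B=N(v)\setminus(N(u)\cup\{u\})$, $S=N(u)\cap N(v)$, and $T$ the rest. The event $r^2_G(u)=r^2_G(v)$ is equivalent to $|A|=|B|=:k$, and a local CLT for the symmetric walk $\deg_G(u)-\deg_G(v)$ gives it probability $O(n^{-1/2})$. I then condition on the specific neighborhoods; the typical regime has $k,|S|=\Theta(n)$, and atypical profiles (including the twin case $k=0$) have exponentially small probability and can be absorbed into the union bound.

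Using $r^3_G(w)=2\,e(N(w))$, one computes
\[r^3_G(u)-r^3_G(v)=2\bigl[e(A)+e(A,S)-e(B)-e(B,S)\bigr],\]
which is a signed $\pm 1$ sum of $k(k-1)+2ks=\Theta(n^2)$ independent Bernoulli($1/2$) edge indicators; Littlewood--Offord therefore yields $\Pr[r^3_G(u)=r^3_G(v)\mid N(u),N(v)]=O(n^{-1})$.

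For $r^4$ I condition further on all edges with both endpoints in $A\cup B\cup S$ (this subsumes the $r^3$-event). Starting from $r^4_G(w)=\deg(w)^2+\sum_{y\neq w}c(w,y)^2$ with $c(w,y)=|N(w)\cap N(y)|$, the only remaining random edges that can change $r^4_G(v)-r^4_G(u)$ are those from each $t\in T$ to $A\cup B\cup S$, and these edge sets are disjoint across different $t$. A direct computation gives
\[r^4_G(v)-r^4_G(u)=F+\sum_{t\in T}\Delta_t,\qquad \Delta_t=(Z^B_t-Z^A_t)(Z^B_t+Z^A_t+2Z^S_t),\]
where $F$ is determined by the conditioning and $Z^X_t:=|N(t)\cap X|$ for $X\in\{A,B,S\}$. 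Thus the $\Delta_t$ are i.i.d.\ integer-valued with mean $0$ and variance of order $n^3$, so $\sum_{t\in T}\Delta_t$ has standard deviation $\Theta(n^2)$; a local CLT (the $\Delta_t$ have aperiodic support and $|T|=\Theta(n)$ is large) yields $\Pr[\sum_t\Delta_t=-F]=O(n^{-2})$, stronger than needed. Combining, $p_{uv}=O(n^{-1/2})\cdot O(n^{-1})\cdot O(n^{-1})=O(n^{-5/2})$, completing the plan. The principal obstacle is exactly this $r^4$ step: $\Delta_t$ is a quadratic polynomial in Bernoullis, so Littlewood--Offord is inapplicable, but the decomposition into independent summands across $T$---enabled by the graph-theoretic identity for $r^4$ and the right layered conditioning---reduces the problem to a standard local CLT. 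The case $u\sim v$ is analogous with only minor bookkeeping adjustments.
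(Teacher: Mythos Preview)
Your argument is correct and follows the same global outline as the paper: the corollary is deduced (via the equivalence of cospectral and closed-walk-equivalent vertices) from the statement that $r^k_G(u)=r^k_G(v)$ for $k\le 4$ has probability $O(n^{-5/2})$ for a fixed pair, and this is established by conditioning on the neighbourhoods and treating $r^2,r^3,r^4$ in turn. Your $r^2$ and $r^3$ steps coincide with the paper's.

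The genuine difference is the $r^4$ step. The paper conditions on everything except the edges from $T$ to $A\cup B$, writes $r^4_G(1)-r^4_G(2)$ as a quadratic polynomial in these $\Theta(n^2)$ Bernoulli variables, verifies the ``robust dependence'' hypothesis, and invokes the Kwan--Sauermann quadratic Littlewood--Offord theorem to obtain $O(n^{-1})$. You instead keep the $T\to S$ edges random as well, obtain the clean decomposition $r^4_G(v)-r^4_G(u)=F+\sum_{t\in T}\Delta_t$ into independent summands indexed by $T$, and appeal to a local CLT. This is a more elementary route in that it avoids the Kwan--Sauermann black box; what it buys is a self-contained argument, and (as you note) a nominally stronger $O(n^{-2})$ bound. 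The price is that your local CLT is for a triangular array (the law of $\Delta_t$ varies with $n$ and its maximal point mass is only $O(n^{-1/2})$, not bounded below by a constant), so the one-line invocation is not quite rigorous. This is easily repaired without a local CLT: condition additionally on the $T\to A$ and $T\to B$ edges, so that $\sum_t\Delta_t=\mathrm{const}+2\sum_t c_t Z^S_t$ with $c_t:=Z^B_t-Z^A_t$ now fixed; with overwhelming probability $\Theta(n)$ of the $c_t$ are nonzero, and then linear Littlewood--Offord on the $\Theta(n^2)$ remaining $T\to S$ Bernoullis already gives the $O(n^{-1})$ you use.
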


Let $G$ be a graph with $V(G)=[n]$. Denote the adjacency matrix of $G$ by $A$
and let $\mu_1<\mu_2<\ldots<\mu_m$ be all pairwise distinct eigenvalues of $A$,
i.e., all pairwise distinct roots of the characteristic polynomial of $A$.
If all $n$ roots are pairwise distinct (that is, $m=n$), we say that $G$ has \emph{simple spectrum}.
Furthermore, let $E_i$ be the eigenspace of $\mu_i$, i.e., $E_i=\Set{v\in\bR^n}{Av=\mu_iv}$.
For $1\le u\le n$, the standard basis vector $\mathrm{e}_u$ of $\bR^n$ has 1 in the position $u$
and 0 elsewhere. The \emph{angle} $\alpha_{i,u}$ of $G$ is defined to be
the cosine of the angle between $\mathrm{e}_u$ and the eigenspace $E_i$, i.e.,
the angle between $\mathrm{e}_u$ and its projection onto~$E_i$.
Identifying a vertex $u\in V(G)$ with the basis vector $\mathrm{e}_u$, we say that
$\alpha_{1,u},\ldots,\alpha_{m,u}$ is the \emph{angle sequence} of the vertex $u$.
Tao and Vu \cite{TaoVu17} proved that the random graph $G(n,1/2)$ has simple spectrum
with probability at least $1-n^{-c}$, for each fixed $c>0$ and sufficiently large $n$.
Corollary \ref{cor:no-cospectral} implies that, with high probability, not only
all $n$ eigenvalues but also all $n$
vertex angle sequences of $G(n,1/2)$ are pairwise distinct. This readily follows from
the known fact \cite[Proposition 5.1.4]{CvetkovicRS97} that two vertices in a graph are cospectral
if and only if their angle sequences are equal.

\begin{corollary}\label{cor:no-angles}
  With probability $1-O(n^{-1/2})$, no two vertices in $G(n,1/2)$ have the same angle sequences.
\end{corollary}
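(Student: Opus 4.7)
The plan is to prove that for every fixed pair of distinct vertices $u,v\in[n]$,
\begin{equation*}
\probb{r^2_G(u) = r^2_G(v),\ r^3_G(u) = r^3_G(v),\ r^4_G(u) = r^4_G(v)} = O(n^{-5/2}),
\end{equation*}
after which a union bound over the $\binom{n}{2}=O(n^2)$ pairs yields the claimed $O(n^{-1/2})$. First I would rewrite the three invariants in local form: $r^2(x)=d(x)$, $r^3(x)=2t(x)$ (twice the number of triangles at $x$), and $r^4(x)=d(x)^2+\sum_{y\neq x}c(x,y)^2$ where $c(x,y)=|N(x)\cap N(y)|$. Setting $W:=[n]\setminus\{u,v\}$, the target event reduces to $d(u)=d(v)$, $t(u)=t(v)$, and $\sum_{y\in W}\bigl(c(u,y)^2-c(v,y)^2\bigr)=0$, since the $y\in\{u,v\}$ contributions cancel (using $c(u,v)=c(v,u)$).

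To expose the randomness symmetrically in $u$ and $v$, I would reveal the graph in two stages. Stage~1 reveals every edge not of the form $uw$ or $vw$ for $w\in W$, together with $Y:=\mathbf 1[uv\in E]$. What remains are $2(n-2)$ independent Bernoulli$(\tfrac12)$ variables $A_w:=\mathbf 1[uw\in E]$ and $B_w:=\mathbf 1[vw\in E]$. Reparametrizing via $s_w:=A_w+B_w$ and $\sigma_w:=A_w-B_w$, let $S:=\{w\in W:s_w=1\}$; conditional on the stage-1 randomness and on $(s_w)_{w\in W}$, the variables $(\sigma_w)_{w\in S}$ are i.i.d.\ Rademacher while $\sigma_w=0$ for $w\notin S$. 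A Chernoff bound forces $|S|\ge n/3$ except with exponentially small probability. Substituting $A_w=(s_w+\sigma_w)/2$, $B_w=(s_w-\sigma_w)/2$, each difference $D_k:=r^k_G(u)-r^k_G(v)$ is antisymmetric under $\sigma\mapsto-\sigma$; hence its degree-$2$ and higher parts in $\sigma$ cancel, leaving $D_2, D_3, D_4$ as \emph{linear} integer forms in $\sigma$ whose coefficients depend only on the stage-1 randomness.

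The three coefficient vectors live at very different scales: those of $D_2$ are identically $1$, those of $D_3$ are weighted path counts of size $\Theta(n)$, and those of $D_4$ are further weighted path counts of size $\Theta(n^2)$. Consequently $(D_2, D_3, D_4)$ is approximately supported on a one-dimensional subspace of $\bZ^3$, and the non-trivial information lives in the residuals $D_3^\perp:=D_3-\alpha D_2$ and $D_4^\perp:=D_4-\beta D_2-\gamma D_3$ obtained by Gram--Schmidt with respect to the $\sigma$-covariance. I would then invoke a multivariate local central limit theorem / Berry--Esseen estimate for Rademacher linear forms to obtain
\begin{equation*}
\probb{D_2=D_3=D_4=0} = \probb{D_2=D_3^\perp=D_4^\perp=0} \;\lesssim\; \frac{1}{\sqrt{\det\mathrm{Cov}(D_2, D_3^\perp, D_4^\perp)}} = O(n^{-5/2}),
\end{equation*}
provided the covariance determinant is of order $n^{5}$ on a high-probability event for the stage-1 randomness. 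This genericity follows from the sharp concentration of common-neighbour counts $c(x,y)$ around $n/4$ with standard deviation $\Theta(\sqrt n)$ in $G(n,1/2)$.

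The hard part will be making the last step quantitative. One must establish that, except on an event of probability $o(n^{-1/2})$ over the stage-1 randomness, (i) the centred coefficient vectors of $D_3^\perp, D_4^\perp$ have enough spread (many $\Omega(\sqrt n)$-sized entries, avoiding alignment with a sparse arithmetic progression) to enable a local-limit / Littlewood--Offord--Hal\'asz-type anti-concentration bound with near-Gaussian density at the origin, and (ii) the $3\times 3$ covariance determinant really is $\Omega(n^{5})$. Verifying this structural non-degeneracy of the random linear forms $D_3, D_4$ uniformly in the outer graph is the technical heart of the argument.
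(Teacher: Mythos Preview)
Your proposal never actually touches angle sequences. In the paper this corollary is a one-line deduction: two vertices have equal angle sequences if and only if they are cospectral (a cited fact), which by Lemma~\ref{lem:cosp} is the same as being closed-walk-equivalent; hence the corollary follows immediately from Corollary~\ref{cor:no-cospectral}, i.e.\ from Theorem~\ref{thm:closed-non-equiv-random}. You instead dive into re-proving the $O(n^{-5/2})$ bound underlying Theorem~\ref{thm:closed-non-equiv-random}, so what follows compares your route to the paper's proof of that theorem.

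The decompositions are opposite. You reveal everything \emph{except} the edges incident to $u,v$, then use the antisymmetry $\sigma\mapsto-\sigma$ to show (correctly) that $D_2,D_3,D_4$ are \emph{linear} Rademacher forms in $(\sigma_w)_{w\in S}$, and finish with a three-dimensional anti-concentration bound governed by $\det\mathrm{Cov}(D_2,D_3,D_4)$. The paper reveals the $u,v$-incident edges \emph{first} (fixing $U_1=N(u)$, $U_2=N(v)$), then the edges inside $U_1\cup U_2$, then the edges from outside $U_1\cup U_2$ into the symmetric differences. At each stage a \emph{one-dimensional} inequality suffices: Erd\H{o}s--Littlewood--Offord gives $O(n^{-1/2})$ for $|U_1|=|U_2|$ and $O(n^{-1})$ for $|E_1|=|E_2|$; in the last stage $r^4(u)-r^4(v)$ is a genuine quadratic in the remaining edge indicators, and the Kwan--Sauermann quadratic Littlewood--Offord bound yields another $O(n^{-1})$. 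The product is $O(n^{-5/2})$.

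What the paper's scheme buys is a complete proof with off-the-shelf tools. Your scheme is elegant but the step you yourself flag as ``the hard part'' is a real gap, not a formality. There is no black-box multivariate local CLT that turns $\det\mathrm{Cov}=\Omega(n^5)$ into a point-probability bound $O(n^{-5/2})$ for integer-valued Rademacher linear forms without further arithmetic non-degeneracy hypotheses on the coefficient vectors, and you have verified neither the determinant lower bound nor those hypotheses. The difficulty is genuine: the coefficient vectors of $D_3$ and $D_4$ are both dominated by terms essentially proportional to $\deg_W(w)$, so their Gram--Schmidt residuals $D_3^\perp,D_4^\perp$ live at a much smaller scale whose size and arithmetic structure you would have to pin down. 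The paper sidesteps this entirely by never needing a joint bound.
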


For the proof of Theorem \ref{thm:closed-non-equiv-random}, we need several probabilistic concentration and anti-concentration bounds.
We say that $X$ is a binomial random variable with parameters $n$ and $p$ and write $X\sim\mathrm{Bin}(n,p)$,
if $X=\sum_{i=1}^n\xi_i$ where $\xi_i$'s are independent Bernoulli random variables, that is,
$X_i=1$ with probability 1 and $X_i=0$ with probability $1-p$.

\begin{lemma}[Chernoff's bound; see, e.g.,~{\cite[Corollary A.1.7]{AlonS16}}]\label{lem:chernoff}
If $X\sim\mathrm{Bin}(n,p)$, then
$$
 \mathbb{P}(|X-np|>t)\le2e^{-2t^2/n}
$$
for every $t\ge 0$.
\end{lemma}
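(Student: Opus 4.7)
The plan is to prove the bound via the standard Chernoff--Hoeffding method, i.e., the exponential moment approach followed by optimization of a free parameter. I would first reduce to a one-sided tail bound: show that $\mathbb{P}(X - np > t) \le e^{-2t^2/n}$, and then obtain the two-sided bound with the factor of $2$ by applying the same argument to $n - X \sim \mathrm{Bin}(n, 1-p)$.

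For the one-sided bound, I would fix $\lambda > 0$ and apply Markov's inequality to the nonnegative random variable $e^{\lambda(X-np)}$:
\[
\mathbb{P}(X - np > t) \;=\; \mathbb{P}\bigl(e^{\lambda(X-np)} > e^{\lambda t}\bigr) \;\le\; e^{-\lambda t}\,\mathbb{E}\bigl[e^{\lambda(X - np)}\bigr].
\]
Writing $X - np = \sum_{i=1}^{n}(\xi_i - p)$ and using independence of the $\xi_i$, the moment generating factor splits as $\prod_{i=1}^{n}\mathbb{E}[e^{\lambda(\xi_i - p)}]$. Each summand $\xi_i - p$ has mean zero and is supported in the interval $[-p, 1-p]$ of length $1$.

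The main technical ingredient is Hoeffding's lemma: for a mean-zero random variable $Y$ taking values in $[a,b]$, one has $\mathbb{E}[e^{\lambda Y}] \le e^{\lambda^2 (b-a)^2 / 8}$. I would prove this by exploiting the convexity of $y \mapsto e^{\lambda y}$ to bound $e^{\lambda Y}$ by the linear interpolation between $e^{\lambda a}$ and $e^{\lambda b}$, taking expectations, and then showing via a routine calculus argument (differentiating the logarithm twice) that the resulting function of $\lambda$ is dominated by $\lambda^2(b-a)^2/8$. Applied to $Y = \xi_i - p$ with $b - a = 1$, this gives $\mathbb{E}[e^{\lambda(\xi_i - p)}] \le e^{\lambda^2/8}$, hence
\[
\mathbb{P}(X - np > t) \;\le\; e^{-\lambda t + n\lambda^2/8}.
\]

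Finally, I would optimize the right-hand side in $\lambda > 0$. Setting the derivative of $-\lambda t + n\lambda^2/8$ to zero yields $\lambda = 4t/n$, at which the exponent evaluates to $-2t^2/n$. This gives $\mathbb{P}(X - np > t) \le e^{-2t^2/n}$. The symmetric bound $\mathbb{P}(np - X > t) \le e^{-2t^2/n}$ follows identically upon replacing $\xi_i$ by $1 - \xi_i$ (equivalently, applying the argument to $n - X$). A union bound over the two tails produces the claimed inequality $\mathbb{P}(|X - np| > t) \le 2 e^{-2t^2/n}$. I expect the only genuinely delicate step to be the verification of Hoeffding's lemma, since the rest of the argument is the routine Chernoff template; everything else is algebra and optimization.
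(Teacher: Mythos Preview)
Your proof is correct and follows the standard Chernoff--Hoeffding exponential-moment argument. Note, however, that the paper does not give its own proof of this lemma at all: it merely states the bound and cites \cite[Corollary~A.1.7]{AlonS16}, so there is nothing to compare against. What you have written is essentially the textbook derivation one would find in that reference.
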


\begin{lemma}[Erd\H{o}s~\cite{Erdos45}, Littlewood and Offord~\cite{LittlewoodO43}]
  Let $X=a_1\xi_1+\ldots+a_n\xi_n$, where $a_1,\ldots,a_n$ are non-zero reals and $\xi_1,\ldots,\xi_n$
  are independent Bernoulli random variables. Then $$\sup_{z\in\mathbb{R}}\prob{X=z}<n^{-1/2}.$$
\label{lem:ELO}
\end{lemma}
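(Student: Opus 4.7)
The plan is to prove this as the classical Erdős–Littlewood–Offord anti-concentration bound via Sperner's theorem on antichains in the Boolean lattice. The core insight is that the atom-level event $\{X = z\}$ corresponds to a family of subsets of $[n]$ (the supports of the Bernoulli sign patterns hitting value $z$) that, once the coefficients are made sign-consistent, must form an antichain under inclusion.

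First, I would perform a sign-normalisation reduction. For each index $i$ with $a_i < 0$, substitute $\xi_i' = 1 - \xi_i$, which is again a Bernoulli$(1/2)$ random variable independent of the others, and absorb the resulting constants into a shift of $z$. This replaces $X$ by $X' = \sum_i |a_i|\xi_i' + c$ for some constant $c$, without altering the distribution of $X - c$, so $\sup_z \prob{X = z} = \sup_z \prob{X' = z}$. Hence I may assume $a_1,\ldots,a_n$ are strictly positive. Then for any fixed $z \in \bR$, identify each outcome of $(\xi_1,\ldots,\xi_n)$ with the set $S \subseteq [n]$ of indices where $\xi_i = 1$, so that
\[
\prob{X = z} = 2^{-n}\,|\mathcal{F}_z|, \qquad \mathcal{F}_z = \Set{S \subseteq [n]}{\sum_{i \in S} a_i = z}.
\]

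The key combinatorial step is that $\mathcal{F}_z$ is an antichain: if $S \subsetneq T$ then $\sum_{i \in T} a_i - \sum_{i \in S} a_i = \sum_{i \in T \setminus S} a_i > 0$ since every $a_i > 0$ and $T \setminus S \ne \emptyset$, so $S$ and $T$ cannot both lie in $\mathcal{F}_z$. By Sperner's theorem, any antichain in the Boolean lattice on $[n]$ has size at most $\binom{n}{\lfloor n/2 \rfloor}$, so
\[
\prob{X = z} \;\le\; \frac{1}{2^n}\binom{n}{\lfloor n/2 \rfloor}.
\]

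Finally, to convert this into the stated strict bound $n^{-1/2}$, I would invoke the standard estimate for the central binomial coefficient. Using, for instance, the sharp inequality $\binom{2m}{m} \le 4^m/\sqrt{\pi m}$ (and an analogous bound for odd $n$ obtained via $\binom{2m+1}{m} = \tfrac{1}{2}\binom{2m+2}{m+1}$), one gets $2^{-n}\binom{n}{\lfloor n/2 \rfloor} \le \sqrt{2/(\pi n)} < n^{-1/2}$, with the small cases $n = 1, 2$ verified by inspection. The main obstacle I anticipate is the strictness of the inequality: Sperner gives the correct order of magnitude immediately, but the claimed bound $< n^{-1/2}$ (rather than $O(n^{-1/2})$) requires the explicit constant $\sqrt{2/\pi} < 1$ from a tight central-binomial estimate; this is the only place where care is needed beyond the Sperner reduction.
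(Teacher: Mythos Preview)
The paper does not prove this lemma; it is stated as a classical result with references to Erd\H{o}s and to Littlewood--Offord and then used as a black box. Your proposal reproduces precisely the original Erd\H{o}s argument via Sperner's theorem, and it is correct: the sign-normalisation is valid (under the implicit assumption $p=1/2$, which is the setting used throughout Section~\ref{s:canon}), the antichain observation is exactly right, and the central-binomial estimate $2^{-n}\binom{n}{\lfloor n/2\rfloor}\le\sqrt{2/(\pi n)}<n^{-1/2}$ supplies the strict inequality for all $n\ge1$.
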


\begin{lemma}[Kwan and Sauermann~\cite{KwanS23}]
  Let $X=p(\xi_1,\ldots,\xi_n)$, where $p\in\mathbb{R}[x_1,\ldots,x_n]$ is a polynomial of degree at most 2
  and $\xi_1,\ldots,\xi_n$ be independent Bernoulli random variables.
  If any 0-1-assignment of all but one variables $x_1,\ldots,x_n$ still does not determine the value of
  $p(x_1,\ldots,x_n)$ for the free variable $x_i$ taking value 0 or 1, then $$\sup_{z\in\mathbb{R}}\prob{X=z}=O(n^{-1/2}).$$
\label{lem:ELO2}
\end{lemma}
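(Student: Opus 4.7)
The plan is to reduce the degree-$2$ statement to the linear anti-concentration bound of Lemma~\ref{lem:ELO} by a decoupling argument. Using $\xi_i^2=\xi_i$ on Bernoulli inputs, I may assume
$$p(x)=c+\sum_i b_i x_i+\sum_{i<j} a_{ij}\,x_i x_j,$$
and the hypothesis then reads: for every $i$ and every $y\in\{0,1\}^{[n]\setminus\{i\}}$,
$$\ell_i(y)=b_i+\sum_{j\neq i} a_{ij}\,y_j\neq 0.$$

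The first move is Cauchy--Schwarz followed by decoupling. Letting $\xi'$ be an independent copy of $\xi$,
$$\bigl(\sup_z\Pr[p(\xi)=z]\bigr)^2\leq\sum_z\Pr[p(\xi)=z]^2=\Pr[p(\xi)=p(\xi')].$$
Writing $\eta_j=\xi_j-\xi_j'$ and $\sigma_j=\xi_j+\xi_j'$, a direct expansion using $\xi_j=(\sigma_j+\eta_j)/2$ and $\xi_j'=(\sigma_j-\eta_j)/2$ gives
$$p(\xi)-p(\xi')=\sum_j \eta_j L_j,\qquad L_j=b_j+\tfrac12\sum_{k\neq j} a_{jk}\,\sigma_k.$$
Let $S=\{j:\xi_j\neq\xi_j'\}$. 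Chernoff (Lemma~\ref{lem:chernoff}) gives $|S|=\Theta(n)$ with probability $1-o(1)$, and conditional on $S$ together with the common values of $\xi$ and $\xi'$ outside $S$, the signs $\eta_j$, $j\in S$, are independent Rademacher variables. If the random coefficients $L_j$, $j\in S$, are all non-zero on this conditioning, Lemma~\ref{lem:ELO} yields
$$\Pr\bigl[\textstyle\sum_{j\in S}\eta_j L_j=0\mid\text{cond.}\bigr]=O(|S|^{-1/2})=O(n^{-1/2}).$$

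Two difficulties remain, and together they form the main obstacle. Difficulty~(i): the coefficients $L_j$ are \emph{not} evaluations of $\ell_j$ at 0-1 points, since $\sigma_k/2=1/2$ for every $k\in S\setminus\{j\}$; hence $L_j$ is the average of $\ell_j$ at two 0-1 neighbours in the cube, and the hypothesis does not preclude this average from vanishing. Difficulty~(ii): the Cauchy--Schwarz step inflates the bound, so even $\Pr[p(\xi)=p(\xi')]=O(n^{-1/2})$ yields only $\sup_z\Pr[p(\xi)=z]=O(n^{-1/4})$, weaker than the claim. Both are resolved in \cite{KwanS23} by a more refined two-stage exposure: first one exposes $S$ together with the common values $\xi_k$ for $k\notin S$, addressing~(i) by an auxiliary application of Lemma~\ref{lem:ELO} over the randomness of the partition, which ensures that w.h.p. most $L_j$, $j\in S$, are non-zero; then one applies a second decoupling (or works with four independent copies) to upgrade the crude Cauchy--Schwarz loss into a matching $O(n^{-1})$ control on the bilinear-in-$\eta$ object, thereby recovering the sharp $\sup_z\Pr[p(\xi)=z]=O(n^{-1/2})$. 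The technical heart of \cite{KwanS23} is the bookkeeping that simultaneously preserves non-degeneracy of the $L_j$'s and supplies the extra square-root saving.
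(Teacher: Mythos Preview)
The paper does not prove this lemma at all: it is stated with attribution to Kwan and Sauermann~\cite{KwanS23} and used as a black box in the proof of Theorem~\ref{thm:closed-non-equiv-random}. There is therefore no proof in the paper to compare your proposal against.

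Your proposal is likewise not a self-contained proof. You set up the standard decoupling reduction, correctly isolate the two obstructions --- that Cauchy--Schwarz alone yields only $O(n^{-1/4})$, and that the decoupled coefficients $L_j$ need not be evaluations of $\ell_j$ at cube points and hence may vanish --- and then hand both obstructions back to~\cite{KwanS23} with a one-sentence description of what you believe happens there. That final paragraph is narrative, not argument: phrases like ``a more refined two-stage exposure'' and ``a second decoupling (or works with four independent copies)'' do not pin down a mechanism, and the claim that an auxiliary application of Lemma~\ref{lem:ELO} ``over the randomness of the partition'' makes most $L_j$ non-zero is asserted rather than shown. As written, your text is useful exposition of why the quadratic case is harder than the linear one, but it does not supply the missing square-root saving or the non-degeneracy control; for the purposes of this paper that is fine, since the paper itself simply cites the result.
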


\noindent
The rest of this section is devoted to the proof of Theorem~\ref{thm:closed-non-equiv-random}.

Let $R^k_G(v)=(r_G^2(v),r_G^3(v),\ldots,r_G^{k+1}(v))$. Let $G=G(n,1/2)$.
By the union bound,
$$
\prob{R_{G}^3(u)=R_{G}^3(v)\text{ for some }u,v} \le
\sum_{u,v}\prob{R_{G}^3(u)=R_{G}^3(v)}
=
{n\choose 2}\prob{R_{G}^3(1)=R_{G}^3(2)}.
$$
Therefore, it suffices to prove that
\begin{equation}
  \label{eq:1=2}
\prob{R_{G}^3(1)=R_{G}^3(2)}=O(n^{-5/2}).
\end{equation}

We denote the set of edges of a graph $H$ by $E(H)$. For $U\subseteq V(H)$, the subgraph of $H$
induced on $U$ is denoted by $H[U]$. For a vertex $v$ in $H$,
we write $\deg^H_U(v)$ for the number of those neighbors of $v$ belonging to $U$.
The neighborhood of $v\in V(H)$ is denoted by~$N_H(v)$.

Let $U_v=N_G(v)$. Note that
\begin{equation}
 r_{G}^2(v)=|U_v|,\quad r_{G}^3(v)=2|E(G[U_v])|,\quad\text{and}\quad
 r_{G}^4(v)=\sum_{w\in[n]}\left(\deg^{G}_{U_v}(w)\right)^2.
\label{eq:r-explicit}
\end{equation}
The first equality in \refeq{r-explicit} follows from the obvious fact that a closed walk of length 2
from/to $v$ is completely determined by an edge incident to $v$. The second equality follows from
the observation that a closed walk of length $3$ is determined by a triangle containg $v$,
which can be walked around in two directions. In order to see the third equality, note that
a closed walk of length 4 consists of a walk of length 2 from $v$ to some vertex $w$ through
a neighbor of $v$ and a return walk from $w$ to $v$ through another, or the same, neighbor of~$v$.

Given sets $U_1,U_2\subseteq V(G)$, define
$$
p(U_1,U_2)=\cprob{r^3_{G}(1)=r^3_{G}(2)\text{ and }r^4_{G}(1)=r^4_{G}(2)}{N_{G}(1)=U_1,\, N_{G}(2)=U_2}.
$$
We have
\begin{equation}
  \prob{R_{G}^3(1)=R_{G}^3(2)}=\sum_{U_1,U_2\,:\,|U_1|=|U_2|}p(U_1,U_2)
  \times
  \prob{N_{G}(1)=U_1,\, N_{G}(2)=U_2}.\label{eq:w3long}
\end{equation}
Note first that
\begin{multline}
  \sum_{|U_1|=|U_2|}\prob{N_{G}(1)=U_1,\ N_{G}(2)=U_2}=\prob{|N_{G}(1)|=|N_{G}(2)|}=\\
  =\prob{|N_{G}(1)\setminus\{2\}|=|N_{G}(2)\setminus\{1\}|}=O(n^{-1/2}).
  \label{eq:first_bound}
\end{multline}
This follows from Lemma \ref{lem:ELO} because $|N_{G}(1)\setminus\{2\}|\sim\mathrm{Bin}(n-2,1/2)$
and $|N_{G}(2)\setminus\{1\}|\sim\mathrm{Bin}(n-2,1/2)$ are independent binomial random variables and,
hence, $|N_{G}(1)\setminus\{2\}|-|N_{G}(2)\setminus\{1\}|$ is a linear combination of $2n-4$ independent
Bernoulli random variables.

Let us call a pair of sets $U_1\subset[n]\setminus\{1\}$ and $U_2\subset[n]\setminus\{2\}$ {\it standard} if
$$
(1/2-n^{-1/4})n\le |U_j|\le (1/2+n^{-1/4})n\text{ for }j=1,2
$$
and
$$
(1/4-n^{-1/4})n\le |U_1\cap U_2|\le (1/4+n^{-1/4})n.
$$
As readily follows from the Chernoff bound (Lemma \ref{lem:chernoff}),
the pair $\of{N_{G}(1),N_{G}(2)}$ is standard with probability $1-e^{-\Omega(\sqrt n)}$.
This implies that the contribution of non-standard pairs $(U_1,U_2)$ in \refeq{w3long}
is negligible, and all what we now have to prove is the estimate
\begin{equation}
  \label{eq:pUU}
p(U_1,U_2)=O(n^{-2})\text{ for all standard }(U_1,U_2)\text{ with }|U_1|=|U_2|,
\end{equation}
where the constant hidden by the big-O notation does not depend on~$(U_1,U_2)$.
Indeed, combining this estimate with \refeq{first_bound} and \refeq{w3long},
we immediately arrive at the desired bound~\refeq{1=2}.

In order to prove \refeq{pUU}, let us fix a standard pair $(U_1,U_2)$ such that $|U_1|=|U_2|$.
In what follows, $H$ will denote a graph on $[n]$ such that
\begin{itemize}
\item $N_H(1)=U_1$, $N_H(2)=U_2$, and
\item there is no edge between $[n]\setminus(U_1\cup U_2\cup\{1,2\})$ and
  $(U_1\setminus (U_2\cup\{2\}))\cup(U_2\setminus (U_1\cup\{1\}))$.
\end{itemize}
Let $G[1,2;U_1,U_2]$ denote the random graph obtained from $G$ by deleting all such edges.
In other words, $G[1,2;U_1,U_2]$ is a version of $G(n,1/2)$ where the edges between
vertices $v\notin U_1\cup U_2\cup\{1,2\}$ and
$u\in (U_1\setminus (U_2\cup\{2\}))\cup(U_2\setminus (U_1\cup\{1\}))$
are not exposed.

For $j=1,2$, let $E_j$ be the set of edges of $H$ induced by $U_j$.
If $E(G[U_j])=E_j$ for $j=1,2$, then  $r_{G}^3(1)=r_{G}^3(2)$ exactly when $|E_1|=|E_2|$. Define
$$
p'(U_1,U_2;H)
=\cprob{r_{G}^4(1)=r_{G}^4(2)}{G[1,2;U_1,U_2]=H}.
$$
Due to~\eqref{eq:r-explicit},
\begin{multline}
 p(U_1,U_2)=\sum_{H:\,|E_1|=|E_2|}p'(U_1,U_2;H)\times\\
 \times\cprob{G[1,2;U_1,U_2]=H}{N_{G}(1)=U_1,\,N_{G}(2)=U_2)},
\label{eq:p_to_p'}
\end{multline}
where the summation goes over all $H$ as specified above satisfying the additional
condition $|E(H[U_1])|=|E(H[U_2])|$.
We first observe that
\begin{multline}
  \sum_{H:\,|E_1|=|E_2|}\cprob{G[1,2;U_1,U_2]=H}{N_{G}(1)=U_1,\,N_{G}(2)=U_2)}\\
  =\cprob{|E(G[U_1])|=|E(G[U_2])|}{N_{G}(1)=U_1,\,N_{G}(2)=U_2)}\\
  =\prob{|E(G[U_1])\setminus E(G[U_1\cap U_2])|=|E(G[U_2])\setminus E(G[U_1\cap U_2])|}=O(n^{-1}).
 \label{eq:second_bound}
\end{multline}
Like the above, this follows from Lemma \ref{lem:ELO} because
$|E(G[U_j])\setminus E(G[U_1\cap U_2])|\sim\mathrm{Bin}\left({|U_j|\choose 2}-{|U_1\cap U_2|\choose 2},1/2\right)$
are independent for $j=1,2$, and the pair $(U_1,U_2)$ is standard.

It remains to prove that
\begin{equation}
  \label{eq:pUUH}
p'(U_1,U_2;H)=O(n^{-1})\text{ for every }H\text{ with }N_H(1)=U_1,\ N_H(2)=U_2,\text{ and } |E_1|=|E_2|,
\end{equation}
where the constant hidden by the big-O notation depends neither on $H$ nor on $U_1$ and~$U_2$.

Given $H$ as specified above, consider a uniformly random graph $G_H$ on $[n]$ such that $G_H[1,2;U_1,U_2]=H$.
 Let
 $$
 \zeta_j=\sum_{w\in[n]}\left(\deg^{G_H}_{U_j}(w)\right)^2\text{ for }j=1,2.
 $$
By~\eqref{eq:r-explicit},
\begin{equation}
p'(U_1,U_2;H)=\prob{\zeta_1=\zeta_2}.
\label{eq:p'_to_poly}
\end{equation}
For $v\notin U_1\cup U_2\cup\{1,2\}$ and $u\in (U_1\setminus(U_2\cup \{2\}))\cup  (U_2\setminus(U_1\cup\{1\}))$,
let $\eta_{v,u}$ denote the indicator random variable of the event that $v$ and $u$ are adjacent in $G_H$.
Since $H$ is fixed, the number $\deg^{G_H}_{U_j}(w)$ is fixed for each $w\in U_1\cup U_2\cup\{1,2\}$.
If $v\notin U_1\cup U_2\cup\{1,2\}$, then
$$
\deg^{G_H}_{U_1}(v)=\sum_{u\in U_1\setminus(U_2\cup\{2\})}\eta_{v,u}+\deg^H_{U_1\cap(U_2\cup\{2\})}(v)=\sum_{u\in U_1\setminus(U_2\cup\{2\})}\eta_{v,u}+\deg^H_{U_1\cap U_2}(v),
$$
and the similar equality holds true for $U_2$. It follows that
$\zeta_1=\zeta_2$ if and only if
\begin{multline*}
  \sum_{v\notin U_1\cup U_2\cup\{1,2\}}\left[
    \left(\sum_{u\in U_1\setminus(U_2\cup\{2\})}\eta_{v,u}\right)^2-
    \left(\sum_{u\in U_2\setminus (U_1\cup\{1\})}\eta_{v,u}\right)^2\,
  \right]\\
  +2\,\sum_{v\notin U_1\cup U_2\cup\{1,2\}}\left[
    \sum_{u\in U_1\setminus(U_2\cup\{2\})}\deg^H_{U_1\cap U_2}(v)\cdot\eta_{v,u}\right.\\
  -\left.
    \sum_{u\in U_2\setminus(U_1\cup\{1\})}\deg^H_{U_1\cap U_2}(v)\cdot\eta_{v,u}
  \right]+N=0
\end{multline*}
for some integer $N=N(H)$. The polynomial in variables $\eta_{v,u}$ on the left-hand side robustly depends
on all variables: If we assign 0-1-values to all variables except $\eta_{v,u}$ for an arbitrary pair $\{v,u\}$,
then the values of the polynomial at $\eta_{v,u}=0$ and $\eta_{v,u}=1$ are different. Lemma \ref{lem:ELO2},
therefore, implies that $\prob{\zeta_1=\zeta_2}=O(n^{-1})$. Using Equality \eqref{eq:p'_to_poly},
we obtain the desired bound \refeq{pUUH}. Along with \eqref{eq:p_to_p'} and \eqref{eq:second_bound},
this yields Bound \refeq{pUU}, completing the proof of the theorem.

\section{How much walking time is necessary?}\label{s:length}

Looking for further analogies between the walk-based vertex invariants $W$/$R$
and the color refinement invariant $C$, recall that the value $C_G(v)$
is iteratively computed as a sequence of vertex colors $C_G^0(v),C_G^1(v),C_G^2(v),\ldots$.
Suppose that $v$ is a vertex in an $n$-vertex graph $G$ and $u$ is a vertex in
an $m$-vertex graph $H$. The standard partition stabilization argument (e.g., \cite{ImmermanL90})
shows that if $C_G^k(v)=C_H^k(u)$ for all $k<n+m$, then this equality holds true for all $k$,
that is, $C_G(v)=C_H(u)$. As shown in \cite{KrebsV15}, the upper bound of $n+m$
is here asymptotically optimal. We now prove the analogs of these facts
for the vertex invariants $W$ and $R$, thereby completing
the discussion initiated in~\cite{RandicWG83} (see Section~\ref{s:intro}).

\begin{theorem}\label{thm:length}
  \hfill

  \begin{enumerate}[\bf 1.]
  \item
    Let $G$ and $H$ be connected graphs on $n$ and $m$ vertices respectively.
    Two vertices $v\in V(G)$ and $u\in V(H)$ are walk-equivalent if and only if
    $w_G^k(v)=w_H^k(u)$ for all $k<n+m$. Similarly, they are closed-walk-equivalent
    if and only if $r_G^k(v)=r_H^k(u)$ for all $k<n+m$.
  \item
    On the other hand, for each $n$ there are $n$-vertex graphs $G$ and $H$
    with vertices $v\in V(G)$ and $u\in V(H)$ such that $v$ and $u$ are
    not closed-walk-equivalent while $r_G^k(v)=r_H^k(u)$  for all $k\le2n-5$.
  \item
    Also, for each $n$ there are $n$-vertex graphs $G$ and $H$
    with vertices $v\in V(G)$ and $u\in V(H)$ such that $v$ and $u$ are
    not walk-equivalent while $w_G^k(v)=w_H^k(u)$ for all $k<2n-16\,\sqrt n$.
\end{enumerate}
\end{theorem}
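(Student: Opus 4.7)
For Part 1, I would apply the linear-recurrence framework of Subsection~\ref{ss:recurr} and Subsection~\ref{ss:wm}. Extend both sequences to start at $k=0$ (so that $w_G^0(v)=w_H^0(u)=1$ and $r_G^0(v)=r_H^0(u)=1$). The analysis in Subsection~\ref{ss:wm} with $S=V(G)$ shows that the sequence $W_G(v)$ satisfies the recurrence whose characteristic polynomial is $P_G$, of degree $n$; analogously $W_H(u)$ satisfies the recurrence with characteristic polynomial $P_H$, of degree $m$. Both minimal characteristic polynomials divide the product $P_G\,P_H$, so by Lemma~\ref{lem:recurr} each sequence satisfies the common recurrence whose characteristic polynomial is $P_G\,P_H$, of degree $n+m$. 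Consequently the termwise difference $D_k=w_G^k(v)-w_H^k(u)$ also satisfies this recurrence of order $n+m$. A linear recurrence sequence of order $r$ vanishes identically as soon as its first $r$ consecutive values are zero; hence $D_k=0$ for all $k<n+m$ forces $D\equiv 0$. Applying the same argument with $S=\{v\}$ (and $S=\{u\}$) handles the closed-walk invariants.

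For Part 2, I would exhibit the following explicit pair of $n$-vertex trees. Let $G$ be the path $v_0 v_1\cdots v_{n-1}$ rooted at $v=v_0$, and let $H$ be the tree obtained from the path $u_0 u_1\cdots u_{n-3}$ by attaching two pendant leaves $u_{n-2}$ and $u_{n-1}$ to $u_{n-3}$, rooted at $u=u_0$. Any closed walk of length $k$ from the root stays within its ball of radius $\lfloor k/2\rfloor$. For $k\le 2n-5$, this radius is at most $n-3$, and the rooted balls of radius $n-3$ around $v$ and $u$ are isomorphic (both are $P_{n-2}$ rooted at an endpoint); hence $r_G^k(v)=r_H^k(u)$ for all such $k$. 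At length $2n-4$, a closed walk can just reach vertices at distance $n-2$ from the root: in $G$ there is exactly one such walk, the forced round trip along the path through $v_{n-2}$, while in $H$ there are two, one through each pendant. Therefore $r_H^{2n-4}(u)=r_G^{2n-4}(v)+1$, so $v$ and $u$ are not closed-walk-equivalent while all earlier counts agree.

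Part 3 is the step I expect to be the main obstacle. The plan is to invoke the graphs constructed in \cite{KrebsV15}, which provide pairs of $n$-vertex graphs $G,H$ with distinguished vertices on which the color-refinement invariants agree up to round $\approx 2n-O(\sqrt n)$. Since the walk invariant $W$ is strictly coarser than $C$, a lower bound against color refinement does \emph{not} automatically transfer, and one must verify directly that $w_G^k(v)=w_H^k(u)$ for all $k<2n-16\sqrt n$ on these particular graphs. My approach would be to exploit the structural regularity of the Krebs--Verbitsky construction (in which both graphs are assembled from well-understood building blocks) in order to describe the vectors $A^t\,\jj_{V(G)}$ and $A^t\,\jj_{V(H)}$ concretely, and to show that their $v$- and $u$-components remain equal until $t$ is forced by the order-$(n+m)$ recurrence of Subsection~\ref{ss:wm} to diverge. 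The hard part is the quantitative arithmetic of pinning down the precise first index of discrepancy, which determines the additive $16\sqrt n$ error term; the combinatorial bookkeeping is what distinguishes this bound from the straightforward path example of Part~2.
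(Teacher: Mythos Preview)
Your Part~1 is correct and in fact slightly cleaner than the paper's route. The paper packages both sequences into a single $(n+m)$-dimensional first-order recurrence $X_t=AX_{t-1}$ with $A=B\oplus C$ the direct sum of companion matrices, and then applies Cayley--Hamilton to the big matrix. Your argument---that both sequences, and hence their difference, satisfy the order-$(n+m)$ recurrence with characteristic polynomial $P_GP_H$, so $n+m$ initial zeros force all zeros---is a more direct way to reach the same conclusion using only Lemma~\ref{lem:recurr}.

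Your Part~2 is exactly the paper's example and argument.

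Part~3 is where your proposal has a genuine gap. You correctly name the Krebs--Verbitsky graphs $G_{s,t},H_{s,t}$, but your proof sketch (``describe the vectors $A^t\jj$ concretely'' and wait for the order-$(n+m)$ recurrence to ``force'' divergence) is not a workable plan. The recurrence never forces divergence; you need a separate combinatorial argument both for equality up to the threshold and for the first strict inequality. The paper's key device, which you are missing, is an auxiliary vertex colouring satisfying Property~(*): apart from the roots, any two equally-coloured vertices have identical colour multisets in their neighbourhoods. This yields, by an easy induction on $k$, that equally-coloured vertices $x\in V(G)$, $y\in V(H)$ have $w_G^k(x)=w_H^k(y)$ whenever $k\le\min(d(x,v),d(y,u))$ (Claim~A), and a companion inequality when the distances differ (Claim~B). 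These two claims, combined with a decomposition of $w_G^k(v)$ and $w_H^k(u)$ according to the first exit from the isomorphic depth-$\ell$ ball (where $\ell=t(s+4)-2$), give both $w_G^k(v)=w_H^k(u)$ for $k\le 2\ell+2$ and $w_G^{2\ell+3}(v)\ne w_H^{2\ell+3}(u)$. The choice $s=3t$ and a padding argument then yield the stated $2n-16\sqrt n$. Without Property~(*) and Claims~A/B, there is no mechanism in your proposal to control the walk counts once a walk leaves the common isomorphic part; that is the missing idea.
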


\subsection{Proof of Theorem \ref{thm:length}: Part 1}

As it was discussed in Section \ref{s:decisive}, if $v$ is a vertex in a graph $G$
with $n$ vertices, then $W_G(v)$ and $R_G(v)$ are linear recurrence sequences of
order at most $n$. Part 1 of the theorem is, therefore, a direct consequence
of the following more general fact.

\begin{lemma}\label{lem:n+m}
  Let $Y=(y_t)_{t\ge0}$ and $Z=(z_t)_{t\ge0}$ be linear recurrence sequences of orders
  $n$ and $m$ respectively. If $y_t=z_t$ for all $t<n+m$, then $Y=Z$.
\end{lemma}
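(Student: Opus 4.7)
The plan is to form the difference sequence $D = Y - Z$ (i.e.\ $d_t = y_t - z_t$) and show that it satisfies a linear recurrence of order $n+m$; then the hypothesis that the first $n+m$ entries vanish will force $D$ to be identically zero by a one-line induction.

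First I would invoke Lemma~\ref{lem:recurr}. Let $\chi_Y$ and $\chi_Z$ be the characteristic polynomials of $Y$ and $Z$, of degrees $n$ and $m$ respectively. Consider the product polynomial $\chi(z) = \chi_Y(z)\,\chi_Z(z)$, which has degree $n+m$. Since $\chi_Y$ divides $\chi$, Lemma~\ref{lem:recurr} tells us that $Y$ satisfies the linear recurrence $L$ whose characteristic polynomial is $\chi$. By the same token, $Z$ satisfies $L$ as well. Because $L$ is a \emph{linear} recurrence, the term-by-term difference $D$ also satisfies $L$.

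Now $L$ has order $n+m$, so writing it as $d_t = c_1 d_{t-1} + \cdots + c_{n+m} d_{t-(n+m)}$, the values $d_t$ for $t \geq n+m$ are uniquely determined by $d_0, d_1, \ldots, d_{n+m-1}$. The hypothesis $y_t = z_t$ for $t < n+m$ says precisely that these initial $n+m$ values of $D$ vanish, so a straightforward induction on $t$ yields $d_t = 0$ for all $t \geq 0$, i.e.\ $Y = Z$.

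There is no real obstacle here: the only conceptual point is that the set of sequences satisfying a fixed linear recurrence is closed under subtraction, which is immediate from linearity, and that $Y$ and $Z$ can be placed into a \emph{common} recurrence by multiplying their minimal characteristic polynomials. The bound $n+m$ is visibly sharp at this level of generality, matching the degree of $\chi_Y \chi_Z$. Once this lemma is in hand, Part~1 of Theorem~\ref{thm:length} follows immediately by applying it to $Y = W_G(v),\, Z = W_H(u)$ and to $Y = R_G(v),\, Z = R_H(u)$, using the fact established in Section~\ref{s:decisive} that these are linear recurrence sequences of order at most $n$ and $m$ respectively.
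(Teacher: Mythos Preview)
Your proof is correct, and it takes a genuinely different route from the paper's.

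The paper embeds both sequences into a single $(n+m)$-dimensional first-order linear recurrence: it forms the companion matrices $B$ and $C$ of the two recurrences, takes their direct sum $A=B\oplus C$, and considers the vector sequence $X_t=AX_{t-1}$ with $X_0=(y_0,\ldots,y_{n-1},z_0,\ldots,z_{m-1})^\top$, so that $y_t$ and $z_t$ appear as the $1$st and $(n{+}1)$st coordinates of $X_t$. A separate auxiliary lemma (Lemma~\ref{lem:n-dim-1-ord}), proved via the Cayley--Hamilton theorem, then shows that two coordinates of such a sequence that agree on the first $n+m$ steps agree forever.

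Your argument is shorter and more elementary: multiplying the two minimal characteristic polynomials produces a single recurrence of order $n+m$ satisfied by both $Y$ and $Z$ (by Lemma~\ref{lem:recurr}, which is already available), and the difference sequence then vanishes by a direct induction. This bypasses the multidimensional machinery and Cayley--Hamilton entirely. The paper's approach has the minor advantage of reusing the matrix framework of Section~\ref{ss:wm} and isolating Lemma~\ref{lem:n-dim-1-ord} as a standalone statement, but for the purpose of proving Lemma~\ref{lem:n+m} your argument is the cleaner one.
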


In order to prove Lemma \ref{lem:n+m}, we need a generalization of the concept
of a linear recurrence to higher dimensions. Let $X_0\in\bR^d$ be a vector-column
and $A$ be an $d\times d$ real matrix. The sequence $X_0,X_1,X_2,\ldots$ of $d$-dimensional
vectors satisfying the relation
\begin{equation}
  \label{eq:XtAX}
X_t=AX_{t-1}
\end{equation}
is called a \emph{$d$-dimensional linear recurrence sequence of 1st order}.

\begin{lemma}\label{lem:n-dim-1-ord}
  Let $X_t=(x_{1,t},\ldots,x_{d,t})^\top$ and suppose that $(X_t)_{t\ge0}$ is
  a $d$-dimensional linear recurrence sequence of 1st order satisfying Eq.~\refeq{XtAX}.
  Let $1\le i,j\le d$. If $x_{i,t}=x_{j,t}$ for $t=0,1,\ldots,d-1$, then
  $x_{i,t}=x_{j,t}$ for all~$t\ge0$.
\end{lemma}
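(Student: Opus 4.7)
The plan is to reduce the lemma to a one‑dimensional linear recurrence on the difference sequence $y_t = x_{i,t} - x_{j,t}$ and then apply Cayley--Hamilton together with the hypothesis that the first $d$ values of $y_t$ vanish.

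First I would invoke Cayley--Hamilton for the $d\times d$ matrix $A$. Let $p(z) = z^d - c_1 z^{d-1} - \cdots - c_{d-1} z - c_d$ be the characteristic polynomial of $A$. Then $A^d = c_1 A^{d-1} + \cdots + c_{d-1} A + c_d I$. Multiplying this identity from the right by $X_{t-d}$ (for any $t \ge d$) and using the recurrence $X_t = A X_{t-1} = A^t X_0$, I obtain the $d$-dimensional recurrence of order $d$:
\[
X_t = c_1 X_{t-1} + c_2 X_{t-2} + \cdots + c_d X_{t-d}.
\]
Reading off the $i$-th and $j$-th coordinates, each of the scalar sequences $(x_{i,t})_{t \ge 0}$ and $(x_{j,t})_{t \ge 0}$ satisfies the scalar linear recurrence relation $y_t = c_1 y_{t-1} + \cdots + c_d y_{t-d}$ of order $d$.

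Next, since the class of sequences satisfying a fixed homogeneous linear recurrence is closed under subtraction, the difference $y_t = x_{i,t} - x_{j,t}$ also satisfies this order-$d$ recurrence. Such a sequence is uniquely determined by its first $d$ terms $y_0, y_1, \ldots, y_{d-1}$. By the hypothesis these terms are all zero, hence $y_t = 0$ for every $t \ge 0$, which is exactly the claim $x_{i,t} = x_{j,t}$ for all $t \ge 0$.

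There is essentially no obstacle here: the whole statement is an immediate consequence of Cayley--Hamilton plus the uniqueness of a linear recurrence sequence given its initial segment. The only thing worth double-checking is the orientation (multiplying on the right by $X_{t-d}$ and using $A^s X_{t-d} = X_{t-d+s}$), which is straightforward from the defining relation \refeq{XtAX}.
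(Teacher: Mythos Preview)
Your proof is correct and follows essentially the same approach as the paper: both invoke Cayley--Hamilton for $A$ to obtain an order-$d$ linear recurrence for the vectors $X_t$, and then conclude that agreement on the first $d$ terms forces agreement everywhere. Your formulation via the difference sequence $y_t=x_{i,t}-x_{j,t}$ is a minor cosmetic variant of the paper's observation that every $X_t$ lies in the span of $X_0,\ldots,X_{d-1}$.
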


\begin{proof}
  Let $P(z)=z^d-a_1z^{d-1}-\cdots-a_{d-1}z-d_n$ be the characteristic polynomial of $A$.
  Similarly to Section \ref{ss:wm}, from the Cayley–Hamilton theorem we derive the equality
$$
A^dX_0=a_1A^{d-1}X_0 +\cdots+ a_{d-1}AX_0 + a_dX_0.
$$
Multiplying both sides of this equality by $A^{t-d}$ from the left and using the induction on $t$,
we conclude that $A^tX_0$, for every $t$, belongs to the linear span of the vectors
$A^{d-1}X_0,\ldots,AX_0,X_0$. Since $X_t=A^tX_0$, this means that $X_t$ is, for every $t$,
a linear combination of $X_{d-1},\ldots,X_1,X_0$. It follows that if the sequences
$(x_{i,t})_{t\ge0}$ and $(x_{j,t})_{t\ge0}$ coincide in the first $d$ positions, then
they coincide everywhere.
\end{proof}

\begin{proof}[Proof of Lemma \ref{lem:n+m}]
  Suppose that the sequences $Y$ and $Z$ satisfy linear recurrence relations
  $y_t=b_1y_{t-1}+\cdots+b_ny_{t-n}$ and $z_t=c_1z_{t-1}+\cdots+c_mz_{t-m}$ respectively.
  Let $B$ be the matrix of the linear transformation of $\bR^n$ mapping a vector
  $(\alpha_0,\alpha_1,\ldots,\alpha_{n-1})$ to the vector
  $(\alpha_1,\alpha_2,\ldots,\alpha_{n-1},b_1\alpha_{n-1}+\cdots+b_n\alpha_0)$. Similarly,
  let $C$ be the matrix of the linear transformation of $\bR^m$ mapping
  $(\alpha_0,\alpha_1,\ldots,\alpha_{m-1})$ to
  $(\alpha_1,\alpha_2,\ldots,\alpha_{m-1},c_1\alpha_{m-1}+\cdots+c_m\alpha_0)$. Note that
  $$
  B(y_0,y_1,\ldots,y_{n-1})^\top=(y_1,y_2,\ldots,y_{n})^\top,\
  B(y_1,y_2,\ldots,y_{n})^\top=(y_2,y_3,\ldots,y_{n+1})^\top,\ldots
  $$
  and, similarly,
  $$
  C(z_0,z_1,\ldots,z_{m-1})^\top=(z_1,z_2,\ldots,z_{m})^\top,\
  C(z_1,z_2,\ldots,z_{m})^\top=(z_2,z_3,\ldots,z_{m+1})^\top,\ldots.
$$
Let $A=B\oplus C$ be the direct sum of the matrices $A$ and $B$.
Set
$$
X_0=(y_0,y_1,\ldots,y_{n-1},z_0,z_1,\ldots,z_{m-1})^\top
$$
and consider the linear recurrence $X_t=AX_{t-1}$.
Let $X_t=(x_{1,t},\ldots,x_{n+m,t})^\top$ and observe that
$x_{1,t}=y_t$ and $x_{n+1,t}=z_t$. Lemma \ref{lem:n+m} now follows by applying
Lemma \ref{lem:n-dim-1-ord} for $i=1$ and $j=n+1$.
\end{proof}

\subsection{Proof of Theorem \ref{thm:length}: Part 2}

Let $G=P_n$ be the path graph on $n$ vertices, and $Y_n$ be the graph
obtained by attaching two new vertices to one of the end vertices of $P_{n-2}$.
Consider $v\in V(G)$ and $u\in V(H)$ as shown in Fig.~\ref{fig:GPnHYn}.

\begin{figure}
  \centering
  \begin{tikzpicture}[every node/.style=vertex,scale=.77]
\path (0,0) node (v1) {}
(0,1) node (v2) {} edge (v1)
(0,2) node (v3) {} edge (v2)
(0,3) node (v4) {} edge (v3)
(0,4) node (v5) {} edge (v4);
\path[xshift=50mm] (0,0) node (u1) {}
(0,1) node (u2) {} edge (u1)
(0,2) node (u3) {} edge (u2)
(-.5,3) node (u4) {} edge (u3)
(0.5,3) node (u5) {} edge (u3);
\node[draw=none,fill=none,below] at (v1) {$v$};
\node[draw=none,fill=none,below] at (u1) {$u$};
\draw[->,dashed] (1.5,2) -- (.2,2);
\node[draw=none,fill=none,right] at (1.7,2) {\it level $\ell$};
\draw[->,dashed] (3.5,2) -- (4.8,2);
\end{tikzpicture}
  \caption{The graphs $G=P_n$ and $H=Y_n$ for $n=5$.}
  \label{fig:GPnHYn}
\end{figure}
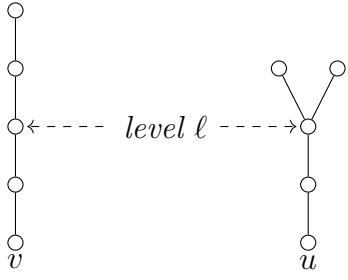

Set $\ell=n-3$ and let $G_\ell$ and $H_\ell$ denote
the subgraphs of $G$ and $H$ spanned by the sets of vertices at the distance at most
$\ell$ from $v$ and $u$. Note that $G_\ell\cong H_\ell\cong P_{n-2}$.
We have $r_G^k(v)=r_H^k(u)$ for all $k\le2\ell+1=2n-5$ because all closed walks
from $v$ and $u$ of length at most $2\ell+1$ are contained in $G_\ell$ and $H_\ell$ respectively.
Nevertheless, $v$ and $u$ are not closed-walk-equivalent because $r_G^{2n-4}(v)\ne r_H^{2n-4}(u)$.
The inequality follows from equalities
\begin{eqnarray}
  r_G^{2\ell+2}(v)&=&r_{G_\ell}^{2\ell+2}(v)+1\text{ and}\label{eq:rG2ellv}\\
  r_H^{2\ell+2}(u)&=&r_{H_\ell}^{2\ell+2}(u)+2\label{eq:rH2ellu}
\end{eqnarray}
as $r_{G_\ell}^{2\ell+2}(v)=r_{H_\ell}^{2\ell+2}(u)$.

\subsection{Proof of Theorem \ref{thm:length}: Part 3}

The proof of this part is not as easy as it was for closed walks.
Note that we cannot use the same graphs as, similarly to \refeq{rG2ellv}--\refeq{rH2ellu},
\begin{eqnarray*}
  w_G^{n-2}(v)&=&w_{G_\ell}^{n-2}(v)+1\text{ and}\\
  w_H^{n-2}(u)&=&w_{H_\ell}^{n-2}(u)+2
\end{eqnarray*}
and, therefore, $w_G^{n-2}(v)\ne w_H^{n-2}(u)$. The technical challenge is to ensure that the
top levels of $G$ and $H$ are still, at least locally, indistinguishable by walk counts.

\begin{figure}
  \centering
\begin{tikzpicture}[scale=.77]
\newcommand{\diamoond}[3]{
\path (#1,#2) node[bvertex] (bb#3) {}
    (#1,#2+2) node[bvertex] (bt#3) {} edge (bb#3);
}
\newcommand{\diamoondrev}[3]{
\path (#1,#2) node[bvertex] (bb#3) {}
    (#1,#2+2) node[bvertex] (bt#3) {} edge (bb#3);
}
\newcommand{\tail}[3]{
\path (#1,#2) node[vertex] (vb#3) {}
      (#1,#2+1.25) node[vvertex] (vvb#3) {} edge (vb#3)
      (#1,#2+2.5) node[vvertex] (vvt#3) {} edge (vvb#3)
      (#1,#2+3.75) node[vertex] (vt#3) {} edge (vvt#3);
}
\newcommand{\bloock}{
\tail001
\diamoond{-1.5}61
\diamoondrev{1.5}62
\path (0,5) node[avertex] (ab) {} edge (vt1) edge (bb1) edge (bb2);
}

\begin{scope}[scale=.5]
\bloock
\path (0,9) node[avertex] (at) {} edge (bt1) edge (bt2)
      (0,9.5) node[inner sep=0pt,fill=black] () {} edge (at)
      (0,-.5) node[inner sep=0pt,fill=black] () {} edge (vb1);
\node[draw=none,fill=none] at (-2,0) {(a)};
\end{scope}

\begin{scope}[scale=.5,xshift=90mm]
\bloock
\tail0{10.125}2
\path (0,9) node[avertex] (at) {} edge (bt1) edge (bt2) edge (vb2)
      (0,15) node[avertex] (atop) {} edge (vt2)
      (-1.5,16.5) node[bvertex] (bbl) {} edge (atop)
      (1.5,16.5) node[bvertex] (bbr) {} edge (atop) edge (bbl)
(0,-.5) node[inner sep=0pt,fill=black] () {} edge (vb1);
\node[draw=none,fill=none] at (-2,0) {(b)};
\end{scope}

\begin{scope}[scale=.5,xshift=190mm]
\bloock
\path (-3,10) node[avertex] (al) {} edge (bt1)
      (-1.8,10) node[vertex] (vl) {} edge (al)
      (-.6,10) node[vvertex] (vvl) {} edge (vl)
      (.6,10) node[vvertex] (vvr) {} edge (vvl)
      (1.8,10) node[vertex] (vr) {} edge (vvr)
      (3,10) node[avertex] (ar) {} edge (bt2)  edge (vr)
     (-1.5,13) node[bvertex] (bbl) {} edge (al)
      (1.5,13) node[bvertex] (bbr) {} edge (ar) edge (bbl)
(0,-.5) node[inner sep=0pt,fill=black] () {} edge (vb1);
\node[draw=none,fill=none] at (-2,0) {(c)};
\end{scope}

\end{tikzpicture}
  \caption{(a) The tail block $T_{6,s}$ for $s=4$.
(b) The head block of $G_{4,t}$.
(c) The head block of~$H_{4,t}$.}
  \label{fig:port}
\end{figure}

We make use of the construction of a pair of graphs $G_{s,t}$ and $H_{s,t}$ from \cite{KrebsV15},
where $s\ge1$ and $t\ge2$ are integer parameters.
Each of the graphs is a chain of $t$ blocks.
The first $t-1$ \emph{tail blocks} are all the same in both $G_{s,t}$ and $H_{s,t}$.
Each tail block is a copy of the tadpole graph $T_{6,s}$, i.e.,
is obtained from the cycle $C_6$ and the path $P_s$ by adding an edge between
an end vertex of the path and a vertex of the cycle; see Fig.~\ref{fig:port}(a).
In addition, each of the graphs contains one \emph{head block},
and the head blocks of $G_{s,t}$ and $H_{s,t}$ are different; see Fig.~\ref{fig:port}(b--c).
Note that the head block of $G_{s,t}$ contains a subgraph isomorphic to~$T_{6,s}$.

\begin{figure}
  \centering
\begin{tikzpicture}[scale=.77]
\newcommand{\diamoond}[3]{
\path (#1,#2) node[bvertex] (b#3) {}
    (#1,#2+1.75) node[bvertex] (bb#3) {} edge (b#3);
}
\newcommand{\bloock}[3]{
\diamoond{#1-1.5}{#2+3}{1#3}
\diamoond{#1+1.5}{#2+3}{2#3}
\path (#1,#2-.25) node[vertex] (v#3) {}
    (#1,#2+.5) node[vvertex] (vnew) {} edge (v#3)
    (#1,#2+1.25) node[vertex] (vv) {} edge (vnew)
    (#1,#2+2) node[avertex] (a) {} edge (b1#3) edge (b2#3) edge (vv)
    (#1,#2+5.75) node[avertex] (a#3) {} edge (bb1#3) edge (bb2#3);
   }

\begin{scope}[scale=.5]
\bloock001
\bloock072
\bloock0{14}3
\path (0,20.75) node[vertex] (v4) {}
      (0,21.5) node[vvertex] (vnew45) {} edge (v4)
      (0,22.25) node[vertex] (v5) {} edge (vnew45)
      (0,23) node[avertex] (a4) {} edge (v5)
   (-1.5,24.5) node[bvertex] (b4) {} edge (a4)
    (1.5,24.5) node[bvertex] (bb4) {} edge (a4) edge (b4);
 \draw (v2) -- (a1) (v3) -- (a2) (v4) -- (a3);
\node[draw=none,fill=none,below] at (0,-.55) {$v$};
\node[draw=none,fill=none,left] at (bb13) {$a$};
\node[draw=none,fill=none,right] at (bb23) {$b$};
\node[draw=none,fill=none,below] at (a3) {$c$};
\node[draw=none,fill=none,left] at (v4) {$g$};
\draw[->,dashed] (5,18.75) -- (2.65,18.75);
\node[draw=none,fill=none,right] at (5.75,18.75) {\it level $\ell$};
\end{scope}

\begin{scope}[scale=.5,xshift=150mm]
\bloock001
\bloock072
\diamoond{-1.5}{17}x
\diamoond{1.5}{17}y
\path (0,13.75) node[vertex] (v3) {} edge (a2)
      (0,14.5) node[vvertex] (vnew3) {} edge (v3)
      (0,15.25) node[vertex] (vv3) {} edge (vnew3)
      (0,16) node[avertex] (aaa) {} edge (vv3) edge (bx) edge (by)
   (-3,21) node[avertex] (aaa1) {} edge (bbx)
    (3,21) node[avertex] (aaa2) {} edge (bby)
    (-1.5,21) node[vertex] (vvv1) {} edge (aaa1)
     (0,21) node[vvertex] (vvvnew) {} edge (vvv1)
     (1.5,21) node[vertex] (vvv2) {} edge (aaa2) edge (vvvnew)
    (-1.5,24) node[bvertex] (b4) {} edge (aaa1)
     (1.5,24) node[bvertex] (bb4) {} edge (aaa2) edge (b4);
  \draw (v2) -- (a1);
\draw[dashed] (.5,13.5) -- (8,13.5);
\node[draw=none,fill=none,above] at (6,13.5) {\it the head block};
\node[draw=none,fill=none,below] at (6,13.5) {\it $t-1$ tail blocks};
\node[draw=none,fill=none,below] at (0,-.55) {$u$};
\node[draw=none,fill=none,left] at (bbx) {$a'$};
\node[draw=none,fill=none,right] at (bby) {$b'$};
\node[draw=none,fill=none,left] at (aaa1) {$c'$};
\node[draw=none,fill=none,right] at (aaa2) {$c''$};
\node[draw=none,fill=none,above] at (vvv1) {$g'$};
\node[draw=none,fill=none,left] at (b4) {$b''$};
\node[draw=none,fill=none,above] at (vvvnew) {$y'$};
\draw[->,dashed] (-5,18.75) -- (-3,18.75);
\end{scope}

\end{tikzpicture}
  \caption{The graphs $G_{s,t}$ and $H_{s,t}$ for $s=t=3$.}
  \label{fig:GstHst}
\end{figure}

An example of the construction for $s=t=3$ is shown in Fig.~\ref{fig:GstHst}.
Note that both $G_{s,t}$ and $H_{s,t}$ have $n=t(s+6)+s+3$ vertices.

The first tail block in each of the graphs $G=G_{s,t}$ and $H=H_{s,t}$ contains a vertex of degree 1,
which is a single vertex of degree 1 in the graph. Those are
the vertices $v\in V(G)$ and $v\in V(H)$, for which we claim that
\begin{enumerate}[(i)]
\item
  $w_G^k(v)=w_H^k(u)$ for all $k<2t(s+4)-1$ and
\item
  $w_G^k(v)\ne w_H^k(u)$ for $k=2t(s+4)-1$.
\end{enumerate}
Part 3 of the theorem follows by setting $s=3t$. More precisely, Conditions (i)--(ii)
imply Part 3 for $n$ of the form $n=3t^2+9t+3$. For all other $n$, we use $G_{3t,t}$ and $H_{3t,t}$
with the largest $t$ such that $3t^2+9t+3<n$ and attach the missing number of new degree-1 vertices $n-(3t^2+9t+3)$
to $v$ and to~$u$.

Let us proceed to proving Claims (i)--(ii).
The graphs $G$ and $H$ are uncolored. However, Figures \ref{fig:port} and \ref{fig:GstHst}
show an auxiliary coloring of the vertices, which has an important feature.

\smallskip

\textit{Property (*)}:
With the exception of $u$ and $v$, any two vertices of the same color have
the same number of neighbors of each color. For example, if $s=3$, then there are altogether
four possible neighborhood patterns:
$$
\raisebox{2.5mm}{
\begin{tikzpicture}[scale=.5]
\path (0,0) node[vertex] (a) {}
      (1,0) node[vvertex] (b)  {} edge (a)
      (2,0) node[vertex] (c)  {} edge (b);
\end{tikzpicture}}\,,\quad
\raisebox{2.5mm}{
\begin{tikzpicture}[scale=.5]
\path (0,0) node[avertex] (a) {}
      (1,0) node[vertex] (b)  {} edge (a)
      (2,0) node[vvertex] (c)  {} edge (b);
    \end{tikzpicture}}\,,\quad
\raisebox{2.5mm}{
  \begin{tikzpicture}[scale=.5]
\path (0,0) node[bvertex] (a) {}
      (1,0) node[bvertex] (b)  {} edge (a)
      (2,0) node[avertex] (c)  {} edge (b);
    \end{tikzpicture}}\,,\quad
\begin{tikzpicture}[scale=.4]
\path (-1,-.7) node[bvertex] (a) {}
      (0,0) node[avertex] (b)  {} edge (a)
      (1,-.7) node[bvertex] (c)  {} edge (b)
      (0,1) node[vertex] (d)  {} edge (b);
\end{tikzpicture}\,.
$$

Let $d(a,b)$ denote the distance between vertices $a$ and $b$ in a graph.
We will also write $N_G(v)$ to denote the neighborhood of a vertex $v$ in a graph~$G$.

\begin{claim}\label{cl:1}
  Assume that vertices $x\in V(G)$ and $y\in V(H)$ are equally colored. Then
  $w_G^k(x)=w_H^k(y)$ for all $k\le\min\of{d(x,v),d(y,u)}$.
\end{claim}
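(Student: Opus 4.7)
The plan is to strengthen the statement by refining walk counts according to the color of the endpoint. For a color $c$, let $w_G^k(x,c)$ denote the number of walks of length $k$ in $G$ that start at $x$ and end at a vertex of color $c$. I will prove by induction on $k$ that for any pair of equally colored vertices $x \in V(G)$ and $y \in V(H)$,
\[
w_G^k(x,c) = w_H^k(y,c) \quad \text{for every color } c, \text{ whenever } k \le \min(d(x,v), d(y,u)).
\]
Summing over $c$ immediately yields the claim of the lemma.

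The base case $k = 0$ is trivial, since each side equals $1$ when $c$ is the common color of $x$ and $y$ and $0$ otherwise. For the inductive step with $k \ge 1$, I decompose
\[
w_G^k(x,c) = \sum_{x' \in N_G(x)} w_G^{k-1}(x',c),
\]
and similarly for $y$. Because $k \ge 1$ and $k \le d(x,v)$, the vertex $x$ is distinct from $v$; analogously $y \ne u$. Property~(*) therefore applies to $x$ in $G$ and to $y$ in $H$, and it forces the multisets of colors of $N_G(x)$ and $N_H(y)$ to coincide. Consequently, there exists a color-preserving bijection $\pi \colon N_G(x) \to N_H(y)$.

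For each neighbor $x' \in N_G(x)$ the triangle inequality gives $d(x',v) \ge d(x,v) - 1 \ge k - 1$, and likewise $d(\pi(x'), u) \ge k - 1$. The inductive hypothesis therefore applies to the equally colored pair $(x', \pi(x'))$ and yields $w_G^{k-1}(x',c) = w_H^{k-1}(\pi(x'),c)$. Summing over $\pi$ closes the induction. The only subtlety is ensuring that (*) can legitimately be invoked at every step of the recursion, and this is exactly what the distance bound $k \le \min(d(x,v), d(y,u))$ guarantees by keeping the argument away from the two exceptional vertices $v$ and $u$ where (*) is allowed to fail.
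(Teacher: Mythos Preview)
Your proof is correct and follows essentially the same inductive scheme as the paper: induct on $k$, use Property~(*) to obtain a color-preserving bijection between $N_G(x)$ and $N_H(y)$ (valid because the distance bound keeps you away from the exceptional vertices $v$ and $u$), and invoke the triangle inequality to push the distance condition to the neighbors. The only difference is that you strengthen the induction hypothesis to track the endpoint color $c$; this is harmless but unnecessary, since the unrefined counts $w_G^{k-1}(x')$ already match under the bijection by the plain induction hypothesis, exactly as the paper argues.
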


\begin{subproof}
  We prove, by finite induction on $k$, that $w_G^k(x)=w_H^k(y)$ for all $x\in V(G)$ and $y\in V(H)$
  such that both $d(x,v)$ and $d(y,u)$ are no less than $k$. The equality is trivially true for $k=0$.
  If $k\ge1$, then $w_G^k(x)=\sum_{z\in N_G(x)}w_G^{k-1}(z)$ and, similarly,
  $w_H^k(y)=\sum_{z\in N_H(y)}w_H^{k-1}(z)$. These sums are equal because, according to Property (*),
  each color appears in the neighborhoods $N_G(x)$ and $N_H(y)$ with the same multiplicity and
  the neighbors of $x$ and $y$ can be only 1 closer to the vertices $v$ and $u$ respectively.
\end{subproof}

\begin{claim}\label{cl:2}
  Let $x\in V(G)$ and $y\in V(H)$ be equally colored vertices.
Assume that $d(x,v)\ne d(y,u)$ and set $h=\min\of{d(x,v),d(y,u)}$.
Then $w_G^{h+1}(x)\ne w_H^{h+1}(y)$.
\end{claim}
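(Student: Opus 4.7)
The plan is to prove a strengthened, signed version of the claim by induction on $h$: under the assumption $d(x,v) < d(y,u)$, I shall establish the strict inequality $w_G^{h+1}(x) < w_H^{h+1}(y)$ (the reverse case is symmetric). The sign is essential, because the inductive step combines contributions from many neighbors across several colors, and without a consistent direction these discrepancies could in principle cancel. Fix $d(x,v) = h < d(y,u)$ throughout.

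For the base case $h=0$, we have $x = v$ and $y \neq u$, where $y$ has the same color as $v$. Since $v$ and $u$ are the only degree-$1$ vertices in $G$ and $H$, Property~$(*)$ forces $y$ to have the standard, strictly larger neighborhood for its color, so $w_H^1(y) \geq 2 > 1 = w_G^1(v)$.

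For the inductive step $h \geq 1$, both $x \neq v$ and $y \neq u$, so Property~$(*)$ yields $|N_G^c(x)| = |N_H^c(y)|$ for every color $c$. Every neighbor of $y$ is at distance $\geq h$ from $u$ (since $d(y,u) \geq h+1$), while a neighbor $z$ of $x$ is at distance $h-1$, $h$, or $h+1$ from $v$. Claim~A, applied to two equally colored vertices both at distance $\geq h$ from their respective beacons, gives that the corresponding length-$h$ walk counts share a common value $W_c$ depending only on $c$. After these "far" terms cancel on both sides, the color-$c$ contribution to $w_G^{h+1}(x) - w_H^{h+1}(y)$ simplifies to $\sum_{z \in N_G^c(x),\, d(z,v)=h-1} \bigl(w_G^h(z) - W_c\bigr)$. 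For any such $z$ and any $z' \in N_H^c(y)$ we have $d(z,v) = h-1 < h \leq d(z',u)$, so the signed inductive hypothesis at level $h-1$ applied to $(z,z')$ gives $w_G^h(z) < w_H^h(z') = W_c$; each summand is therefore strictly negative. Since $x$ lies at distance $h \geq 1$ from $v$, it has at least one neighbor on a shortest path to $v$ (at distance $h-1$), so summing across colors yields $w_G^{h+1}(x) < w_H^{h+1}(y)$.

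The main technical point is the upgrade to a signed statement. The naive unsigned claim would leave open the possibility that positive and negative discrepancies across color classes or neighbors cancel; the signed version delegates every "far" contribution to Claim~A (where it cancels exactly) and forces all remaining "toward-$v$" contributions to point in the same direction.
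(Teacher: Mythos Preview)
Your argument is correct, and it takes a genuinely different route from the paper's.  The paper argues (in the symmetric case $d(x,v)>d(y,u)=h$) by exploiting a structural feature of the particular graphs $G_{s,t}$ and $H_{s,t}$: the vertex at distance $h$ from its beacon has, with one exceptional vertex $y'$ that is ruled out by hand, a \emph{unique} neighbor one step closer.  A color-preserving bijection $f:N_H(y)\to N_G(x)$ then matches all ``far'' neighbors via Claim~A and leaves exactly one mismatched pair, to which the (unsigned) inductive hypothesis applies; with a single nonzero term, no cancellation is possible.  Your strengthening to a signed inequality dispenses with this uniqueness entirely: every toward-the-beacon neighbor of the close vertex contributes a term of the same sign, so any number of them can be present.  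The price is carrying the direction of the inequality through the induction; the payoff is that the proof no longer depends on the fine geometry of $G_{s,t}$ and $H_{s,t}$ beyond Property~$(*)$ and the fact that $v$ and $u$ are the sole degree-one vertices.  One small point worth making explicit: when you pick $z'\in N_H^c(y)$ to compare with a close neighbor $z\in N_G^c(x)$, such a $z'$ exists because $|N_H^c(y)|=|N_G^c(x)|\ge 1$ by Property~$(*)$; this is implicit in your argument but is what guarantees $W_c$ is well defined.
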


\begin{subproof}
  We proceed by induction on $h$. If $h=0$, then either $x=v$ and $y\ne u$ or
  $x\ne v$ and $y=u$. In the latter case, $w_H^{1}(y)=1$ while $w_G^{1}(x)=2$.
  The former case is similar.

  If $h\ge1$, then $w_G^{h+1}(x)=\sum_{z\in N_G(x)}w_G^{h}(z)$ and, similarly,
  $w_H^{h+1}(y)=\sum_{z\in N_H(y)}w_H^{h}(z)$. To be specific, suppose that
  $d(x,v)>d(y,u)=h$; the analysis of the other case is similar.
  Fix a color-preserving bijection $f\function{N_H(y)}{N_G(x)}$.
  The neighborhood $N_H(y)$ contains a single vertex $e$ closer to $u$ than $y$, i.e.,
  such that $d(e,u)=h-1$. In fact, the vertex $y'$ shown in Fig.~\ref{fig:GstHst}
  has two such neighbors, but the inequality $d(x,v)>d(y',u)$ is then impossible
  for any vertex $x\in V(G)$ of the same color.
  By the induction assumption, $w_G^{h}(f(e))\ne w_H^{h}(e)$.
  For any other neighbor $z\ne s$ of $y$, we have $w_G^{h}(f(z))=w_H^{h}(z)$
  by Claim \ref{cl:1}. The inequality $w_G^{h+1}(x)\ne w_H^{h+1}(y)$ follows.
\end{subproof}

Let $\ell=t(s+4)-2$ be the level up to which the graphs $G$ and $H$ are isomorphic;
see Fig.~\ref{fig:GstHst}. More formally, let $G_\ell$ and $H_\ell$ denote
the subgraphs of $G$ and $H$ spanned by the sets of vertices at the distance at most
$\ell$ from $v$ and $u$. Then $\ell$ is defined as the maximum integer such that
$G_\ell$ and $H_\ell$ are isomorphic.

Let $a$ and $b$ be the vertices at level $\ell$ in $G$ and $a'$ and
$b'$ be the vertices at level $\ell$ in $H$. Furthermore, let $c$ be
the vertex at the next level in $G$ and $c'$ and $c''$ be the vertices
at the next level in $H$; see Fig.~\ref{fig:GstHst}.

We are now prepared to prove Claim~(i).

\begin{claim}\label{cl:3}
  $w_G^k(v)=w_H^k(u)$ for all $k\le2\ell+2$.
\end{claim}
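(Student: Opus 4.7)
The plan is to decompose each walk of length $k$ from $v$ (respectively $u$) by its first visit to a level-$(\ell+1)$ vertex and then reduce the equality $w_G^k(v)=w_H^k(u)$ to an instance of Claim~A.

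The key structural observation, directly read off Figure~\ref{fig:GstHst}, is that in $G$ the unique level-$(\ell+1)$ vertex $c$ separates $G_\ell$ from the rest of $G$, while in $H$ the pair $\{c',c''\}$ serves the same role, with $c'$ being the only level-$(\ell+1)$ neighbor of $a'$ and $c''$ the only level-$(\ell+1)$ neighbor of $b'$. Consequently, a walk from $v$ that never visits $c$ stays entirely in $G_\ell$, and a walk from $u$ that never visits $c'$ or $c''$ stays in $H_\ell$. Let $q_G^s(v,x)$ (respectively $q_H^s(u,x)$) denote the number of length-$s$ walks from $v$ (respectively $u$) to $x$ that remain in $G_\ell$ (respectively $H_\ell$). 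Conditioning on the first moment $t_1\ge \ell+1$ at which the walk touches level $\ell+1$ gives
\begin{equation*}
w_G^k(v) = w_{G_\ell}^k(v) + \sum_{t_1=\ell+1}^{k} \bigl(q_G^{t_1-1}(v,a)+q_G^{t_1-1}(v,b)\bigr)\,w_G^{k-t_1}(c),
\end{equation*}
\begin{equation*}
w_H^k(u) = w_{H_\ell}^k(u) + \sum_{t_1=\ell+1}^{k} \bigl(q_H^{t_1-1}(u,a')\,w_H^{k-t_1}(c')+q_H^{t_1-1}(u,b')\,w_H^{k-t_1}(c'')\bigr).
\end{equation*}

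The isomorphism $\phi\colon G_\ell\to H_\ell$ with $\phi(v)=u$, $\phi(a)=a'$, $\phi(b)=b'$, which exists by the maximality of $\ell$, immediately yields $w_{G_\ell}^k(v)=w_{H_\ell}^k(u)$ and $q_G^{s}(v,a)=q_H^{s}(u,a')$, $q_G^{s}(v,b)=q_H^{s}(u,b')$ for every $s$. The proof therefore reduces to establishing
\[
w_G^{k-t_1}(c) \;=\; w_H^{k-t_1}(c') \;=\; w_H^{k-t_1}(c'')
\]
for every $t_1$ in the summation range, since then the two expressions match term by term. The three vertices $c$, $c'$, and $c''$ all carry the same (triangle) color in the auxiliary coloring of Figures~\ref{fig:port} and~\ref{fig:GstHst}, and each lies at distance exactly $\ell+1$ from its respective root. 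Combined with $k\le 2\ell+2$ and $t_1\ge \ell+1$, this gives $k-t_1\le \ell+1$, which is precisely the hypothesis needed to invoke Claim~A on the pairs $(c,c')$ and $(c,c'')$. Hence the identity holds and $w_G^k(v)=w_H^k(u)$.

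The step requiring the most care is the structural claim that $c$ alone (for $G$) and $\{c',c''\}$ alone (for $H$) are the only bridges between the lower isomorphic part and the upper head block; this is, however, immediate from the explicit construction of $G_{s,t}$ and $H_{s,t}$, and once it is noted, the rest of the argument follows from the two tools already developed, namely Claim~A and the isomorphism $G_\ell\cong H_\ell$.
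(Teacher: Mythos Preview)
Your proof is correct and follows essentially the same approach as the paper: both decompose a walk from $v$ (resp.\ $u$) according to the first time it leaves $G_\ell$ (resp.\ $H_\ell$), identify the corresponding prefix counts via the isomorphism $G_\ell\cong H_\ell$, and use Claim~A to equate the suffix counts $w_G^{k-t_1}(c)=w_H^{k-t_1}(c')=w_H^{k-t_1}(c'')$ since $k-t_1\le\ell+1$. The only differences are notational (your $t_1$ is the paper's $m+1$, and your $q_G^s(v,x)$ is the paper's $w_{G_\ell}^s(v,x)$).
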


\begin{subproof}
  Let $w_G^k(x,z)$ denote the number of walks of length $k$ in $G$ starting at a vertex $x$
  and ending at a vertex~$z$.

  A walk of length $k$ from $v$ either is entirely contained in $G_\ell$ or leaves $G_\ell$
  after $m\ge\ell$ steps through $a$ or $b$, arrives at $c$, and then follows
  some walk of length $k-m-1$ from $c$ in $G$. Using this and the similar observation for $H$,
  we obtain the equalities
  \begin{eqnarray}
    w_G^k(v)&=&w_{G_\ell}^k(v)+\sum_{m=\ell}^{k-1}
                \of{ w_{G_\ell}^m(v,a) w_G^{k-m-1}(c) + w_{G_\ell}^m(v,b) w_G^{k-m-1}(c) },\label{eq:wGkv}\\
    w_H^k(u)&=&w_{H_\ell}^k(u)+\sum_{m=\ell}^{k-1}
                \of{ w_{H_\ell}^m(u,a') w_H^{k-m-1}(c') + w_{H_\ell}^m(u,b') w_H^{k-m-1}(c'') }.\label{eq:wHku}
  \end{eqnarray}
  Since $G_\ell$ and $H_\ell$ are isomorphic, we have
  $$
  w_{G_\ell}^k(v)=w_{H_\ell}^k(u),\quad
  w_{G_\ell}^m(v,a)=w_{H_\ell}^m(u,a'),\text{ and }w_{G_\ell}^m(v,b)=w_{H_\ell}^m(u,b').
  $$
  Since $m\ge\ell$ and $k\le2\ell+2$, we have
  $$
  k-m-1\le\ell+1=d(c,v)=d(c',u)=d(c'',u)
  $$
  and, therefore,
  $$
w_G^{k-m-1}(c)=w_H^{k-m-1}(c')=w_H^{k-m-1}(c'')
$$
by Claim \ref{cl:1}. We conclude that $w_G^k(v)=w_H^k(u)$.
\end{subproof}

\begin{claim}\label{cl:4}
 $w_G^{\ell+2}(c)\ne w_H^{\ell+2}(c')=w_H^{\ell+2}(c'')$.
\end{claim}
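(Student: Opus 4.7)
The equality $w_H^{\ell+2}(c') = w_H^{\ell+2}(c'')$ is immediate from symmetry. The head block of $H$ shown in Figure~\ref{fig:GstHst} admits an obvious reflection automorphism, fixing the attachment triangle $aaa$ and the central vertex $vvvnew$ while swapping the two arms, and in particular swapping $c'=aaa_1$ with $c''=aaa_2$. This extends to an automorphism of $H$ that fixes $H_\ell$ pointwise, so $c'$ and $c''$ are similar in $H$; similar vertices are walk-equivalent.

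For the strict inequality $w_G^{\ell+2}(c)\neq w_H^{\ell+2}(c')$, the plan is to split walks of length $\ell+2$ according to whether they visit the special degree-$1$ vertex ($v$ in $G$, $u$ in $H$). Since $d(c,v)=d(c',u)=\ell+1$, any such visit must occur at time exactly $\ell+1$ along a shortest path, after which the walk takes the unique remaining step out of the leaf. Hence such walks are in bijection with shortest $c$-$v$ paths in $G$ (resp.\ shortest $c'$-$u$ paths in $H$). Any shortest $c$-$v$ path starts by stepping to one of the two level-$\ell$ neighbors $a,b$ of $c$, which are exchanged by the reflection automorphism of the last tail block of $G$; thus the count is $2N$, where $N$ is the number of shortest paths from $a$ to $v$ in $G_\ell$. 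In contrast, $c'$ has only one level-$\ell$ neighbor $a'$, so the shortest $c'$-$u$ paths are in bijection with the shortest $a'$-$u$ paths in $H_\ell$; the isomorphism $G_\ell\cong H_\ell$ sending $a\mapsto a'$, $v\mapsto u$ shows there are exactly $N$ of them. So the contribution of ``visiting'' walks is $2N$ in $G$ against $N$ in $H$, a discrepancy of $N>0$.

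What remains is to show that the walks of length $\ell+2$ avoiding $v$ (resp.\ $u$) have equal counts in $G$ and in $H$. I would expand $w_G^{\ell+2}(c)=\sum_{z\in N_G(c)}w_G^{\ell+1}(z)$ and its $H$-analog, match the circle neighbors $g$ and $g'$ via Claim~\ref{cl:1} (both are circles at distance $\ell+2\ge\ell+1$ from the special vertex), and unpack the square contributions by further neighbor decompositions, iterating downward and using the isomorphism $G_\ell\cong H_\ell$ to pair same-colored vertices at comparable distances. The main obstacle I foresee is the structural asymmetry that removing $c$ disconnects $G$ (separating the head block from $G_\ell$), whereas removing $c'$ from $H$ leaves a connected graph (through $c''$ and its two arms); any extra ``detour'' walks from $a'$ in $H\setminus c'$ into the head block must traverse $c''$ and therefore have length at least $5$, so one has to verify that under the matching scheme their net contribution cancels, leaving only the shortest-path discrepancy $N$.
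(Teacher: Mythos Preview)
Your proof is incomplete. You correctly handle the equality $w_H^{\ell+2}(c')=w_H^{\ell+2}(c'')$ by symmetry, and your count of the walks that do visit the leaf $v$ (resp.\ $u$) is fine: such a walk must follow a shortest path and then take one more step, giving $2N$ against $N$. But the crux of your argument---that the walks of length $\ell+2$ from $c$ avoiding $v$ and from $c'$ avoiding $u$ are equinumerous---is left as a plan with an explicitly acknowledged obstacle. You say you would ``iterate downward'' and then ``verify that \ldots\ their net contribution cancels'', but you never carry this out. The structural asymmetry you yourself flag (that $c$ separates the head block of $G$ from $G_\ell$, whereas $c'$ does not disconnect $H$) is exactly where the difficulty lives; it is not a formality, and nothing you have written shows the detour contributions through $c''$ actually cancel.

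The paper avoids this problem by a single neighbour expansion,
\[
w_G^{\ell+2}(c)=w_G^{\ell+1}(a)+w_G^{\ell+1}(b)+w_G^{\ell+1}(g),\qquad
w_H^{\ell+2}(c')=w_H^{\ell+1}(a')+w_H^{\ell+1}(b'')+w_H^{\ell+1}(g'),
\]
and then matches two of the three summands and shows the third pair differs. The circle terms $g,g'$ match by Claim~\ref{cl:1} (both at distance $\ell+2\ge\ell+1$ from the leaf). The terms at $a,a'$ match by one further decomposition through $G_\ell\cong H_\ell$ combined with Claim~\ref{cl:1}. The remaining pair $b,b''$ are both square-coloured but at different distances from the leaf ($d(b,v)=\ell$ while $d(b'',u)=\ell+2$), so Claim~\ref{cl:2} gives $w_G^{\ell+1}(b)\neq w_H^{\ell+1}(b'')$ directly. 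The key point is that Claim~\ref{cl:2}, which you never invoke, is precisely the tool engineered to absorb the asymmetry you are worried about; using it removes the need to analyse non-visiting walks at all.
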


\begin{subproof}
We begin with equalities
  \begin{eqnarray*}
    w_G^{\ell+2}(c)&=& w_G^{\ell+1}(a)+ w_G^{\ell+1}(b)+ w_G^{\ell+1}(g),\\
    w_H^{\ell+2}(c')&=& w_H^{\ell+1}(a')+ w_H^{\ell+1}(b'')+ w_H^{\ell+1}(g');
  \end{eqnarray*}
  see Fig.~\ref{fig:GstHst}.
  Note first that $w_G^{\ell+1}(a)=w_H^{\ell+1}(a')$. Indeed,
   \begin{eqnarray*}
     w_G^{\ell+1}(a)&=&w_{G_\ell}^{\ell+1}(a)+\sum_{m=0}^{\ell}
                     \of{ w_{G_\ell}^m(a,a) w_G^{\ell-m}(c) + w_{G_\ell}^m(a,b) w_G^{\ell-m}(c) },\\
     w_H^{\ell+1}(a')&=&w_{H_\ell}^{\ell+1}(a')+\sum_{m=0}^{\ell}
                     \of{ w_{H_\ell}^m(a',a') w_H^{\ell-m}(c') + w_{H_\ell}^m(a',b') w_H^{\ell-m}(c'') },
  \end{eqnarray*}
  where $w_G^{\ell-m}(c)=w_H^{\ell-m}(c')=w_H^{\ell-m}(c'')$ by Claim \ref{cl:1}
  and the other corresponding terms are equal due to the isomorphism $G_\ell\cong H_\ell$.
  Furthermore, $w_G^{\ell+1}(g)=w_H^{\ell+1}(g')$ by Claim \ref{cl:1},
  and $w_G^{\ell+1}(b)\ne w_H^{\ell+1}(b'')$ by Claim \ref{cl:2}.
  We conclude that $w_G^{\ell+2}(c)\ne w_H^{\ell+2}(c')$.
\end{subproof}

We now prove Claim~(ii).

\begin{claim}\label{cl:5}
 $w_G^{2\ell+3}(v)\ne w_H^{2\ell+3}(u)$.
\end{claim}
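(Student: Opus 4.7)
The plan is to push the decomposition used in the proof of Claim~\ref{cl:3} one level higher, applying the analogues of \refeq{wGkv} and \refeq{wHku} now at $k = 2\ell+3$. A key preliminary observation is that the head block of $H$ admits an evident mirror automorphism which fixes $H_\ell$ pointwise and swaps $c'$ with $c''$ (cf.\ Figure~\ref{fig:GstHst}); consequently $w_H^j(c') = w_H^j(c'')$ for every $j \ge 0$. Combined with the isomorphism $G_\ell \cong H_\ell$ sending $v \mapsto u$, $a \mapsto a'$, $b \mapsto b'$, subtracting the two formulas yields
\begin{equation*}
  w_G^{2\ell+3}(v) - w_H^{2\ell+3}(u) = \sum_{m=\ell}^{2\ell+2}\of{w_{G_\ell}^m(v,a)+w_{G_\ell}^m(v,b)}\of{w_G^{2\ell+2-m}(c) - w_H^{2\ell+2-m}(c')}.
\end{equation*}

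The next step is to show that every summand except the one with $m = \ell$ drops out. For $\ell+1 \le m \le 2\ell+2$ one has $2\ell+2-m \le \ell+1 = d(c,v) = d(c',u)$, and since $c$ and $c'$ carry the same color, Claim~\ref{cl:1} gives $w_G^{2\ell+2-m}(c) = w_H^{2\ell+2-m}(c')$, so the corresponding factor vanishes. The identity therefore collapses to
\begin{equation*}
  w_G^{2\ell+3}(v) - w_H^{2\ell+3}(u) = \of{w_{G_\ell}^\ell(v,a) + w_{G_\ell}^\ell(v,b)}\of{w_G^{\ell+2}(c) - w_H^{\ell+2}(c')}.
\end{equation*}

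Finally, I would verify that both factors on the right are nonzero. The first factor is strictly positive because $d_{G_\ell}(v,a) = \ell$, so any shortest path from $v$ to $a$ in $G_\ell$ contributes at least one walk counted by $w_{G_\ell}^\ell(v,a)$. The second factor is nonzero by Claim~\ref{cl:4}. This gives $w_G^{2\ell+3}(v) \ne w_H^{2\ell+3}(u)$, which is the content of Claim~\ref{cl:5} and simultaneously establishes statement~(ii), completing the proof of Part~3 of Theorem~\ref{thm:length}. The only mildly delicate point in the plan is the symmetry identity $w_H^j(c') = w_H^j(c'')$, which is what lets one consolidate the $H$-side formula into a shape parallel to the $G$-side; the remaining manipulations are direct bookkeeping that mirrors the proof of Claim~\ref{cl:3}.
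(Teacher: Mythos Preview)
Your proof is correct and follows essentially the same route as the paper: apply the decompositions \refeq{wGkv}--\refeq{wHku} at $k=2\ell+3$, use Claim~\ref{cl:1} to kill all summands with $m\ge\ell+1$, and invoke Claim~\ref{cl:4} for the surviving $m=\ell$ term. One small inaccuracy worth noting: the mirror automorphism of $H$ does \emph{not} fix $H_\ell$ pointwise---it swaps $a'$ with $b'$ (both lie in $H_\ell$, inside the head block) and likewise swaps the two sides of each hexagon in the tail blocks. What you actually need, and what is true, is that this reflection is a global automorphism of $H$ taking $c'$ to $c''$, which suffices for $w_H^j(c')=w_H^j(c'')$; the pointwise-fixing claim plays no role in your subtraction, so the argument stands. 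The paper relies on the same identity $w_H^{\ell+2}(c')=w_H^{\ell+2}(c'')$ via the statement of Claim~\ref{cl:4}.
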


\begin{subproof}
  The inequality follows from Eqs.~\refeq{wGkv}--\refeq{wHku} for $k=2\ell+3$ after noting that
  $$
w_G^{l}(c)=w_H^{l}(c')=w_H^{l}(c'')
$$
for all $l\le\ell+1$ by Claim \ref{cl:1} and that
$$
w_G^{\ell+2}(c)\ne w_H^{\ell+2}(c')=w_H^{\ell+2}(c'')
$$
by Claim \ref{cl:4}.
\end{subproof}

The proof of Theorem \ref{thm:length} is complete.

\section{Ambivalence in examples}\label{s:exam}

We here give examples of particular graphs illustrating the material of the previous sections.
The computations are performed with the use of the library TCSLibLua \cite{TCSLibLua}; cf.\ Subsection~\ref{ss:compute}.

\begin{example}[The smallest trees with ambivalent vertices]\label{ex:W-amenab}
  The non-isomorphic trees $T$ and $S$ depicted below are obtained as described
  in the proof of Part 1 of Theorem~\ref{thm:trees}.

\bigskip

\noindent
\begin{tikzpicture}[every node/.style={circle,draw,inner sep=2pt,fill=none}]
  \begin{scope}[scale=.8]
\path (0,1) node (a1) {}
       (0,0) node (a2) {} edge (a1)
       (1,0) node (a3) {} edge (a2)
       (2,0) node[fill=black] (a4) {} edge (a3)
       (3,0) node (a5) {} edge (a4)
       (3,1) node (a6) {} edge (a5)
       (4,0) node (a7) {} edge (a5)
       (5,0) node (a8) {} edge (a7)
       (6,0) node (a9) {} edge (a8)
       (6,1) node (a10) {} edge (a9)
       (7,0) node (a11) {} edge (a9)
       (2,1) node (a12) {} edge (a4);
       \node[draw=none,fill=none,below] at ($(a4)-(0,0.1)$) {$x$};
       \node[draw=none,fill=none] at (-1,0) {$T$};
  \end{scope}
  \begin{scope}[scale=.8, xshift=.6\textwidth]
\path (0,1) node (a1) {}
       (0,0) node (a2) {} edge (a1)
       (1,0) node (a3) {} edge (a2)
       (2,0) node (a4) {} edge (a3)
       (3,0) node (a5) {} edge (a4)
       (3,1) node (a6) {} edge (a5)
       (4,0) node (a7) {} edge (a5)
       (5,0) node[fill=black] (a8) {} edge (a7)
       (6,0) node (a9) {} edge (a8)
       (6,1) node (a10) {} edge (a9)
       (7,0) node (a11) {} edge (a9)
       (5,1) node (a12) {} edge (a8);
       \node[draw=none,fill=none,below] at ($(a8)-(0,0.1)$) {$y$};
       \node[draw=none,fill=none] at (8,0) {$S$};
  \end{scope}
\end{tikzpicture}

\medskip

\noindent
Specifically, $T=L_x\cdot M_z$ and $S=L_y\cdot M_z$ where $M=P_2$ is the single-edge graph and
$L$ is the Harary-Palmer tree with non-similar strongly walk-equivalent vertices $x$ and $y$;
see Figure \ref{fig:HP}(a). Thus, $x$ is an ambivalent vertex in $T$ and $y$ is an ambivalent
vertex in $S$. The trees $T$ and $S$ have 12 vertices.
The computation shows that in every other tree with at most 12 vertices, all vertices are decisive.
\end{example}

\begin{example}[The smallest ``sporadic'' example]
  Similarly to the preceding example, we can construct two pairs of 13-vertex non-isomorphic trees
  with strongly walk-equivalent vertices. The tree $M=P_3$ has now 3 vertices. The root of $M$ can be
  chosen in two different ways, which gives us two different pairs. The computation reveals
  one more pair of trees $T$ and $S$ with walk-equivalent (but not strongly walk-equivalent)
  vertices $x\in V(T)$ and $y\in V(S)$:

\bigskip

\begin{center}
\begin{tikzpicture}

  \matrix[column sep=1cm,row sep=.5cm,every node/.style={circle,draw,inner sep=2pt,fill=none,draw}] {
    &\node (a) {};&\\
\node[fill=black] (b1) {};&\node[fill=black] (b2) {};&\node[fill=black] (b3) {};\\
\node (c1) {};&\node (c2) {};&\node (c3) {};\\
\node (d1) {};&\node (d2) {};&\node (d3) {};\\
\node (e1) {};&\node (e2) {};&\node (e3) {};\\
    };

\node[draw=none,fill=none,right] at ($(b3)+(0.1,0)$) {$x$};

\draw  (a) -- (b1) -- (c1) -- (d1) -- (e1);
\draw  (a) -- (b2) -- (c2) -- (d2) -- (e2);
\draw  (a) -- (b3) -- (c3) -- (d3) -- (e3);

\node[draw=none,fill=none,left] at ($(e1)-(0.3,0)$) {$T$};

\end{tikzpicture}\qquad\qquad
\begin{tikzpicture}

  \matrix[column sep=1cm,row sep=.5cm,every node/.style={circle,draw,inner sep=2pt,fill=none,draw}] {
    &&&\node (a) {};&&&\\
&&\node[fill=black] (b1) {};&&\node[fill=black] (b2) {};&&\\
&\node (c1) {};&&&&\node (c2) {};&\\
\node (d1) {};&&\node (d2) {};&&\node (d3) {};&&\node (d4) {};\\
\node (e1) {};&&\node (e2) {};&&\node (e3) {};&&\node (e4) {};\\
    };

\node[draw=none,fill=none,left] at ($(b1)-(0.1,0)$) {$y$};

\draw  (a) -- (b1) -- (c1) -- (d1) -- (e1);
\draw  (a) -- (b2) -- (c2) -- (d4) -- (e4);
\draw  (c1) -- (d2) -- (e2);
\draw  (c2) -- (d3) -- (e3);

\node[draw=none,fill=none,right] at ($(e4)+(0.3,0)$) {$S$};

\end{tikzpicture}
\end{center}

\medskip

\noindent
This example is particularly interesting because it is unrelated to any
construction described in Section \ref{s:trees}.
Its distinguishing feature is that the walk-equivalent vertices $x$ and $y$
are not closed-walk-equivalent.

Note that \cite[Example (ii)]{KnopMSRT83} shows a ``sporadic'' example of non-isomorphic
(even not cospectral) trees on 12 vertices containing closed-walk-equivalent vertices.
\end{example}

\begin{example}\label{ex:Schwenk-and-anti} (The smallest examples showing that closed-walk-equivalent vertices in a graph
  do not need to be walk-equivalent and vice versa).
  The vertices $x$ and $y$ in the Schwenk graph~\cite{Schwenk73}

\bigskip

\begin{center}
\begin{tikzpicture}[every node/.style={circle,draw,inner sep=2pt,fill=none}]
\path (0,0) node (a1) {}
       (1,0) node[fill=black] (a2) {} edge (a1)
       (2,0) node (a3) {} edge (a2)
       (3,0) node (a4) {} edge (a3)
       (4,0) node[fill=black] (a5) {} edge (a4)
       (5,0) node (a6) {} edge (a5)
       (6,0) node (a7) {} edge (a6)
       (7,0) node (a8) {} edge (a7)
       (2,1) node (a9) {} edge (a3);
       \node[draw=none,fill=none,below] at ($(a2)-(0,0.1)$) {$x$};
       \node[draw=none,fill=none,below] at ($(a5)-(0,0.1)$) {$y$};
\end{tikzpicture}
\end{center}

\medskip

\noindent
are closed-walk-equivalent but not walk-equivalent. On the other hand,
the vertices $x$ and $y$ in the graph

\bigskip

\begin{center}
\begin{tikzpicture}

  \matrix[column sep=1cm,row sep=.5cm,every node/.style={circle,draw,inner sep=2pt,fill=none,draw}] {
    &&&\node (a) {};&&&\\
    &\node[fill=black] (b1) {};&&&&\node[fill=black] (b2) {};&\\
    &\node (c1) {};&&&&\node (c2) {};&\\
    &\node (d1) {};&&&&\node (d2) {};&\\
    &\node (e1) {};&&&&\node (e2) {};&\\
\node[fill=black] (f1) {};&&\node[fill=black] (f2) {};&&\node[fill=black] (f3) {};&&\node[fill=black] (f4) {};\\
\node (g1) {};&&\node (g2) {};&&\node (g3) {};&&\node (g4) {};\\
    };

\node[draw=none,fill=none,right] at ($(b2)+(0.1,0)$) {$x$};
\node[draw=none,fill=none,right] at ($(f4)+(0.1,0)$) {$y$};

\draw  (a) -- (b1) -- (c1) -- (d1) -- (e1) -- (f1) -- (g1);
\draw  (a) -- (b2) -- (c2) -- (d2) -- (e2) -- (f4) -- (g4);
\draw  (e1) -- (f2) -- (g2);
\draw  (e2) -- (f3) -- (g3);

\end{tikzpicture}
\end{center}
\end{example}

\medskip

\noindent
are walk-equivalent but not closed-walk-equivalent.

\begin{example}[Trees of different size containing strongly walk-equivalent vertices $x$ and $y$]\label{ex:dist-size-strong}
We also note that the pair of vertex invariants $(W_G(v),R_G(v))$, in general, does not allow us
to determine the number of vertices of $G$, even in the case of trees.
This is demonstrated by two trees $T$ and $S$ with 11 and 10 vertices respectively
containing vertices $x\in V(T)$ and $y\in V(S)$ such that $W_T(x)=W_S(y)$ and $R_T(x)=R_S(y)$:

\bigskip

\begin{tikzpicture}

  \matrix[column sep=.75cm,row sep=.66cm,every node/.style={circle,draw,inner sep=2pt,fill=none,draw}] {
&\node[fill=black] (a1) {};&\node (a2) {};&\node (a3) {};&\node (a4) {};&\node (a5) {};&\node (a6) {};\\
&\node (b) {};&&&&&\\
\node (c1) {};&&\node (c2) {};&&&&\\
\node (d1) {};&\node[draw=none] (T) {$T$};&\node (d2) {};&&&&\\
    };

\node[draw=none,fill=none,left] at ($(a1)-(0.1,0)$) {$x$};

\draw  (d1) -- (c1) -- (b) -- (c2) -- (d2)  (b) -- (a1) -- (a2) -- (a3) -- (a4) -- (a5) -- (a6);

\end{tikzpicture}\qquad\qquad
\begin{tikzpicture}

    \matrix[column sep=1cm,row sep=.5cm,every node/.style={circle,draw,inner sep=2pt,fill=none,draw}] {
&&\node[fill=black] (a) {};&&\\
&\node (b1) {};&&\node (b2) {};&\\
\node (c0) {};&\node (c1) {};&&\node (c2) {};&\\
&\node (d1) {};&\node (d21) {};&&\node (d22) {};\\
&\node (e1) {};&&\node[draw=none] (S) {$S$};&\\
    };

\node[draw=none,fill=none,right] at ($(a)+(0.1,0)$) {$y$};

\draw (e1) -- (d1) -- (c1) -- (b1) -- (a) -- (b2) -- (c2) -- (d22) (b1) -- (c0) (c2) -- (d21);

\end{tikzpicture}

\bigskip

This is a single pair of this kind among trees with at most 11 vertices.
\end{example}

\begin{example}
There are smaller trees of different size containing closed-walk-equivalent vertices $x$ and $y$.
The trees $P_7$ and $Y_5$ are a unique pair of this kind among trees with at most 7 vertices:

\bigskip

\begin{tikzpicture}
  \matrix[column sep=.75cm,row sep=.5cm,every node/.style={circle,draw,inner sep=2pt,fill=none,draw}] {
\node (a1) {};&\node (a2) {};&\node (a3) {};&\node[fill=black] (a4) {};&\node (a5) {};&\node (a6) {};&\node (a7) {};\\
    };
\node[draw=none,fill=none,above] at ($(a4)+(0,0.1)$) {$x$};
\draw  (a1) -- (a2) -- (a3) -- (a4) -- (a5) -- (a6) -- (a7);
\end{tikzpicture}\qquad\qquad
\begin{tikzpicture}

  \matrix[column sep=1cm,row sep=.5cm,every node/.style={circle,draw,inner sep=2pt,fill=none,draw}] {
&&\node (b) {};&\\
\node (a1) {};&\node[fill=black] (a2) {};&\node (a3) {};&\node (a4) {};\\
    };

\node[draw=none,fill=none,above] at ($(a2)+(0,0.1)$) {$y$};

\draw (a1) -- (a2) -- (a3) -- (a4) (a3) -- (b);

\end{tikzpicture}

\bigskip

This example is also interesting in the following respect. As easily seen,
vertices $v\in V(G)$ and $u\in V(H)$ in two graphs $G$ and $H$ are (closed-)walk-equivalent
  if and only if they are (closed-)walk-equivalent also in the (single)
graph obtained from $G$ and $H$ by adding a new edge between $u$ and~$v$.
In particular, if we connect the vertices $x$ and $y$ in the picture above,
then we obtain a tree on 12 vertices with two non-similar closed-walk-equivalent vertices.
This is the tree $E_6$ shown in \cite{RandicK89}
along with other three trees on 12 vertices having this property.

\end{example}

\begin{example}
  \label{ex:dist-size}
Finally, we show two trees with 8 and 11 vertices containing walk-equivalent vertices $x$ and~$y$.

\bigskip

\begin{tikzpicture}

  \matrix[column sep=1cm,row sep=.5cm,every node/.style={circle,draw,inner sep=2pt,fill=none,draw}] {
&\node (c) {};&\\
&\node (a) {};&\\
\node[fill=black] (b1) {};&&\node[fill=black] (b2) {};\\
\node (c1) {};&&\node (c2) {};\\
\node (d1) {};&\node[draw=none] (T) {$T$};&\node (d2) {};\\
    };

\node[draw=none,fill=none,left] at ($(b1)-(0.1,0)$) {$x$};

\draw  (d1) -- (c1) -- (b1) -- (a) -- (b2) -- (c2) -- (d2) (a) -- (c);

\end{tikzpicture}\qquad\qquad
\begin{tikzpicture}

    \matrix[column sep=1cm,row sep=.5cm,every node/.style={circle,draw,inner sep=2pt,fill=none,draw}] {
&&&\node (a) {};&&&\\
&&\node (b1) {};&&\node (b2) {};&&\\
&\node[fill=black] (c1) {};&&&&\node[fill=black] (c2) {};&\\
&\node (d1) {};&&&&\node (d2) {};&\\
\node (e11) {};&&\node (e12) {}; &\node[draw=none] (S) {$S$};&  \node (e21) {};&&\node (e22) {};\\
    };

\node[draw=none,fill=none,right] at ($(c2)+(0.1,0)$) {$y$};

\draw (e11) -- (d1) -- (c1) -- (b1) -- (a) -- (b2) -- (c2) -- (d2) -- (e22) (d1) -- (e12) (d2) -- (e21);

\end{tikzpicture}

\bigskip

This is the smallest pair of this kind among trees with at most 11 vertices.
There are three more such pairs where the larger tree has 11 vertices.
One of these pairs is presented in Example \ref{ex:dist-size-strong},
and the smaller tree in the other two such pairs has 9 vertices.

\medskip

In conclusion, we make an algebraic-linear analysis of the last example as an illustration of
the framework presented in Section \ref{ss:wm}. In the notation of Section \ref{ss:wm}, let us fix $S=V(G)$.
The number $r$ and the sequence $a_1,\ldots,a_r$ defined by Eq.~\refeq{AAA} have a spectral interpretation,
which we briefly explain for the completeness of exposition.
An eigenvalue $\mu$ of $A$ is called \emph{main} if the eigenspace of $\mu$
is not orthogonal to $\jj$, where $\jj$ denotes the all-ones vector. The \emph{main polynomial} of $G$ is defined by
$$
M_G(z)=\prod_i(z-\mu_i),
$$
where the product is taken over all distinct main eigenvalues of $G$.
Lemma 3.9.8 in \cite{CvetkovicRS10} says that for any polynomial $p\in\bR[z]$,
the equality $p(A)\,\jj = 0$ is true if and only if $p$ is divisible by $M_G$.
This immediately implies that the main polynomial coincides with the polynomial
defined by Eq.~\refeq{main-poly}.
It is known \cite{Rowlinson07} that $M_G\in\bZ[z]$, i.e.,
all coefficients $a_1,\ldots,a_r$ are integers.

Now, looking at the smallest linear relation
between the columns of the
walk matrices of $T$ and $S$, we see that
$$
M_T(z)=z^4-z^3-4z^2+4z=(z+2)(z-2)(z-1)z
$$
and
$$
M_S(z)=z^5-6z^3+8z=(z^2-2)(z+2)(z-2)z.
$$
Thus, the main eigenvalues are $2,1,0,-2$ for $T$ and $2,\sqrt2,0,-\sqrt2,-2$ for~$S$.

Let  $\chi_x$ and $\chi_y$ denote the characteristic polynomials of the sequences
$W_T(x)$ and $W_S(y)$ respectively. Of course, $\chi_x=\chi_y$ as $W_T(x)=W_S(y)$.
As argued in the proof of Lemma \ref{lem:chiP},
$\chi_x=\chi_y$ is a common divisor of $M_T$ and $X_S$, i.e., of the polynomial $(z+2)(z-2)z$.
The sequence of walk numbers for $x$ (and $y$) is
\begin{equation}
  \label{eq:walk-seq}
1,\    2,\     5,\     8,\    20,\    32,\    80,\   128,\   320,\   512,\  1280,\  2048,\ \ldots
\end{equation}
We use the general fact that the order of a linear recurrent sequence is equal to the rank of
its Hankel matrix \cite{Gantmacher}. Since the Hankel matrix of the sequence \refeq{walk-seq}
has rank 3, we conclude that $\chi_x(z)=\chi_y(z)=(z+2)(z-2)z=z^3-4z$.
This is the characteristic polynomial of the linear recurrence $w_k=4w_{k-2}$. Consequently,
the walk number sequence \refeq{walk-seq} splits, after removal of the first element,
into two geometric progressions (one in the odd positions and the other in the even positions).
\end{example}

\end{document}